\newcommand{\R}{\mathbb R}
\newcommand{\Z}{\mathbb Z}
\newcommand{\C}{\mathbb C}
\newcommand{\ep}{\varepsilon}
\newcommand{\g}{\gamma}
\renewcommand{\l}{\lambda}
\newcommand{\N}{\mathbb{N}}
\newcommand{\T}{\mathbb{T}}
\newtheorem{thm}{Theorem}[section]
\newtheorem{lem}[thm]{Lemma}
\theoremstyle{remark}
\newtheorem{rem}{\bf Remark}[section]
\theoremstyle{definition}
\newtheorem{defn}[thm]{Definition}
\numberwithin{equation}{section}
\begin{document}

\title[]{ Diagonalization  in  a  quantum kicked rotor model  with  non-analytic potential}
\author[Y. Shi]{Yunfeng Shi}
\address[YS] {School of Mathematics,
Sichuan University,
Chengdu 610064,
China}
\email{yunfengshi@scu.edu.cn}
\author[L.Wen]{Li Wen}
\address[LW]{School of Mathematics,
Sichuan University,
Chengdu 610064,
China}
\email{liwen.carol98@gmail.com}

\date{\today}

\keywords{Nash-Moser iteration, Quasi-periodic operators, Localization, Power-law hopping, Lipschitz continuity of the IDS}

\begin{abstract}
In this paper we study the lattice quasi-periodic operators with power-law long-range hopping and meromorphic monotone potentials,  and diagonalize the operators via a Nash-Moser iteration scheme.  As applications, we  obtain uniform power-law localization,  uniform  dynamical localization and Lipschitz continuity of the integrated density of states (IDS) for such operators.  Our main motivation comes from investigating quantum suppression of chaos in  a  quantum kicked rotor model with  non-analytical potential.
\end{abstract}

\maketitle

\maketitle
\section{Introduction}
Quantum chaos  aims to  investigate the quantum mechanics of classically chaotic systems. One of  the basic models in quantum chaos is the so called quantum kicked rotor, which was first introduced by \cite{Cas79} as a quantum analog of the standard mapping.   This model is defined  by
\begin{align}\label{qkrm}
\sqrt{-1}\frac{\partial \Psi(\theta, t)}{\partial t}= \left(\mathcal{H}_0+\check{\phi}(\theta)\sum_{n\in\Z}\delta(t-2n\omega)\right)\Psi(\theta, t),
\end{align}
where
\begin{align*}
\ (\theta, t)\in\R\times \R, \ \omega>0, \ 
\mathcal{H}_0=\frac{\partial^2 }{\partial \theta^2}\ {\rm or} \ \sqrt{-1}\frac{\partial}{\partial \theta},
\end{align*}
and $\check{\phi}:\ \T=\R/\Z\to\R$ is a  potential\footnote{The original model in   \cite{Cas79} corresponds to \eqref{qkrm} with $\mathcal{H}_0=\frac{\partial^2 }{\partial \theta^2}$ and $\check{\phi}(\theta)=\cos(2\pi \theta)$.}. 
Typically, the motion in  the  quantum  kicked rotor model   is  almost-periodic while the classical one  is   chaotic. This quantum suppression of chaos phenomenon   was  first  discovered in \cite{Cas79}, and well understood after the 
 remarkable work of Fishman, Grempel and Prange (cf. \cite{FGP82, GFP82}),  in which they mapped the quantum kicked rotor model onto  lattice ergodic operators $H_x$  given by 
 \begin{align*}
(H_xu)(n)=\sum_{m\neq n}\phi({n-m})u(m)+d_n(x)u(n),\  x\in\R,\ n\in\Z, 
\end{align*}
where
\begin{align*}
\phi(n)&=-\int_{0}^1\tan\pi\left(\frac{\check{\phi}(\theta)}{2}\right)e^{-2\pi\sqrt{-1} n\theta}{d}\theta,\\
d_n(x)&=\tan\pi\left(x-n^2\omega\right)\ {\rm or}\ \tan\pi\left(x-n\omega\right)\ \mbox{depending on} \ \mathcal{H}_0.
\end{align*}
The parameter $x\in\R$ represents the quasi-energy of the Floquet operator associated to \eqref{qkrm} (cf. \cite{FGP82}).   It turns out   that the almost-periodicity of the  motion in \eqref{qkrm} follows from  the  Anderson localization (i.e., pure point spectrum with exponentially decaying eigenfunctions)  of  $H_x$  for a.e. $x\in\R$ (cf. \cite{FGP82,SPK22}). 

For the  analytic potential $\check{\phi}$, the operator $H_x$ admits an exponential  hopping (i.e.,  $|\phi(n)|\leq e^{-c|n|}$ for some  $c>0$). 
However,  the non-analytic (even singular) potential $\check{\phi}$ which could yield a power-law  decay  hopping  (i.e.,  $|\phi(n)|\leq {|n|^{-C}}$ for some  $C>0$) also appears naturally in some important physical models, e.g., the quantum Fermi accelerator (cf. \cite{JC86}). The  work \cite{GW05} has provided numerical evidence  for  the    localization transition  in quantum chaos with  the  singular potential  $\check{\phi}(\theta)=(|\theta|^\alpha\mod 1$), which induces a  hopping of  $|\phi(n)|\sim |n|^{-1-\alpha}$. Note  that the analytic derivation of the localization  in \cite{GW05} relies on physical perspective of localization  for random operators with power-law hopping (cf. \cite{Rod03}).  However, the  on-site energy sequence $\{d_n(x)\}_{n\in\Z}$ is only pseudo-random and the localization properties of $H_x$ should depend  on arithmetic properties of $\omega$. Thus, a mathematically rigorous  treatment of the localization for quasi-periodic operators with (realistic) power-law  hopping becomes important. We would also like to mention that certain quantum kicked rotor models  in higher dimensions (cf. \cite{DF88}) or with a quasi-periodic potential (cf. \cite{CGS89})  will give rise to higher dimensional lattice quasi-periodic operators with tangent potential.   

The main purpose  of  this paper is to diagonalize  certain lattice quasi-periodic operators with power-law decay hopping and meromorphic monotone potentials  via the  Nash-Moser  iteration scheme.  As applications, we  prove  uniform power-law localization,  uniform  dynamical localization and Lipschitz continuity of the IDS.   
More precisely, consider on $\Z^d$ the operators
 \begin{align}\label{model0}
H_z=\varepsilon T_\phi+f(z-\bm n\cdot \bm\omega)\delta_{\bm n\bm n'},\ \varepsilon\in\R, 
 \end{align}
 where the off-diagonal part (i.e., the hopping term) $T_\phi$  satisfies  
 \begin{align*}
(T_\phi u)({\bm n})=\sum_{\bm m\neq\bm n}\phi({\bm n-\bm m})u({\bm m}),\ \phi({\bm 0})=0,\ |\phi({\bm n})|\leq |\bm n|^{-s}
\end{align*}
 with some $s>0$ and $|\bm n|=\max\limits_{1\leq i\leq d}|n_i|$.   In the diagonal part we assume the potential $f$ is  a $1$-periodic meromorphic function   defined on ${D}_{R}=\{z\in\C:\ |\Im z|<R\}$  satisfying the monotonicity  condition (cf.  \eqref{gc} for details),  which includes $\tan (\pi z)$ and $e^{2\pi \sqrt{-1}z}$ as the special cases.  We call  $\bm\omega\in[0,1]^d$ the frequency  and $z$ the phase. Typically, we  assume $\bm \omega$ satisfies the Diophantine condition, namely, there exist $\tau>d$ and $\gamma>0$ so that 
 \begin{align*}
\|\bm n\cdot\bm \omega\|_\T=\inf_{l\in\Z}|l-\bm n\cdot\bm \omega|\geq \frac{\gamma}{|\bm n|^\tau}\ {\rm for}\ \forall\ \bm n\in\Z^d\setminus\{\bm 0\}.
 \end{align*}
 Obviously,   if $\varepsilon=0$, then $H_z$ exhibits the Anderson localization for all $z$  such that   $z-\bm n\cdot\bm \omega$ is not a pole of $f$ for all $\bm n\in\Z^d$.  So it is reasonable to expect the  localization of $H_z$  for $0<|\varepsilon| \ll1$.  A  potential approach is to diagonalize $H_z$ via some unitary transformation. However,  it is not totally trivial to carry out such a method  since  the small divisors problem, i.e., 
 \begin{align*}
\liminf_{|\bm n|\to \infty}|f(z-\bm n\cdot \bm\omega)-f(z)|=0. 
 \end{align*} 
 In 1983 Craig \cite{Cra83} first proved an inverse Anderson localization result for some lattice almost-periodic Schr\"odinger operators by using  a KAM type diagonalization method.   This   method of Craig  is to fix  a diagonal operator $d_{\bm n}\delta_{\bm n\bm n'}$ (with $d_{\bm n}=g(\bm n\cdot\bm \omega)$ for some 1-periodic function $g$) satisfying  
 \begin{align}\label{NR0}
 |d_{\bm m}-d_{\bm n}|>\frac{\gamma}{|\bm m-\bm n|^\tau}\ (\bm m\neq \bm n).
 \end{align}
  An additional condition is imposed on $g$ so that $\{d_{\bm n}\}_{\bm n\in\Z^d}$ is almost-periodic.   For sufficiently small $\varepsilon$,  Craig constructed a unitary transformation $Q$ and some 1-periodic function $g'$ so that 
 \begin{align*}
Q^{-1}(\varepsilon \Delta+g(\bm n\cdot\bm \omega)\delta_{\bm n\bm n'}+g'(\bm n\cdot\bm\omega)\delta_{\bm n\bm n'})Q=g(\bm n\cdot\bm \omega)\delta_{\bm n\bm n'},
 \end{align*}
 where $\Delta$ denotes the lattice Laplacian. The transformation $Q$ is  obtained   via  KAM iterations.   It is important that the modulation operator  $g'(\bm n\cdot\bm\omega)\delta_{\bm n\bm n'}$ is constructed so that  the small-divisors   $d_{\bm n}-d_{\bm m}$  do not change  in the iterations.  However,  it is much more difficult to deal with the direct problem, i.e., to diagonalize $\varepsilon \Delta+d_{\bm n}\delta_{\bm n\bm n'}$ via some unitary transformation.  In this case one has to deal with small divisors of the form 
 $\tilde d_{\bm n}-\tilde d_{\bm m}$ with $\tilde d_{\bm n}$ being the  modulation of $d_{\bm n}$.   Generally, it is hard to determine  whether  the modulated sequence  $\{\tilde d_{\bm n}\}_{\bm n\in\Z^d}$  still satisfies the  non-resonant condition \eqref{NR0} or not.     For this reason, Bellissard-Lima-Scoppola \cite{BLS83}  provided  a  class of  meromorphic monotone functions   which is stable under small analytic function  perturbations.  As a result, the non-resonant condition \eqref{NR0} can be essentially preserved  under  modulations.   Then by using the  KAM iteration method, \cite{BLS83} proved the Anderson localization for \eqref{model0} with $\phi$ being exponentially decaying. Later, P\"oschel \cite{Pos83} developed a general approach to study  both inverse and direct Anderson localization in the setting of translation invariant Banach algebra. He also presented new examples of limit-periodic Schr\"odinger operators that exhibit Anderson localization. We want to remark that the methods of \cite{Cra83,BLS83,Pos83} can only handle operators with exponential decay  hopping. Very recently, Shi \cite{Shi22} developed a Nash-Moser iteration type diagonalization approach to deal with almost-periodic  operators with power-law decay hopping.   The key new ingredients of the proof  in \cite{Shi22} consist of  introducing the smoothing operation in the iteration together with establishing  the tame estimates for some operator norms induced by the power-law hopping.  Consequently, Shi \cite{Shi22} generalized the results of P\"oschel \cite{Pos83} to the power-law hopping case.  Finally, we  also refer to \cite{ FP84, Sim85,JL17, JK19, Kac19, KPS22}  for important  progress on the study of  lattice quasi-periodic  operators with monotone type  potentials.

 We would like to remark  that while the method  of  Shi \cite{Shi22}  works  for more general translation invariant Banach algebras,  if one applies  it to the operators  \eqref{model0},  the power-law localization of $H_z$ only  holds   if $|\varepsilon| \leq \varepsilon_0$ for some postive $\varepsilon_0=\varepsilon_0(z)$ depending on $z$.    This single-phase type localization result is obviously not satisfactory and one  expects uniform (in $z$) localization results  as that obtained  in \cite{BLS83}. In addition,  if we want to investigate  the dynamical localization and the regularity of the IDS,  a uniform in $z$ estimate of $\varepsilon_0$ is required. In this paper we combine the Nash-Moser iteration diagonalization method of Shi \cite{Shi22} with the meromorphic monotone  function estimates  of Bellissard-Lima-Scoppola \cite{BLS83} to establish uniform  power-law localization,  uniform  dynamical localization and Lipschitz continuity of the IDS for \eqref{model0}.  One of the new ingredients of our proof  is to introduce a  norm which  takes account of both $z\in {D}_R$ and $\bm n\in\Z^d$ directions.  This  new  norm is actually  the  $\ell^1$-type  one  rather than the  $\ell^2$-one as that in \cite{Shi22}, which will  lead   to  a significant simplification  of the proof.

\subsection{Main results}
We present our results in the language of holomorphic kernels algebra introduced by Bellissard-Lima-Scoppola \cite{BLS83}.

We first introduce the function class of potentials. For $R>0$, $\mathscr{H}_{R}$ denotes the set of period $1$ holomorphic bounded functions on ${D}_{R}=\{z\in\mathbb{C}:\ \left|\Im z\right|<R\}$,
equipped with the norm $ \|f  \|_{R}=\underset{z\in{D}_{R}}{\sup}|f(z)|$.
Then $\mathscr{P}_{R}$ denotes the set of period $1$ meromorphic functions $g$ on ${D}_{R}$ such that, there is a constant $c>0$ satisfying
\begin{align}\label{gc}
\left|g(z)-g(z-a)\right|\ge c\|a\|_{\T}\  {\rm for}\ \forall \  a\in\R\  {\rm and}\ \forall\  z\in{D}_{R}
\end{align}
with  $\|a \|_{\T}=\underset{k\in\Z}{\inf}|a-k|.$
Then $|g|_{R}$ is defined as the biggest  possible value of $c$ in \eqref{gc}. Let $\mathscr{P}=\underset{R>0}{\bigcup}{\mathscr{P}}_{R}$. We remark that  both $g_{1}(z)=e^{2\pi \sqrt{-1}z}$ (cf. \cite{Sar82}) and $g_{2}(z)=\tan (\pi z)$ are in $\mathscr{P}$ (cf. \cite{BLS83} for details).

 We then consider the holomorphic kernels  constituting the hopping operators.   Let
\begin{align*}
\bm\omega=(\omega_{1},\cdots,\omega_{d})\in\mathbb{R}^{d},\ \bm n=(n_{1},\cdots,n_{d})\in\Z^{d}.
\end{align*}
Denote $\bm n\cdot\bm\omega=\sum\limits_{i=1}^{d}n_{i}\omega_{i}$  and $|\bm n|=\underset{1\le i \le d}{\max}|n_{i}|$. For $R>0$ and $s\ge0$, denote by  $\mathscr{U}_{R,s}$  the set of kernels $\mathcal{M}=(\mathcal{M}(z,\bm n))_{\bm n\in\Z^{d},z\in{D}_{R}}$ with  $\mathcal{M}(z, \bm n)\in\mathscr{H}_{R}$ for each $\bm n\in\Z^{d}$,  and
\begin{align}\label{norm}
\|\mathcal{M} \|_{R,s}=\underset{z\in{D}_{R}}{\sup}\sum_{\bm{n}\in\Z^{d}}|\mathcal{M}(z,\bm{n})|\langle\bm n\rangle^{s}<\infty,\ \langle\bm n\rangle=\max\{1,|\bm n|\}.
\end{align}
 Then $\mathscr{U}_{R,s}$ is a Banach space. Giving $\bm\omega\in\mathbb{R}^{d}$, define  an algebraic structure with respect to (w.r.t) $\bm\omega$ by\footnote{This definition is also valid for elements that are not in $\mathscr{U}_{R,s}$, e.g.,     $\mathcal{M}=(\mathcal{M}(z,\bm n))_{\bm n\in\Z^{d},z\in{D}_{R}}$ with  $\mathcal{M}(z, \bm n)\in\mathscr{P}_{R}$ for each $\bm n\in\Z^{d}$. }
\begin{align}\label{prodm}
(\mathcal{M}_{1}\mathcal{M}_{2})(z,\bm n)=\sum_{\bm l\in\Z^{d}}\mathcal{M}_{1}(z,\bm l)\mathcal{M}_{2}(z-\bm l\cdot\bm\omega, \bm n-\bm l).
\end{align}
An involution w.r.t $\bm\omega$ is given by
\begin{align}\label{*}
(\mathcal{M}^{*})(z,\bm n)=\overline{\mathcal{M}(\bar{z}-\bm n \cdot\bm\omega,-\bm n)}.
\end{align}
By \eqref{norm} and \eqref{*}, we have $(\mathcal{M}_{1}\mathcal{M}_{2})^{*}=\mathcal{M}_{2}^{*}\mathcal{M}_{1}^{*}$ and $ \|\mathcal{M} \|_{R,s}=\|\mathcal{M}^{*} \|_{R,s}$. If $f$ is defined in some domain of $\mathbb{C}$, we also write $f^{*}(z)=\bar{f}(\bar{z})$. Then $\mathscr{U}_{R,s}$ with above two structures w.r.t  $\bm\omega$ denotes $\mathscr{U}_{R,s}^{\bm\omega}$.  
\begin{rem}
We remark that
\begin{itemize}
\item[(i).] If $f: D_R\to \C$, then $f$ can be considered as a diagonal kernel by defining
${f}(z,\bm n)\equiv f(z)\delta_{\bm n\bm 0}$.  So if $V\in \mathscr{P}$, we can still  define the product $V \mathcal{M}$  via \eqref{prodm} for  $\mathcal{M}\in \mathscr{U}_{R,s}^{\bm\omega}$.  We say $V\in\mathscr{P}$ is self-adjoint if $V^{*}=V.$
\item[(ii).] If $\bm e\in\Z^{d}$, $\mathcal{U}_{\bm e}$ is the kernel $\mathcal{U}_{\bm e}(z,\bm n)=\delta_{\bm n\bm e},$
and the Laplace kernel is then given by $\bm\delta=\sum\limits_{|\bm e|_1=1}\mathcal{U}_{\bm e},$ where $|\bm e|_1=\sum\limits_{i=1}^d|e_i|$.
\item[(iii).] The unit in  $\mathscr{U}_{R,s}^{\bm\omega}$ is denoted by $\bm 1=\mathcal{U}_{\bm 0}$. Then $\mathcal{M}\in\mathscr{U}_{R,s}^{\bm\omega}$ is called unitary (resp. self-adjoint) if $\mathcal{M}\mathcal{M}^*=\mathcal{M}^*\mathcal{M}=\bm1$ (resp. $\mathcal{M}=\mathcal{M}^*$).

\end{itemize}
\end{rem}

Finally, we define the Diophantine condition.  Fix $\gamma>0$ and $\tau>d$. We say $\bm\omega\in[0,1]^d$ satisfies  the $(\tau, \gamma)$-Diophantine condition,  if 
	\begin{align*}
		\|\bm n\cdot\bm\omega  \|_{\T} \ge \frac{\g}{\langle\bm n\rangle^{\tau}}\ {\rm for}\ \forall\ \bm n\in\Z^d\setminus\{\bm 0\}.
	\end{align*}
Denote by ${\rm DC}_{\tau,\gamma}$  the set of all  $(\tau,\gamma)$-{Diophantine}  frequencies. 

Throughout this paper  we always assume $\bm\omega\in{\rm DC}_{\tau,\g}.$  {This assumption is reasonable  since the long-range hopping in our model has  a power-law  off-diagonal decay.  More precisely, since we use a KAM type perturbation method,  imposing  Diophantine condition on the frequency  allows  a controllable loss of derivatives (cf. Lemma \ref{hollem} in this paper) when solving the homological equations.  Without this assumption, we can not have effective estimates on the solutions of homological equations, and the iteration scheme will become  invalid.  In contrast, in the work of \cite{Cra83,Pos83} the Diophantine frequency can be relaxed to the Bruno-R\"ussmann one \footnote{For example, this  condition  includes  the case that the power-law bound in the Diophantine condition is replaced by a sub-exponential one. } since  the hopping operators in  \cite{Cra83, Pos83}  are of exponential off-diagonal decay.   However, if the  power-law lower bound in the Diophantine condition is replaced by an exponential one (i.e., the  Liouville frequency case),  to the best of our knowledge,  there is  simply no localization results  for quasi-periodic operators on $\Z^d$.  }


 We are able to state our main results.
\subsubsection{Diagonalization}
We first introduce  the diagonalization theorem. 
\begin{thm}\label{mthm}
 Let $\bm\omega\in {\rm DC}_{\tau,\gamma}$. Fix $\delta>0$, $\alpha_{0}>0$ and $R>0$. Let 
\begin{align*}
	V\in \mathscr{P}_{R},\ \mathcal{M}\in\mathscr{U}_{R,\alpha+3\delta}^{\bm\omega}.
\end{align*}
Then for
\begin{align*}
	\alpha>\alpha_{0}+\tau+4\delta, 
\end{align*}
there is some $\ep_{0}=\ep_{0}(R,\alpha,\alpha_{0},\g,\tau,|V|_{R},\delta)>0$ such that the following holds true. If $\|\mathcal{M}\|_{R,\alpha+3\delta}<\ep_{0}$, then there exist an invertible element $\mathcal{U}\in\mathscr{U}_{\frac{R}{2}, \alpha-\tau-4\delta}^{\bm\omega}$ and some $\hat{V}\in\mathscr{P}_{\frac{R}{2}}$ so that
\begin{align}
\label{uvu}&\mathcal{U}(V+\mathcal{M})\mathcal{U}^{-1}=\hat{V},\\
\label{K1}& \|\mathcal{U}^{\pm1}-\bm 1 \|_{\frac{R}{2},\alpha-\tau-4\delta}\le K_{1}\|\mathcal{M}\|_{R,\alpha+3\delta}^{\frac{3\delta}{\alpha-\alpha_{0}}},\\
\label{vr2}&V-\hat{V}\in\mathscr{H}_{\frac{R}{2}}, \  |\hat{V}|_{\frac{R}{2}}\ge \frac{1}{2}|V|_{R},\\
\nonumber&\|V-\hat{V}  \|_{\frac{R}{2},0}\le K_{2}\|\mathcal{M}\|_{R,\alpha+3\delta},
\end{align}
where $K_1>0$ and $K_2>0$ only depend on $\alpha_{0},\alpha,\delta$.
Moreover,  if  both $\mathcal{M}$ and $V$ are self-adjoint, then $\mathcal{U}$ is unitary and $\hat{V}^{*}=\hat{V}$.
\end{thm}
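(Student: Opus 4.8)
\emph{Proof proposal.} The plan is to diagonalize $V+\mathcal{M}$ by a Newton--KAM scheme with smoothing, realizing the conjugating element as an infinite product of exponentials $\mathcal{U}=\cdots e^{\mathcal{W}_2}e^{\mathcal{W}_1}$. Since $V$ is a diagonal kernel, \eqref{prodm} gives $(\mathcal{W}V)(z,\bm n)=\mathcal{W}(z,\bm n)V(z-\bm n\cdot\bm\omega)$ and $(V\mathcal{W})(z,\bm n)=V(z)\mathcal{W}(z,\bm n)$, so linearizing $e^{\mathcal{W}}(V+\mathcal{M})e^{-\mathcal{W}}=V+\mathcal{M}+[\mathcal{W},V]+O(\mathcal{W}^2)$ leads to the homological equation $[\mathcal{W},V]=-\mathcal{M}^{\mathrm{off}}$, i.e.
\begin{align*}
\mathcal{W}(z,\bm n)=\frac{\mathcal{M}(z,\bm n)}{V(z)-V(z-\bm n\cdot\bm\omega)}\quad(\bm n\neq\bm 0),\qquad \mathcal{W}(z,\bm 0)=0,
\end{align*}
where $\mathcal{M}^{\mathrm{off}}=\mathcal{M}-\mathcal{M}(\cdot,\bm 0)\delta_{\bm n\bm 0}$. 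The diagonal part $\mathcal{M}(\cdot,\bm 0)\in\mathscr{H}_R$ (together with the diagonal of the higher-order remainder) is absorbed into the potential, so after one step the diagonal part is $V_+=V+\mathcal{M}(\cdot,\bm 0)+\cdots$ and the off-diagonal error is quadratically small.

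Three estimates drive one step. First, solvability of the homological equation: the monotonicity \eqref{gc} for $V\in\mathscr{P}_R$ together with $\bm\omega\in{\rm DC}_{\tau,\gamma}$ gives $|V(z)-V(z-\bm n\cdot\bm\omega)|\ge |V|_R\,\|\bm n\cdot\bm\omega\|_\T\ge |V|_R\gamma\langle\bm n\rangle^{-\tau}$ on all of $D_R$; moreover, by the Bellissard--Lima--Scoppola estimates for the class $\mathscr{P}_R$ the reciprocal of this monotone difference is holomorphic and bounded on $D_R$ (it vanishes at the poles of $V$), hence $\mathcal{W}(\cdot,\bm n)\in\mathscr{H}_R$ and the tame bound $\|\mathcal{W}\|_{R,s-\tau}\le (\gamma|V|_R)^{-1}\|\mathcal{M}\|_{R,s}$ holds, with the characteristic loss of $\tau$ weights. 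Second, the $\ell^1$-weighted norm \eqref{norm} satisfies $\|\mathcal{M}_1\mathcal{M}_2\|_{R,s}\le 2^{s}\|\mathcal{M}_1\|_{R,s}\|\mathcal{M}_2\|_{R,s}$, so $\mathscr{U}_{R,s}^{\bm\omega}$ is a Banach algebra, the exponential series and the Neumann series for $(e^{\mathcal{W}})^{-1}=e^{-\mathcal{W}}$ converge, and (since $[\mathcal{W},V]=-\mathcal{M}^{\mathrm{off}}$ makes every iterated bracket $\operatorname{ad}_{\mathcal{W}}^{k}V$ bounded) the new error $\mathcal{M}_+=e^{\mathcal{W}}(V+\mathcal{M})e^{-\mathcal{W}}-V_+$ obeys $\|\mathcal{M}_+\|\lesssim\|\mathcal{W}\|(\|\mathcal{M}\|+\|\mathcal{W}\|)$ at the working weight. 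Third, stability of the potential class: if $h\in\mathscr{H}_R$ with $\|h\|_R$ small then $V+h\in\mathscr{P}_{R'}$ for $R'<R$, with $|V+h|_{R'}\ge|V|_R-C\|h\|_R/(R-R')$; this is the Bellissard--Lima--Scoppola stability lemma, whose point is the Cauchy estimate $|h(z)-h(z-a)|\le\|a\|_\T\,\|h\|_R/(R-R')$. Applied with $h$ the diagonal correction, it keeps $V_+\in\mathscr{P}_{R'}$ and quantifies the loss of monotonicity.

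Next I would iterate with a smoothing operation to defeat the $\tau$-loss. Fix $R_j\downarrow R/2$ (say $R_j=\tfrac R2(1+2^{-j})$), weights $s_j$ interpolating between the fixed high level $\alpha+3\delta$ (at which every $\mathcal{M}_j$ stays bounded) and a low level $\ge\alpha_0$, truncation scales $N_j\uparrow\infty$, and target errors $\ep_j\downarrow 0$ super-exponentially. At step $j$ one splits $\mathcal{M}_j=\mathcal{M}_j^{\le N_j}+\mathcal{M}_j^{>N_j}$, solves the homological equation only for $\mathcal{M}_j^{\le N_j,\mathrm{off}}$ (replacing the $\tau$-loss by the harmless factor $N_j^{\tau}$), absorbs the diagonal part into $V_{j+1}$, and estimates the new error by the quadratic contribution plus the discarded tail $\|\mathcal{M}_j^{>N_j}\|_{R_j,s_{j+1}}\le N_j^{-(s_j-s_{j+1})}\|\mathcal{M}_j\|_{R_j,s_j}$, using the $\ell^1$-interpolation $\|\mathcal{M}\|_{R,(1-\theta)a+\theta b}\le\|\mathcal{M}\|_{R,a}^{1-\theta}\|\mathcal{M}\|_{R,b}^{\theta}$. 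Choosing $N_j$ to balance these gives $\ep_{j+1}\lesssim\ep_j^{1+c}$ for some $c>0$, which is possible exactly because the regularity window exceeds $\tau$ — the hypothesis $\alpha>\alpha_0+\tau+4\delta$ — and the bookkeeping is arranged so that the total weight loss does not exceed $\tau+7\delta$ (whence $\mathcal{U}\in\mathscr{U}_{R/2,\alpha-\tau-4\delta}^{\bm\omega}$) and $\sum_jC\|\mathcal{M}_j(\cdot,\bm 0)\|_{R_j}/(R_j-R_{j+1})\le\tfrac12|V|_R$, which forces the threshold $\ep_0=\ep_0(R,\alpha,\alpha_0,\gamma,\tau,|V|_R,\delta)$. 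Then $\mathcal{U}:=\lim_{J}e^{\mathcal{W}_J}\cdots e^{\mathcal{W}_1}$ converges in $\mathscr{U}_{R/2,\alpha-\tau-4\delta}^{\bm\omega}$, the sum $\sum_j\|\mathcal{W}_j\|$ being dominated by its first term (this yields \eqref{K1} with the stated fractional power after the interpolation/smoothing bookkeeping), while $V_j\to\hat V$ with $\hat V-V\in\mathscr{H}_{R/2}$, $|\hat V|_{R/2}\ge\tfrac12|V|_R$ (passing to the limit in \eqref{gc}) and $\|\hat V-V\|_{R/2,0}\le\sum_j\|\mathcal{M}_j(\cdot,\bm 0)\|_{R_j}\le K_2\|\mathcal{M}\|_{R,\alpha+3\delta}$; passing to the limit in $e^{\mathcal{W}_J}\cdots e^{\mathcal{W}_1}(V+\mathcal{M})(e^{\mathcal{W}_J}\cdots e^{\mathcal{W}_1})^{-1}=V_{J+1}+\mathcal{M}_{J+1}$ and using $\mathcal{M}_{J+1}\to0$ gives \eqref{uvu}. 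In the self-adjoint case, \eqref{*} shows that when $V^*=V$ and $\mathcal{M}^*=\mathcal{M}$ the homological solution satisfies $\mathcal{W}^*=-\mathcal{W}$, so each $e^{\mathcal{W}_j}$ is unitary, hence so is $\mathcal{U}$; the diagonal corrections are self-adjoint, so $\hat V^*=\hat V$.

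The main obstacle is the parameter bookkeeping in the smoothing step: one must simultaneously (i) recover the super-exponential decay $\ep_{j+1}\lesssim\ep_j^{1+c}$ despite the loss of $\tau$ derivatives, which needs the regularity surplus $\alpha-\alpha_0-\tau>4\delta$; (ii) keep the total loss of weight within $\tau+7\delta$ so that $\mathcal{U}$ lands in $\mathscr{U}_{R/2,\alpha-\tau-4\delta}^{\bm\omega}$; and (iii) keep the accumulated loss of monotonicity below $\tfrac12|V|_R$, which fixes the dependence of $\ep_0$ on $R$ and $|V|_R$. A second, more structural point is the verification that the solution of the homological equation genuinely lies in the holomorphic kernels algebra $\mathscr{U}_{R,s}^{\bm\omega}$ rather than merely in the meromorphic kernels: this rests entirely on the Bellissard--Lima--Scoppola estimates for the monotone class $\mathscr{P}_R$, and is precisely where the meromorphic-monotonicity hypothesis is used to convert the small divisors $V(z)-V(z-\bm n\cdot\bm\omega)$ into usable ones.
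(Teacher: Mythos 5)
Your proposal follows the same overall route as the paper: a Nash--Moser/Newton iteration with smoothing, the Bellissard--Lima--Scoppola stability lemma to keep the modulated diagonal in the meromorphic monotone class on slightly shrunk strips, the monotonicity plus Diophantine condition to bound $1/(V(z)-V(z-\bm n\cdot\bm\omega))$ by $\gamma^{-1}|V|_R^{-1}\langle\bm n\rangle^\tau$ and observe that the quotient is genuinely holomorphic, the $\ell^1$-weighted kernel norm, and the self-adjointness argument via $\mathcal{W}^*=-\mathcal{W}$. Your bookkeeping of the budget (regularity surplus $\alpha-\alpha_0-\tau$, total weight loss $\tau+O(\delta)$, accumulated monotonicity loss $\le\tfrac12|V|_R$) also matches the one the paper carries out.

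Two points where the paper's route is a bit sharper and worth noting. First, you conjugate against the pre-update diagonal ($[\mathcal{W},V]=-\mathcal{M}^{\mathrm{off}}$), whereas the paper solves the homological equation against the updated diagonal $\bar V'(z)=V'(z)+\mathcal{M}'(z,\bm 0)$; with the paper's choice the recursion $\operatorname{ad}_{\mathcal{W}'}^k(\bar V')=-\operatorname{ad}_{\mathcal{W}'}^{k-1}(S_\theta\tilde{\mathcal{M}'})$ is exact, making the remainder identity $e^{\mathcal{W}'}(V'+\mathcal{M}')e^{-\mathcal{W}'}=\bar V'+\mathcal{R}'$ a clean algebraic computation rather than a linearization plus error-chasing. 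With your choice an extra commutator $[\mathcal{W},\mathcal{M}^{\mathrm{diag}}]$ enters the error; it is quadratically small, so nothing breaks, but it is an additional term to track. Second, you invoke only the submultiplicative bound $\|\mathcal{M}_1\mathcal{M}_2\|_{R,s}\le 2^s\|\mathcal{M}_1\|_{R,s}\|\mathcal{M}_2\|_{R,s}$ (together with interpolation), whereas the paper establishes and uses the genuine tame estimate $\|\mathcal{M}_1\mathcal{M}_2\|_{R,s}\le K(s)(\|\mathcal{M}_1\|_{R,0}\|\mathcal{M}_2\|_{R,s}+\|\mathcal{M}_1\|_{R,s}\|\mathcal{M}_2\|_{R,0})$. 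The tame form is what makes bounds like $\|\mathcal{U}_{l}\mathcal{M}^{(l)}\mathcal{U}_l^{-1}\|_{R_l,s}$ painless, because the low norms $\|\mathcal{U}_l^{\pm1}\|_{R_l,0}$ stay uniformly bounded while the high norms grow with $l$. One can in principle substitute Banach-algebra submultiplicativity plus interpolation, but at high weight $\|\mathcal{U}_l\|_{R_l,s}$ is not uniformly bounded, so the bookkeeping is noticeably heavier; if you intend to carry your version through you should state and use the tame inequality, which the $\ell^1$ structure gives for free. With those two adjustments your sketch aligns with the paper's proof.
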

\begin{rem}
In this theorem it suffices to assume $\mathcal{M}\in \mathscr{U}_{R, \alpha}^{\bm\omega}$ for $\alpha>\tau$ since $\delta>0$ and $\alpha_0>0$ can be  arbitrarily small.  
\end{rem}

\subsubsection{Power-law Localization}
Now we can apply the above theorem to obtain the (uniform)  {power-law} localization.

A  representation of $\mathcal{M}\in\mathscr{U}_{R,s}^{\bm\omega}$ in $\ell^{2}(\Z^{d})$ is given by
\begin{align}\label{tmz}
	\left(T_{\mathcal{M}}(z)\psi\right)(\bm n)=\sum_{\bm l\in\Z^{d}}\mathcal{M}(z-\bm n\cdot\bm \omega,\bm l-\bm n)\psi(\bm l),
\end{align}
where $\psi\in\ell^{2}(\Z^{d})$ and $z\in{D}_{R}$.


Fix $V\in \mathscr{P}_{R}$. Define for $0\le R'\leq R$ the set 
\begin{align}\label{ZR}
\mathcal{Z}_{R'}=\bigcap_{\bm n\in\Z^d}\{z\in \C:\ |\Im z| \leq R'\  {\rm and}\  z-\bm n\cdot\bm\omega {\rm \ is\  not\  a \ pole\  of}\  V \}.
\end{align}
Since $V$ is meromorphic, the set $D_R\setminus\mathcal{Z}_{R}$ is at most countable. For $V\in\mathscr{P}_{R}$ and $z\in\mathcal{Z}_R$,  denote by  $T_{V}(z)=V(z-\bm n\cdot\bm \omega)\delta_{\bm n\bm n'}$ the multiplication operator. 

 Then we have 
 \begin{thm}\label{pl}
	Let $\bm\omega\in {\rm DC}_{\tau,\gamma}$. Fix $\delta>0$, $\alpha_{0}>0$ and $R>0$. Let $V\in \mathscr{P}_{R}$ and $\mathcal{M}\in\mathscr{U}_{R,s}^{\bm\omega}$ with
	\begin{align*}
		s>\alpha_{0}+\tau+\frac{d}{2}+7\delta.
	\end{align*}
Then there is some $\ep_{0}=\ep_{0}(R,\alpha_{0},\g,\tau,|V|_{R},\delta,s,d)>0$ such that the following holds true.  If $\|\mathcal{M} \|_{R,s}<\ep_{0}$, then the operator $H_{z}=T_{\mathcal{M}}(z)+T_{V}(z)$ has a complete set of eigenfunctions $\{\varphi_{\bm n}\}_{\bm n\in\Z^d}$ obeying $|\varphi_{\bm n}(\bm i)|\le2\langle\bm n-\bm i\rangle^{-s+\tau+7\delta}$ for all $\bm n\in\Z^d$, $\bm i\in\Z^d$ and $z\in\mathcal{Z}_{{R}/{2}}$.
In addition, if  both $\mathcal{M}$ and $V$ are self-adjoint,  then $H_{x}$ is self-adjoint and its spectrum is equal to $\mathbb{R}$ for $x\in\mathcal{Z}_0$.
\end{thm}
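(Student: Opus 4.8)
The strategy is to transport the purely algebraic diagonalization of Theorem~\ref{mthm} into $\ell^2(\Z^d)$ via the representation~\eqref{tmz}. Given $s>\alpha_{0}+\tau+\tfrac{d}{2}+7\delta$, I first set $\alpha=s-3\delta$, so that $\alpha+3\delta=s$, the constraint $\alpha>\alpha_{0}+\tau+4\delta$ of Theorem~\ref{mthm} becomes $s>\alpha_{0}+\tau+7\delta$ (which holds), and $\alpha-\tau-4\delta=s-\tau-7\delta>\alpha_{0}+\tfrac{d}{2}>\tfrac{d}{2}>0$. Take $\ep_{0}$ to be the threshold of Theorem~\ref{mthm} for these parameters, shrunk if necessary so that also $K_{1}\ep_{0}^{3\delta/(\alpha-\alpha_{0})}\le1$. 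Then $\|\mathcal{M}\|_{R,s}<\ep_{0}$ yields an invertible $\mathcal{U}\in\mathscr{U}_{R/2,\,s-\tau-7\delta}^{\bm\omega}$ and a diagonal $\hat{V}\in\mathscr{P}_{R/2}$ with $\mathcal{U}(V+\mathcal{M})\mathcal{U}^{-1}=\hat{V}$, $\|\mathcal{U}^{\pm1}-\bm 1\|_{R/2,\,s-\tau-7\delta}\le1$, and $V-\hat{V}\in\mathscr{H}_{R/2}$; in particular $V$ and $\hat{V}$ have the same poles in $D_{R/2}$, so $\mathcal{Z}_{R/2}$ is unchanged if defined using $\hat{V}$ in place of $V$.

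Next I would record the basic properties of the representation $\mathcal{N}\mapsto T_{\mathcal{N}}(z)$: it is an algebra homomorphism ($T_{\bm 1}(z)=I$, $T_{\mathcal{N}_{1}\mathcal{N}_{2}}(z)=T_{\mathcal{N}_{1}}(z)T_{\mathcal{N}_{2}}(z)$), it carries the involution~\eqref{*} to the operator adjoint when $z\in\R$, and a Schur test using that~\eqref{norm} is an $\ell^{1}$-type norm together with $\|\mathcal{N}\|_{R,0}=\|\mathcal{N}^{*}\|_{R,0}$ gives $\|T_{\mathcal{N}}(z)\|_{\ell^{2}\to\ell^{2}}\le\|\mathcal{N}\|_{R,0}$. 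Thus for $z\in\mathcal{Z}_{R/2}$ the operators $T_{\mathcal{U}}(z)$ and $T_{\mathcal{U}^{-1}}(z)=T_{\mathcal{U}}(z)^{-1}$ are bounded on $\ell^{2}(\Z^{d})$, and I set $\varphi_{\bm n}:=T_{\mathcal{U}}(z)^{-1}e_{\bm n}$, that is, $\varphi_{\bm n}(\bm i)=\mathcal{U}^{-1}(z-\bm i\cdot\bm\omega,\bm n-\bm i)$. Since every $\mathcal{N}\in\mathscr{U}_{R/2,\sigma}^{\bm\omega}$ obeys $|\mathcal{N}(w,\bm m)|\le\|\mathcal{N}\|_{R/2,\sigma}\langle\bm m\rangle^{-\sigma}$, applying this with $\mathcal{N}=\mathcal{U}^{-1}$ and $\sigma=s-\tau-7\delta$, and using $\|\mathcal{U}^{-1}\|_{R/2,\sigma}\le\|\bm 1\|_{R/2,\sigma}+\|\mathcal{U}^{-1}-\bm 1\|_{R/2,\sigma}\le2$, gives $|\varphi_{\bm n}(\bm i)|\le2\langle\bm n-\bm i\rangle^{-s+\tau+7\delta}$ for all $\bm n,\bm i\in\Z^{d}$ and all $z\in\mathcal{Z}_{R/2}$. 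This bound is uniform in $z$, because the norm in Theorem~\ref{mthm} is a supremum over $D_{R}$, and as $2(s-\tau-7\delta)>d$ it also exhibits $\varphi_{\bm n}\in\ell^{2}(\Z^{d})$.

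To see that $\varphi_{\bm n}$ is an eigenfunction, note that~\eqref{uvu} is equivalent in the algebra to $(V+\mathcal{M})\mathcal{U}^{-1}=\mathcal{U}^{-1}\hat{V}$. Reading off its $(z-\bm i\cdot\bm\omega,\,\bm n-\bm i)$-component and performing the shift $\bm j=\bm i+\bm l$ inside the convolution~\eqref{prodm}, the left-hand side collapses to $V(z-\bm i\cdot\bm\omega)\varphi_{\bm n}(\bm i)+\bigl(T_{\mathcal{M}}(z)\varphi_{\bm n}\bigr)(\bm i)=(H_{z}\varphi_{\bm n})(\bm i)$, and the right-hand side to $\hat{V}(z-\bm n\cdot\bm\omega)\varphi_{\bm n}(\bm i)$. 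All series involved converge absolutely, because $s>d$ (so $\sum_{\bm m}\langle\bm m\rangle^{-s}<\infty$) and $\mathcal{U}^{-1}\in\mathscr{U}_{R/2,\sigma}^{\bm\omega}$ with $\sigma>0$; the resulting identity then forces $\varphi_{\bm n}\in D(H_{z})$ and $H_{z}\varphi_{\bm n}=\hat{V}(z-\bm n\cdot\bm\omega)\varphi_{\bm n}$. Completeness of $\{\varphi_{\bm n}\}_{\bm n\in\Z^{d}}$ is then immediate, since it is the image of the orthonormal basis $\{e_{\bm n}\}$ under the bounded invertible operator $T_{\mathcal{U}}(z)^{-1}$, so its closed linear span equals $\ell^{2}(\Z^{d})$.

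For the self-adjoint case with $x\in\mathcal{Z}_{0}$ (so $x\in\R$), Theorem~\ref{mthm} gives $\mathcal{U}$ unitary and $\hat{V}^{*}=\hat{V}$; since $x$ is real, $T_{\mathcal{M}}(x)$ is bounded self-adjoint and $T_{V}(x)$ is multiplication by a real sequence, so $H_{x}=T_{V}(x)+T_{\mathcal{M}}(x)$ is self-adjoint, $T_{\mathcal{U}}(x)$ is unitary, and $\{\varphi_{\bm n}\}$ is an orthonormal basis of eigenfunctions; hence $H_{x}$ has pure point spectrum equal to $\overline{\{\hat{V}(x-\bm n\cdot\bm\omega):\bm n\in\Z^{d}\}}$. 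Because $\bm\omega\in{\rm DC}_{\tau,\g}$ forces $1,\omega_{1},\dots,\omega_{d}$ to be rationally independent, $\{\bm n\cdot\bm\omega \bmod 1\}_{\bm n\in\Z^{d}}$ is dense in $\T$; and the defining inequality~\eqref{gc} of $\mathscr{P}_{R/2}$, read on the real axis, makes $\hat{V}$ injective on $\T$ minus its poles, whence (being $1$-periodic meromorphic) it has a single pole per period, is strictly monotone between consecutive poles, and maps $\R$ minus its poles onto all of $\R$. Combining this surjectivity with the density above and the continuity of $\hat{V}$ away from its poles yields $\sigma(H_{x})=\R$. I expect the one genuinely delicate point in this scheme to be the rigorous handling of the unbounded diagonal part $T_{V}(z)$ (equivalently $T_{\hat V}(z)$) in the conjugation — turning the algebraic identity $(V+\mathcal{M})\mathcal{U}^{-1}=\mathcal{U}^{-1}\hat{V}$ into an honest operator identity on $\ell^{2}(\Z^{d})$ and keeping track of domains; everything else reduces to bookkeeping that rests on the fast ($>d$) power-law decay provided by the $\ell^{1}$-type norm~\eqref{norm}, while the surjectivity of $\hat{V}$ onto $\R$ is a separate elementary property of the class $\mathscr{P}$.
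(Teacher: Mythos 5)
Your proposal is correct and follows the same overall strategy as the paper's proof: set $\alpha=s-3\delta$ so that Theorem~\ref{mthm} applies, conjugate $H_z$ into the diagonal operator $T_{\hat V}(z)$ via the bounded invertible representative $T_{\mathcal{U}}(z)$, set $\varphi_{\bm n}=T_{\mathcal{U}}(z)^{-1}\delta_{\bm n}$, read off the decay from $\|\mathcal{U}^{-1}\|_{R/2,s-\tau-7\delta}\le2$, and use unitarity plus the monotone structure of $\hat V\in\mathscr{P}_{R/2}$ for the spectral assertion. The differences are at the level of technique rather than strategy, and several of your choices are actually cleaner than the paper's. For $\ell^2$-boundedness you invoke a Schur test directly from the $\ell^1$-type norm $\|\cdot\|_{R,0}$, whereas the paper proves the more general $\ell^2_q$-weighted estimate $\|T_{\mathcal{M}}(z)\|_q\le X(s,q)\|\mathcal{M}\|_{R,s}$ and specialises to $q=0$; the Schur bound suffices here and is shorter. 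For completeness the paper runs a smallness argument (showing $\psi\perp\varphi_{\bm n}$ for all $\bm n$ implies $\|\psi\|_0\le\tfrac12\|\psi\|_0$), while you simply observe that $\{\varphi_{\bm n}\}$ is the image of an orthonormal basis under a bounded invertible operator, which is the standard and shorter route. For the eigenfunction equation you extract the component-wise identity from $(V+\mathcal{M})\mathcal{U}^{-1}=\mathcal{U}^{-1}\hat V$ using~\eqref{prodm}, which has the advantage of cleanly producing $\varphi_{\bm n}\in D(H_z)$ as a by-product and sidesteps the domain subtlety you rightly flag; the paper instead appeals to $T_{\mathcal{M}_1\mathcal{M}_2}=T_{\mathcal{M}_1}T_{\mathcal{M}_2}$, which strictly speaking is only proved for bounded kernels. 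The one place you are somewhat terser than warranted is the claim that $\hat V$ has a single pole per period and surjects onto $\R$: you sketch why injectivity on $\T$ forces this, but the step from injectivity to "one pole, monotone between poles, surjective" deserves a word on why a pole must exist at all (a real-valued continuous $1$-periodic function cannot be injective on $[0,1)$) and why two poles would produce an extremum. The paper simply cites this as Lemma~\ref{bls} from \cite{BLS83}; either handling is acceptable, but if you wish to avoid the citation you should complete that elementary argument.
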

\begin{rem}
We first mention that the perturbation strength $\varepsilon_0$ is independent of $x$ for $x\in\mathcal{Z}_0.$ 
\end{rem}
\begin{rem}
We  have explicit  descriptions of  localization centers. Moreover, we establish in fact the uniform (power-law) localization (cf. \cite{Jit97} for the definition of uniform localization).  
\end{rem}

\begin{rem}
Let $H_x=\varepsilon T_\phi+\tan\pi(x-\bm n\cdot\bm\omega)\delta_{\bm n\bm n'}$ for some sequence $\phi=\{\phi(\bm n)\}_{\bm n\in\Z^d}$
satisfying $\phi(\bm 0)=0$,  and $|\phi(\bm n)|\le |\bm n|^{-s}$ for $\bm n \neq \bm 0$.   If  $\bm\omega\in{\rm DC}_{\tau, \gamma},\ s>d+\tau$ and $|\varepsilon|\leq \varepsilon_0(s,\tau,\gamma,d)>0$,     then $H_x$ has uniform power-law localization for all  $x\not\in \frac12+\Z+\bm\omega\cdot\Z^d.$ This extends  the perturbative results of \cite{BLS83}   to  the power-law hopping case. 
\end{rem} 

\subsubsection{Uniform  dynamical localization} 

In this section we apply our diagonalization theorem   to  study  dynamical  localization.
For $\psi\in \C^{\Z^d}$ and  $s\ge0$,  define
	\begin{align*}
		\|\psi\|_{s}^{2}&=\sum_{\bm n\in\Z^d}|\psi(\bm n)|^2\langle\bm n\rangle^{2s}.
	\end{align*}
	Let  $\ell^{2}_s(\Z^d)$ denote  the set  of all  $\psi=\{\psi(\bm n)\}_{\bm n\in\Z^{d}}$ satisfying $\|\psi\|_{s}<\infty$. 

 Given  the  family  $(H_x)_{x\in\T}$ of self-adjoint operators defined on $\ell^2(\Z^d)$,  we are interested in  the estimate  of 
	\begin{align*}
	\|e^{-\sqrt{-1}tH_x}\psi\|_{q}\ {\rm for}\  \psi\in \ell^{2}_{q}(\Z^d). 	
	\end{align*}
We have
\begin{thm}\label{dl}
	Let $\bm\omega\in {\rm DC}_{\tau,\gamma}$. Fix $\delta>0$, $\alpha_{0}>0$, $R>0$ and $q\ge0$. Let both $V\in \mathscr{P}_{R}$ and $\mathcal{M}\in\mathscr{U}_{R,s}^{\bm\omega}$ be self-adjoint. Assume  
	\begin{align*}
		s>\alpha_{0}+\tau+q+\frac{d}{2}+7\delta.
	\end{align*}
	Then there is some $\ep_{0}=\ep_{0}(R,\alpha_{0},\g,\tau,|V|_{R},\delta,s,d,q)>0$ such that the following holds true.  If $\|\mathcal{M} \|_{R,s}<\ep_{0}$, then for  $\forall \ \psi\in \ell^{2}_{q}(\Z^d)$, we have 
\begin{align*}
\sup_{x\in \mathcal{Z}_0} \sup_{t\in\R}\|e^{-\sqrt{-1}tH_x}\psi\|_{q}<\infty,
\end{align*}
where   $H_{x}=T_{\mathcal{M}}(x)+T_{V}(x)$. 
\end{thm}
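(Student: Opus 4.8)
The plan is to reduce the dynamical statement to the spectral information already produced by Theorem~\ref{pl}, applied with the same exponent $s$. First, since $\mathcal{M}$ and $V$ are self-adjoint, Theorem~\ref{pl} gives, for each $x\in\mathcal{Z}_{R/2}\supset\mathcal{Z}_0$, a complete orthonormal system of eigenfunctions $\{\varphi_{\bm n}\}_{\bm n\in\Z^d}$ of $H_x$ with $|\varphi_{\bm n}(\bm i)|\le 2\langle \bm n-\bm i\rangle^{-s+\tau+7\delta}$; write $H_x\varphi_{\bm n}=E_{\bm n}\varphi_{\bm n}$. The key point is that this bound is uniform in $x\in\mathcal{Z}_0$ and that the constant $\varepsilon_0$ from Theorem~\ref{pl} can be chosen the same (indeed we will shrink it further if needed, still only depending on the listed parameters together with $q$). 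Set $\sigma:=s-\tau-7\delta$, so that $|\varphi_{\bm n}(\bm i)|\le 2\langle\bm n-\bm i\rangle^{-\sigma}$ and, by hypothesis $s>\alpha_0+\tau+q+\tfrac d2+7\delta$, we have $\sigma>q+\tfrac d2+\alpha_0>q+\tfrac d2$, which is the decay rate we will exploit.

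Next I would expand the evolution in this eigenbasis. For $\psi\in\ell^2_q(\Z^d)$ and any $t\in\R$,
\begin{align*}
\big(e^{-\sqrt{-1}tH_x}\psi\big)(\bm i)=\sum_{\bm n\in\Z^d} e^{-\sqrt{-1}tE_{\bm n}}\,\langle\psi,\varphi_{\bm n}\rangle\,\varphi_{\bm n}(\bm i),
\end{align*}
so that
\begin{align*}
\big|\big(e^{-\sqrt{-1}tH_x}\psi\big)(\bm i)\big|\le \sum_{\bm n\in\Z^d}|\langle\psi,\varphi_{\bm n}\rangle|\,|\varphi_{\bm n}(\bm i)|=:F(\bm i),
\end{align*}
a bound independent of $t$. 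It therefore suffices to prove $\|F\|_q<\infty$ with a bound uniform in $x\in\mathcal{Z}_0$. First estimate $|\langle\psi,\varphi_{\bm n}\rangle|\le \sum_{\bm j}|\psi(\bm j)|\,|\varphi_{\bm n}(\bm j)|\le 2\sum_{\bm j}|\psi(\bm j)|\langle\bm n-\bm j\rangle^{-\sigma}$; since $\psi\in\ell^2_q\subset\ell^1$ (as $q\ge0$ and $\sigma>\tfrac d2$ will also let us absorb weights) one gets a convolution-type bound. Then plug into $F$, obtaining a double convolution of $|\psi|$ against $\langle\cdot\rangle^{-\sigma}$, and use the weighted Young/Schur inequality: for $\sigma>q+\tfrac d2$ the weighted kernel $\langle\bm n\rangle^q\langle\bm n\rangle^{-\sigma}$ is in $\ell^1$, and the Peetre inequality $\langle\bm a+\bm b\rangle^q\le C_q\langle\bm a\rangle^{q}\langle\bm b\rangle^{q}$ lets one distribute the weight $\langle\bm i\rangle^q$ across the two convolution factors. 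This yields $\|F\|_q\le C\,\|\langle\cdot\rangle^{-\sigma+q}\|_{\ell^1}^2\,\|\psi\|_q$ (schematically), with $C$ depending only on $q,d$, hence the claimed uniform bound. A cleaner bookkeeping: show the map $\psi\mapsto(\langle\psi,\varphi_{\bm n}\rangle)_{\bm n}$ and the map $(c_{\bm n})\mapsto\sum_{\bm n}c_{\bm n}\varphi_{\bm n}$ are both bounded $\ell^2_q\to\ell^2_q$ with norm controlled by $\|\langle\cdot\rangle^{-\sigma}\langle\cdot\rangle^{q}\|_{\ell^1}$ via Schur's test with test function $\langle\bm n\rangle^q$, and note that $e^{-\sqrt{-1}tE_{\bm n}}$ acts as a diagonal unitary on $(c_{\bm n})$, hence is an isometry of every weighted $\ell^2$; composing gives the result.

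The main obstacle I expect is not the functional-analytic estimate, which is routine, but making sure the constant is genuinely uniform: one must verify that Theorem~\ref{pl} can be invoked with a threshold $\varepsilon_0$ independent of $x\in\mathcal{Z}_0$ (it is, by the Remark following Theorem~\ref{pl}) and that the eigenfunction decay exponent $-s+\tau+7\delta$ does not degrade as $x$ ranges over $\mathcal{Z}_0$. A secondary technical point is the passage from "complete set of eigenfunctions" to an honest orthonormal basis diagonalizing $H_x$: since $H_x=T_{\mathcal{M}}(x)+T_V(x)$ is self-adjoint and bounded (the hopping is in $\mathscr{U}^{\bm\omega}_{R,s}$, hence bounded on $\ell^2$, and for $x\in\mathcal{Z}_0$ the potential $T_V(x)$ is real and, although possibly unbounded, the diagonalization $\mathcal{U}(V+\mathcal{M})\mathcal{U}^{-1}=\hat V$ of Theorem~\ref{mthm} exhibits $H_x$ as conjugate to the multiplication operator $T_{\hat V}(x)$ by a bounded unitary $T_{\mathcal{U}}(x)$), one reads off that the $\varphi_{\bm n}$ form a complete orthonormal system and the spectral theorem applies directly; the exponential $e^{-\sqrt{-1}tH_x}$ is then literally $T_{\mathcal{U}}(x)^{-1}e^{-\sqrt{-1}tT_{\hat V}(x)}T_{\mathcal{U}}(x)$, which makes the $t$-independence of the bound transparent. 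Once these points are settled, summing the eigenfunction expansion and applying Schur's test finishes the proof.
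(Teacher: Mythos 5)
Your proposal, in its final ``cleaner bookkeeping'' form, is correct and is essentially the paper's proof: both conjugate $H_x$ to the diagonal operator $T_{\hat V}(x)$ via $U_x=T_{\mathcal U}(x)$ from Theorem~\ref{mthm}, observe that $e^{-\sqrt{-1}tT_{\hat V}(x)}$ is a diagonal unitary and hence an isometry of every weighted $\ell^2_q$, and then bound
\begin{equation*}
\|e^{-\sqrt{-1}tH_x}\psi\|_q\le \|U_x^{-1}\|_{\ell^2_q\to\ell^2_q}\,\|U_x\|_{\ell^2_q\to\ell^2_q}\,\|\psi\|_q,
\end{equation*}
uniformly in $t$ and $x\in\mathcal{Z}_0$. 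The paper controls $\|U_x^{\pm1}\|_{\ell^2_q\to\ell^2_q}$ directly from the kernel norm of $\mathcal U^{\pm1}$ via the weighted $\ell^2$ bound \eqref{bd}, which is where the hypothesis $s-\tau-7\delta>q+\tfrac d2$ enters; your Schur-test variant with the $\ell^1$-type bound on the kernel (and the Peetre inequality to distribute $\langle\cdot\rangle^q$) accomplishes the same thing.

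One caution on your first, ``double convolution'' sketch: replacing $|\langle\psi,\varphi_{\bm n}\rangle|$ by the pointwise bound $2\sum_{\bm j}|\psi(\bm j)|\langle\bm n-\bm j\rangle^{-\sigma}$ and then summing $\sum_{\bm n}\langle\bm n-\bm j\rangle^{-\sigma}\langle\bm n-\bm i\rangle^{-\sigma}$ loses roughly a factor $\langle\bm i-\bm j\rangle^{d}$, so the resulting single-convolution exponent is only $\sigma-d$, and closing via weighted Young would then need $\sigma>q+\tfrac{3d}{2}$ rather than the theorem's $\sigma>q+\tfrac d2$. That route does not close under the stated hypothesis. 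The correct move is exactly what you do afterwards: keep the two maps $\psi\mapsto(\langle\psi,\varphi_{\bm n}\rangle)$ and $(c_{\bm n})\mapsto\sum_{\bm n}c_{\bm n}\varphi_{\bm n}$ as operators $\ell^2_q\to\ell^2_q$ (these are literally $U_x$ and $U_x^{-1}$), bound their operator norms by the kernel norm of $\mathcal U^{\pm1}$, and compose with the diagonal unitary, never passing to pointwise double convolutions. Your uniformity remarks (that $\varepsilon_0$ is $x$-independent and that the decay exponent does not degrade over $\mathcal Z_0$) are correct and, as you note, built into Theorems~\ref{mthm} and~\ref{pl}.
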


\begin{rem}
Since $\R\setminus\mathcal{Z}_0$ is at most countable, we have for $\forall \ \psi\in \ell^{2}_{q}(\Z^d)$, 
\begin{align*}
\int_{\T}\sup_{t\in\R}\|e^{-\sqrt{-1}tH_x}\psi\|_{q}dx<\infty,
\end{align*}
which refers to  the strong dynamical localization. 
\end{rem}

\subsubsection{Lipschitz continuity of the IDS} 

In this section prove the Lipschitz continuity of the IDS. 

	
	Let $V\in \mathscr{P}_{R}$ and $\mathcal{M}\in\mathscr{U}_{R,s}^{\bm\omega}$ with $V^{*}=V$, $\mathcal{M}^{*}=\mathcal{M}$. Let  $H_{x}=T_{\mathcal{M}}(x)+T_{V}(x)$ for $x\in\mathcal{Z}_0.$   Denote by $\mathbb{P}_{(-\infty,E]}(H_x)$ the spectral resolution of $H_x$, and $\chi_{L}$  the projection
\begin{align*}
	(\chi_{L}\psi)(\bm n)=\left\{
		\begin{array}{cc}
			\psi(\bm n) & \text{if}\ |\bm n|\le L,\\
			0 & \text{otherwsie},
		\end{array} 
	\right.
\end{align*}
respectively.  If $\bm\omega\in{\rm DC}_{\tau, \g}$, then the limit 
	\begin{align*}
		\kappa(E)=\lim\limits_{L\rightarrow\infty}\frac{1}{(2L+1)^{d}}\text{tr}(\chi_{L}\mathbb{P}_{(-\infty,E]}(H_x)) 
	\end{align*}
exists  and is independent of  $x$ for a.e. $x\in\T$.  We have 

\begin{thm}\label{idsc}
		Let  $\bm\omega\in {\rm DC}_{\tau,\g}$.  Fix $\delta>0$, $\alpha_{0}>0$ and $R>0$. 
	Assume further that \begin{align*}
		s>\alpha_{0}+\tau+d+7\delta.
	\end{align*}
	Then there is some $\ep_{0}=\ep_{0}(R,\alpha_{0},\g,\tau,|V|_{R},\delta,s,d)>0$ such that for $\|\mathcal{M} \|_{R,s}<\ep_{0}$ and $E_1, E_2\in\R$,  
	 $$|\kappa(E_1)-\kappa(E_2)|\leq \frac{2}{|V|_R}|E_1-E_2|.$$
\end{thm}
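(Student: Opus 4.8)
The plan is to diagonalize $H_x$ via Theorem~\ref{mthm} into a diagonal multiplication operator, to verify that this unitary conjugation does not change the integrated density of states, and then to compute the IDS of the diagonal model explicitly using equidistribution together with the monotonicity of the new potential. For the first step, apply Theorem~\ref{mthm} with $\alpha:=s-3\delta$; this is admissible because $s>\alpha_0+\tau+d+7\delta$ forces $\alpha>\alpha_0+\tau+4\delta$, and we take $\ep_0$ small enough that $\|\mathcal M\|_{R,s}<\ep_0$ triggers the theorem. Since $\mathcal M$ and $V$ are self-adjoint, this produces a \emph{unitary} $\mathcal U\in\mathscr U_{R/2,\beta}^{\bm\omega}$ with $\beta:=\alpha-\tau-4\delta=s-\tau-7\delta>d$, and a self-adjoint $\hat V\in\mathscr P_{R/2}$ with $|\hat V|_{R/2}\ge\tfrac12|V|_R$ (and the same poles as $V$, since $\hat V-V\in\mathscr H_{R/2}$), such that $\mathcal U(V+\mathcal M)\mathcal U^{-1}=\hat V$. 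One checks directly that $\mathcal M\mapsto T_{\mathcal M}(x)$ from \eqref{tmz} is an algebra homomorphism with $T_{\mathcal M^*}(x)=T_{\mathcal M}(x)^*$ for real $x$. Hence, writing $U:=T_{\mathcal U}(x)$ (a unitary of norm $1$ on $\ell^2(\Z^d)$) and noting $H_x=T_V(x)+T_{\mathcal M}(x)=T_{V+\mathcal M}(x)$, we obtain for every $x\in\mathcal Z_0$ that $UH_xU^*=T_{\hat V}(x)$, the (possibly unbounded) diagonal operator with entries $\lambda_{\bm n}:=\hat V(x-\bm n\cdot\bm\omega)$. By the spectral theorem $\mathbb P_{(-\infty,E]}(H_x)=U^*Q_EU$, where $Q_E$ is the diagonal projection onto $\{\bm n:\lambda_{\bm n}\le E\}$.

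Next I would show the conjugation is harmless for the IDS. Since $\chi_L$ has finite rank, cyclicity of the trace together with $UU^*=\mathrm{Id}$ gives
\begin{align*}
\bigl|\,\text{tr}\bigl(\chi_L\mathbb P_{(-\infty,E]}(H_x)\bigr)-\text{tr}(\chi_LQ_E)\,\bigr|
=\bigl|\,\text{tr}\bigl((U\chi_LU^*-\chi_L)Q_E\bigr)\,\bigr|
\le\bigl\|\,[T_{\mathcal U-\bm 1}(x),\chi_L]\,\bigr\|_1,
\end{align*}
where $\|A\|_1=\text{tr}\,|A|$, and we used $U\chi_LU^*-\chi_L=[U,\chi_L]U^*=[T_{\mathcal U-\bm 1}(x),\chi_L]U^*$ with $\|U\|=\|Q_E\|=1$. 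The $(\bm n,\bm n')$ entry of $[T_{\mathcal U-\bm 1}(x),\chi_L]$ is a multiple of $(\mathcal U-\bm 1)(x-\bm n\cdot\bm\omega,\bm n'-\bm n)$ that vanishes unless exactly one of $|\bm n|,|\bm n'|$ is $\le L$. Using $\|A\|_1\le\sum_{\bm n,\bm n'}|A(\bm n,\bm n')|$, decomposing the lattice into shells according to the distance $\rho\ge0$ from the boundary of $\{|\cdot|\le L\}$ (so that $\langle\bm n-\bm n'\rangle\ge\rho+1$ whenever $\bm n,\bm n'$ lie on opposite sides, while each shell carries $\lesssim_d L^{d-1}$ points), and invoking $\|\mathcal M^*\|_{R,s}=\|\mathcal M\|_{R,s}$ to treat the two orientations symmetrically, one gets
\begin{align*}
\bigl\|\,[T_{\mathcal U-\bm 1}(x),\chi_L]\,\bigr\|_1\ \lesssim_d\ L^{d-1}\,\|\mathcal U-\bm 1\|_{R/2,\beta}\sum_{\rho\ge0}(\rho+1)^{-\beta}\ \lesssim_d\ L^{d-1},
\end{align*}
which is $o(L^d)$ since $\beta>d\ge1$. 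Dividing by $(2L+1)^d$ shows that $\kappa(E)$ equals $\lim_{L\to\infty}(2L+1)^{-d}\#\{\bm n:|\bm n|\le L,\ \hat V(x-\bm n\cdot\bm\omega)\le E\}$ (in particular the defining limit exists for every $x\in\mathcal Z_0$).

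To evaluate this limit, note that because $\bm\omega\in{\rm DC}_{\tau,\gamma}$ no $m\bm\omega$ with $m\in\Z\setminus\{0\}$ lies in $\Z^d$ (otherwise $\bm n=(m,0,\dots,0)$ would violate the Diophantine bound), so by Weyl's criterion $(\bm n\cdot\bm\omega)_{\bm n\in\Z^d}$ equidistributes in $\T$ under cube averages. Since $\hat V$ is meromorphic and non-constant, $y\mapsto\mathbf 1_{(-\infty,E]}(\hat V(x-y))$ is bounded with finitely many discontinuities, hence Riemann integrable, and therefore
\begin{align*}
\kappa(E)=\mathrm{Leb}\bigl(\{y\in\T:\ \hat V(y)\le E\}\bigr),
\end{align*}
independent of $x\in\mathcal Z_0$. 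For $E_1\le E_2$ this gives $\kappa(E_2)-\kappa(E_1)=\mathrm{Leb}(\{y\in\T:E_1<\hat V(y)\le E_2\})$. Differentiating \eqref{gc} (with $z,z-a$ real, $a\to0$) yields $|\hat V'(y)|\ge|\hat V|_{R/2}$ at every non-pole $y$, and \eqref{gc} also forces $\hat V$ to be injective on $\T$; thus on each of the finitely many components $I$ of $\T$ minus the poles of $\hat V$ the inverse $(\hat V|_I)^{-1}$ is $|\hat V|_{R/2}^{-1}$-Lipschitz, and the images $\hat V(I)$ are pairwise disjoint. Consequently
\begin{align*}
\mathrm{Leb}\bigl(\{y\in\T:E_1<\hat V(y)\le E_2\}\bigr)
\le\frac{1}{|\hat V|_{R/2}}\sum_I\mathrm{Leb}\bigl((E_1,E_2]\cap\hat V(I)\bigr)
\le\frac{E_2-E_1}{|\hat V|_{R/2}}
\le\frac{2}{|V|_R}\,|E_1-E_2|,
\end{align*}
the case $E_1>E_2$ being symmetric; this is the claimed bound.

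The delicate point is the second step: showing quantitatively that conjugation by the long-range, only power-law decaying transformation $T_{\mathcal U}(x)$ alters each finite-volume trace by a boundary term of size $O(L^{d-1})=o(L^d)$. This is exactly where the $\ell^1$-type norm $\|\cdot\|_{R,s}$ and the surplus decay $\beta=s-\tau-7\delta>d$ in the hypothesis intervene; the diagonalization input (Theorem~\ref{mthm}), the equidistribution of $(\bm n\cdot\bm\omega)$, and the co-area/monotonicity estimate for $\hat V$ are comparatively routine.
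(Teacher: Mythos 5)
Your proposal follows essentially the same route as the paper: diagonalize $H_x$ via Theorem~\ref{mthm} into $T_{\hat V}(x)$, check that conjugation by the nearly-identity unitary $U_x=T_{\mathcal U}(x)$ contributes only a boundary term to each finite-volume trace, identify $\kappa(E)$ with $\mathrm{mes}\{\theta\in\T:\hat V(\theta)\le E\}$, and finish with the monotonicity and derivative lower bound on $\hat V\in\mathscr P_{R/2}$. The paper isolates the conjugation-invariance as Lemma~\ref{idsu} and estimates $\mathrm{tr}\bigl((U\chi_L-\chi_L U)U^{-1}\mathbb P_{(-\infty,E]}(D)\bigr)$ by summing entries directly with a $|\bm i|<L-\sqrt L$ versus $|\bm i|\ge L-\sqrt L$ split, while you phrase the same estimate in terms of the trace norm of the commutator $[T_{\mathcal U-\bm 1}(x),\chi_L]$ and a shell decomposition; these are the same computation in different clothing.

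One small imprecision: in your shell estimate the inner sum over $\bm n'$ outside should produce $(\rho+1)^{d-\beta}$ rather than $(\rho+1)^{-\beta}$ (each shell at radius $t$ from a fixed $\bm n$ carries $\sim t^{d-1}$ points), so the asserted $O(L^{d-1})$ only holds when $\beta>d+1$. With the correct exponent, summing over $\rho\le L$ yields $O(L^{d-1})$ when $\beta>d+1$ and $O(L^{2d-\beta}\log L)$ when $d<\beta\le d+1$; in either case the contribution is $o(L^d)$ as soon as $\beta=s-\tau-7\delta>d$, which is exactly what the hypothesis supplies and what the paper's Lemma~\ref{idsu} delivers. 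So the conclusion is unaffected, but the intermediate bound as you wrote it overclaims in part of the parameter range.
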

\begin{rem}
We refer to \cite{JK19, Kac19} for  \textit{all couplings} results on the Lipschitz continuity of IDS for  $1D$ lattice quasi-periodic Schr\"odinger operators with Lipschitz monotone potentials. 
\end{rem}
\subsection{Structure of the paper}
The paper is organized as follows. Some preliminaries including tame estimate and the smoothing operator are introduced in \S 2. The Nash-Moser iteration theorem is established in \S3. In \S 4 we prove the convergence of the iteration scheme, and then finish the proof of Theorem \ref{mthm}. The proofs of Theorems \ref{pl}, \ref{dl} and \ref{idsc} are completed in \S 5, \S 6 and \S 7, respectively. Some technical estimates are included in the appendix. 

\section{Preliminaries}

\subsection{Tame property} The norm defined by \eqref{norm} has the following important tame property.
\begin{lem}\label{ts}
	For any $s\ge0$ and $\mathcal{M}_{1},\mathcal{M}_{2}\in\mathscr{U}_{R,s}^{\bm\omega}$, we have
	\begin{align}\label{tame}
		\|\mathcal{M}_{1}\mathcal{M}_{2}\|_{R,s}\le K(s)(\|\mathcal{M}_{1}\|_{R,0}\|\mathcal{M}_{2}\|_{R,s}+\|\mathcal{M}_{1}\|_{R,s}\|\mathcal{M}_{2}\|_{R,0}),
	\end{align}
	where $K(s)=2^{\max(0,s-1)}$.  In particular,
	\begin{align}\label{s}
		\|\mathcal{M}_{1}\mathcal{M}_{2}\|_{R,0}\le\|\mathcal{M}_{1}\|_{R,0}\|\mathcal{M}_{2}\|_{R,0}.
	\end{align}
\end{lem}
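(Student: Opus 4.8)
The plan is to prove the inequality by a direct manipulation of the series defining the product, exploiting the sub-multiplicativity of the weight $\langle\bm n\rangle$ along chains in $\Z^d$. Fix $z\in D_R$ and $\bm n\in\Z^d$. Starting from the definition \eqref{prodm},
\[
(\mathcal{M}_1\mathcal{M}_2)(z,\bm n)=\sum_{\bm l\in\Z^d}\mathcal{M}_1(z,\bm l)\,\mathcal{M}_2(z-\bm l\cdot\bm\omega,\bm n-\bm l),
\]
multiply by $\langle\bm n\rangle^s$ and take absolute values. The key elementary fact is that $\langle\bm n\rangle\le\langle\bm l\rangle+\langle\bm n-\bm l\rangle\le 2\max\{\langle\bm l\rangle,\langle\bm n-\bm l\rangle\}$, hence
\[
\langle\bm n\rangle^s\le 2^{\max(0,s-1)}\bigl(\langle\bm l\rangle^s+\langle\bm n-\bm l\rangle^s\bigr),
\]
where for $s\ge 1$ this is the convexity bound $(a+b)^s\le 2^{s-1}(a^s+b^s)$ and for $0\le s<1$ one just uses $\langle\bm n\rangle^s\le\langle\bm l\rangle^s+\langle\bm n-\bm l\rangle^s$ (subadditivity of $t\mapsto t^s$). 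Call the constant $K(s)=2^{\max(0,s-1)}$.

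Next I would split the sum according to these two terms. For the first term, bound $|\mathcal{M}_1(z,\bm l)|\langle\bm l\rangle^s\cdot|\mathcal{M}_2(z-\bm l\cdot\bm\omega,\bm n-\bm l)|$: summing over $\bm n$ first and then over $\bm l$,
\[
\sum_{\bm n}\sum_{\bm l}|\mathcal{M}_1(z,\bm l)|\langle\bm l\rangle^s\,|\mathcal{M}_2(z-\bm l\cdot\bm\omega,\bm n-\bm l)|
=\sum_{\bm l}|\mathcal{M}_1(z,\bm l)|\langle\bm l\rangle^s\sum_{\bm n}|\mathcal{M}_2(z-\bm l\cdot\bm\omega,\bm n-\bm l)|,
\]
and the inner sum over $\bm n$ (reindexed by $\bm m=\bm n-\bm l$) is $\le\|\mathcal{M}_2\|_{R,0}$ since $z-\bm l\cdot\bm\omega\in D_R$ (the strip $D_R$ is invariant under real translations). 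This gives $\|\mathcal{M}_1\|_{R,s}\|\mathcal{M}_2\|_{R,0}$. Symmetrically, the second term contributes $\|\mathcal{M}_1\|_{R,0}\|\mathcal{M}_2\|_{R,s}$: here one sums over $\bm l$ first, using $\sum_{\bm l}|\mathcal{M}_1(z,\bm l)|\le\|\mathcal{M}_1\|_{R,0}$ and then $\sum_{\bm m}|\mathcal{M}_2(z',\bm m)|\langle\bm m\rangle^s\le\|\mathcal{M}_2\|_{R,s}$ uniformly in the shifted phase $z'=z-\bm l\cdot\bm\omega$. Taking the supremum over $z\in D_R$ of the combined bound yields \eqref{tame}. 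The special case \eqref{s} is even simpler: with $s=0$ one has $\langle\bm n\rangle^0=1$, no weight splitting is needed, and the double sum factors directly into $\|\mathcal{M}_1\|_{R,0}\|\mathcal{M}_2\|_{R,0}$ by Tonelli.

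I do not anticipate a genuine obstacle here; the only points requiring a little care are (i) the choice of the constant $K(s)$, which forces the case distinction $s\ge 1$ versus $0\le s<1$ in the weight inequality, and (ii) the legitimacy of interchanging the order of summation, which is justified by Tonelli's theorem once one observes all terms are nonnegative and the iterated sums are finite because $\mathcal{M}_1,\mathcal{M}_2\in\mathscr{U}_{R,s}^{\bm\omega}$. It is also worth noting explicitly that every phase argument appearing, namely $z-\bm l\cdot\bm\omega$ with $\bm l\cdot\bm\omega\in\R$, still lies in $D_R$, so that the bounds $\|\mathcal{M}_i\|_{R,0}$ and $\|\mathcal{M}_i\|_{R,s}$ (which are suprema over the whole strip) apply at the shifted phase; this is what makes the argument close uniformly in $z$.
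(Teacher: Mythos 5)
Your proposal is correct and follows essentially the same route as the paper's proof in the appendix: the weight inequality $\langle\bm n\rangle\le\langle\bm l\rangle+\langle\bm n-\bm l\rangle$ followed by $(a+b)^s\le K(s)(a^s+b^s)$, a two-term split, and Tonelli to factor each term into a product of norms using the translation invariance of $D_R$. The only quibble is a wording one: for the second term you say ``one sums over $\bm l$ first,'' but the operative step (as in the first term, and as in the paper) is to bound the inner $\bm m$-sum of $\mathcal{M}_2$ uniformly by $\|\mathcal{M}_2\|_{R,s}$ and then sum $|\mathcal{M}_1(z,\bm l)|$ over $\bm l$; your subsequent formulas make this clear, so it is not a gap.
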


	
\begin{proof}
	For a detailed proof, we refer to the appendix.
\end{proof}

\subsection{Smoothing operator} 
The smoothing operator plays an essential role in the Nash-Moser iteration scheme. In the present context we have
\begin{defn}\label{sm}
	Fix the $\theta\ge0$. Define the smoothing operator $S_{\theta}$ by
	\begin{align*}
		(S_{\theta}\mathcal{M})(z,\bm n)&=\mathcal{M}(z,\bm n)\ \text{for}\ |\bm n|\le\theta,\\
		(S_{\theta}\mathcal{M})(z,\bm n)&=0\ \text{for}\ |\bm n|>\theta.
	\end{align*}
\end{defn}
Given  a sequence $\{\theta_{l}\}_{l=0}^{\infty}$ with $\theta_{l+1}>\theta_{l}\ge0$ and $\lim\limits_{l\rightarrow\infty}\theta_{l}=+\infty$,  define
	\begin{align*}
		&\mathcal{M}^{(0)}=S_{\theta_{0}}\mathcal{M},\ \mathcal{M}^{(l)}=(S_{\theta_{l}}-S_{\theta_{l-1}})\mathcal{M} \ \text{for}\ l\ge1.
	\end{align*}
	Then $\mathcal{M}^{(l)}$ is called the  $l$-section of $\mathcal{M}$  w.r.t  $\{\theta_{l}\}_{l=0}^{\infty}$.
\begin{lem}
	Fix  the $\theta\ge0$. Then for $\mathcal{M}\in\mathscr{U}_{R,s}^{\bm\omega}$, we have
	\begin{align}
		\label{sm1}\|S_{\theta}\mathcal{M}\|_{R,s}&\le \langle\theta\rangle^{s-s'}\|\mathcal{M}\|_{R,s'}\  {\rm for}\ 0\le s'\le s,\\
		\label{sm2}\|(I-S_{\theta})\mathcal{M}\|_{R,s}&\le \langle\theta\rangle^{s-s'}\|\mathcal{M}\|_{R,s'}\  {\rm for}\ 0\le s\le s',
	\end{align}
	where $I$ denotes the identity operator.  In particular, if $\mathcal{M}^{(l)}$ is the  $l$-section of $\mathcal{M}$ w.r.t  $\{\theta_{l}\}_{l=0}^{\infty}$, we  have
	\begin{align}
		\label{m1l}\|\mathcal{M}^{(l)}\|_{R,s}&\le \langle\theta_{l}\rangle^{s-s'}\|\mathcal{M}\|_{R,s'}\  {\rm for}\ 0\le s'\le s,\\
		\label{m2l}\|\mathcal{M}^{(l)}\|_{R,s}&\le \langle\theta_{l-1}\rangle^{s-s'}\|\mathcal{M}\|_{R,s'}\  {\rm for}\ 0\le s\le s'.
	\end{align}
\end{lem}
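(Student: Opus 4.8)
The four estimates are purely combinatorial and follow at once from the definition of $S_\theta$ together with the elementary observation that $t\mapsto t^{s-s'}$ is nondecreasing when $s-s'\ge0$ and nonincreasing when $s-s'\le0$. The plan is to prove \eqref{sm1} and \eqref{sm2} directly from the definition \eqref{norm} of $\|\cdot\|_{R,s}$, and then to deduce \eqref{m1l} and \eqref{m2l} by comparing each $l$-section $\mathcal{M}^{(l)}$ pointwise with $S_{\theta_l}\mathcal{M}$ and with $(I-S_{\theta_{l-1}})\mathcal{M}$.

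For \eqref{sm1}, fix $z\in D_R$ and $0\le s'\le s$. By Definition \ref{sm} the kernel $(S_\theta\mathcal{M})(z,\cdot)$ is supported on $\{\bm n:\ |\bm n|\le\theta\}$ and coincides with $\mathcal{M}(z,\cdot)$ there, so the sum in \eqref{norm} for $S_\theta\mathcal{M}$ is restricted to that range. On $\{|\bm n|\le\theta\}$ one has $\langle\bm n\rangle\le\langle\theta\rangle$ (here it is essential that $\langle\cdot\rangle=\max\{1,\cdot\}$, so that the bound survives the regime $\theta<1$), whence $\langle\bm n\rangle^{s}=\langle\bm n\rangle^{s-s'}\langle\bm n\rangle^{s'}\le\langle\theta\rangle^{s-s'}\langle\bm n\rangle^{s'}$ because $s-s'\ge0$. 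Summing over $\bm n$, bounding the restricted sum by the full one, and taking $\sup_{z\in D_R}$ gives \eqref{sm1}. The estimate \eqref{sm2} is the mirror image: $(I-S_\theta)\mathcal{M}(z,\cdot)$ is supported on $\{|\bm n|>\theta\}$, where $\langle\bm n\rangle\ge\langle\theta\rangle$; since now $s-s'\le0$ the inequality $\langle\bm n\rangle^{s-s'}\le\langle\theta\rangle^{s-s'}$ again yields $\langle\bm n\rangle^{s}\le\langle\theta\rangle^{s-s'}\langle\bm n\rangle^{s'}$, and the same summation argument concludes.

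Finally, \eqref{m1l} and \eqref{m2l} require no further computation. For every $z$ and $\bm n$ one has $|\mathcal{M}^{(l)}(z,\bm n)|\le|(S_{\theta_l}\mathcal{M})(z,\bm n)|$ and, for $l\ge1$, also $|\mathcal{M}^{(l)}(z,\bm n)|\le|((I-S_{\theta_{l-1}})\mathcal{M})(z,\bm n)|$, simply because $\mathcal{M}^{(l)}$ keeps exactly the entries of $\mathcal{M}$ with $\theta_{l-1}<|\bm n|\le\theta_l$ (and $\mathcal{M}^{(0)}=S_{\theta_0}\mathcal{M}$). Since $\|\cdot\|_{R,s}$ is monotone under pointwise domination of kernels, \eqref{m1l} is just \eqref{sm1} applied to $S_{\theta_l}\mathcal{M}$ and \eqref{m2l} (for $l\ge1$) is \eqref{sm2} applied to $(I-S_{\theta_{l-1}})\mathcal{M}$, while the case $l=0$ of \eqref{m1l} is literally \eqref{sm1} with $\theta=\theta_0$. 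I do not anticipate any real obstacle: the only points that need care are keeping track of the direction of the inequality for $\langle\cdot\rangle^{s-s'}$ according to the sign of $s-s'$, and using $\langle\cdot\rangle=\max\{1,\cdot\}$ rather than $|\cdot|$ so that both $\langle\bm n\rangle\le\langle\theta\rangle$ on $\{|\bm n|\le\theta\}$ and $\langle\bm n\rangle\ge\langle\theta\rangle$ on $\{|\bm n|>\theta\}$ hold, including when $\theta<1$.
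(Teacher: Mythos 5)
Your proof is correct and is exactly the argument the paper has in mind; the paper simply asserts that the lemma "follows immediately from Definition \ref{sm} and \eqref{norm}," and your write‑up spells out that routine calculation (support restriction, the two‑sided comparison $\langle\bm n\rangle\lessgtr\langle\theta\rangle$ depending on the sign of $s-s'$, and monotonicity of $\|\cdot\|_{R,s}$ under pointwise domination to pass to the $l$-sections).
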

\begin{proof}
	The proof follows immediately from  Definition \ref{sm} and \eqref{norm}.
\end{proof}

\section{The Nash-Moser iteration}
In this section we will prove a Nash-Moser iteration theorem. The main strategy is based on  the iteration scheme  outlined  in \cite{Shi22} combined with meromorphic function estimates of \cite{BLS83}.  The final transformation $\mathcal{U}$ will be obtained as the limit of  the product $\mathcal{U}_{l}=\prod\limits_{i=l}^{0}e^{\mathcal{W}_i}$ with a sequence of transformations ${\mathcal{W}_i}$ ($0\leq i\leq l$). More precisely, at the $l$-th iteration step we will find $\mathcal{W}_l\in \mathscr{U}_{R_{l},s}^{\bm\omega}$, $V_l\in \mathscr{P}_{R_l}$ and $\mathcal{R}_{l}\in \mathscr{U}_{R_{l},s}^{\bm\omega}$ so that
\begin{align*}
\prod_{i=l}^{0}e^{\mathcal{W}_{i}}\left(V+\sum_{i=0}^{l-1}\mathcal{M}^{(i)}\right)\prod_{i=0}^{l}e^{-\mathcal{W}_{i}}=V_{l}+\mathcal{R}_{l}
\end{align*}
and  $\|\mathcal{R}_{l}\|_{R_{l},s}=o(\|\mathcal{R}_{l-1}\|_{R_{l-1},s})$, where  $\mathcal{M}^{(l)}$ is  the $l$-section of $\mathcal{M}$ w.r.t  $\{\theta_{l}=\theta_{0}\Theta^{l}\}_{l=0}^{\infty}$ with $\theta_{0}$,  $\Theta>1$  being specified later. The  sequence $\{R_l\}_{l=0}^\infty$  satisfies $R_l\searrow R_\infty\geq R_0/2.$ 
To clarify  the  iteration scheme, we set
 \begin{align*}
 	&\mathcal{R}_{0}=\mathcal{W}_{0}=\mathcal{M}_{-1}=0,\ V_{0}=V_{-1}=V,\\
	&\mathcal{M}_{l}=\mathcal{U}_{l}\mathcal{M}^{(l)}\mathcal{U}_{l}^{-1}+\mathcal{R}_{l}.
 \end{align*}
Equivalently, at the $l$-th iteration step we aim to  find $\mathcal{W}_{l}\in \mathscr{U}_{R_{l},s}^{\bm\omega}$, $V_l\in \mathscr{P}_{R_l}$ and $\mathcal{R}_{l}\in \mathscr{U}_{R_{l},s}^{\bm\omega}$ so that
\begin{align*}
	e^{\mathcal{W}_{l}}\left(V_{l-1}+\mathcal{M}_{l-1}\right)	e^{-\mathcal{W}_{l}}=V_{l}+\mathcal{R}_{l},
\end{align*}
with $\|\mathcal{R}_{l}\|_{R_{l},s}=o(\|\mathcal{R}_{l-1}\|_{R_{l-1},s})$. For this purpose, it needs to eliminate terms of order $O(\|\mathcal{R}_{l-1}\|_{R_{l-1},s})$,  which leads to solvling  the following homological equations
\begin{align*}
	&V_{l}(z)=V_{l-1}(z)+\mathcal{M}_{l-1}(z,\bm0),\ \mathcal{W}_{l}(z,\bm 0)=0,\\
	& (\mathcal{W}_l V_{l}-V_{l}\mathcal{W}_l)(z,\bm n)=-(S_{\theta_{l}}\tilde{\mathcal{M}}_{l-1})(z,\bm n)\ {\rm for}\ \bm n\neq\bm0,
\end{align*}
where $\tilde{\mathcal{M}}_{l-1}=(I-S_{0})\mathcal{M}_{l-1}$.

To solve the homological equations, we have to address the following issues. First, the non-resonant property  of $V_l$ should  be preserved,  which requires the use of  quantitative version of meromorphic monotone  function estimates  established by Bellissard-Lima-Scoppola \cite{BLS83}.  Second, since in the iteration steps we need to estimate the $\|\cdot\|_{R,s}$ norm of  products of elements in $\mathscr{U}_{R,s}^{\bm\omega}$,  the {tame} property  of the norm (cf. \eqref{ts})  plays an essential role.  

This section is then organized as follows. We first provide some useful estimates in \S  3.1.  The Nash-Moser iteration theorem is then  proved in \S 3.2. 

\subsection{Some useful estimates}
 Let $V'\in\mathscr{P}_{R'}$ and  $\mathcal{M}'\in\mathscr{U}_{R',s}^{\bm\omega}$.   
 We define $\bar{V'}$ and $\tilde{\mathcal{M}'}$  as follows
 \begin{align}
 	\label{vb}\bar{V'}(z)&=V'(z)+\mathcal{M}'(z,\bm0),\\
 	\nonumber\tilde{\mathcal{M}'}(z,\bm n)&=((I-S_{0})\mathcal{M}')(z,\bm n)=\left\{
 	\begin{array}{cc}
 		\mathcal{M}'(z,\bm n)& {\rm if} \ \bm n\ne \bm 0,\\
 		0 & {\rm if} \ \bm n= \bm 0. \\
 	\end{array}\right.
 \end{align}
 We have
 \begin{lem}\label{lem31}
 	If $R'>Q'>0$ is such that
 	\begin{align}\label{bm}
 		\|\mathcal{M}'\|_{R',0}<Q'|V'|_{R'},
 	\end{align}
 	then $\bar{V'}\in\mathscr{P}_{R'-Q'}$ and
 	\begin{align}\label{hv}
 		|\bar{V'}|_{R'-Q'}\ge |V'|_{R'}-\frac{\|\mathcal{M}'\|_{R',0}}{Q'}>0.
 	\end{align}
 \end{lem}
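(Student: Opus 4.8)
The plan is to verify the defining inequality \eqref{gc} directly for $\bar{V'}$ on the smaller strip $D_{R'-Q'}$, tracking how much of the monotonicity constant $|V'|_{R'}$ survives after adding the perturbing diagonal function $\mathcal{M}'(z,\bm 0)$. First I would record that $\mathcal{M}'(\cdot,\bm 0)\in\mathscr{H}_{R'}$ with $\sup_{z\in D_{R'}}|\mathcal{M}'(z,\bm 0)|\le \|\mathcal{M}'\|_{R',0}$, which is immediate from the definition \eqref{norm} of $\|\cdot\|_{R',0}$ (the $\bm n=\bm 0$ term alone is bounded by the full sum). In particular $\mathcal{M}'(\cdot,\bm 0)$ is holomorphic and bounded on $D_{R'}$, hence on $D_{R'-Q'}$, so $\bar{V'}=V'+\mathcal{M}'(\cdot,\bm 0)$ is meromorphic and $1$-periodic on $D_{R'-Q'}$ with the same poles as $V'$; thus it only remains to establish the monotonicity bound.

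Next, for $a\in\R$ and $z\in D_{R'-Q'}$ I would write
\begin{align*}
\bar{V'}(z)-\bar{V'}(z-a)=\bigl(V'(z)-V'(z-a)\bigr)+\bigl(\mathcal{M}'(z,\bm 0)-\mathcal{M}'(z-a,\bm 0)\bigr),
\end{align*}
and bound the two pieces from opposite sides: by \eqref{gc} for $V'$ the first term has modulus at least $|V'|_{R'}\|a\|_\T$, while the second term has modulus at most $2\|\mathcal{M}'\|_{R',0}$. The key quantitative observation is that a holomorphic $1$-periodic bounded function on $D_{R'}$ whose sup-norm is $\le M$ has, on the smaller strip $D_{R'-Q'}$, a Lipschitz bound controlled by $M/Q'$; concretely, by the Cauchy estimate its derivative is bounded by $M/Q'$ on $D_{R'-Q'}$, so $|\mathcal{M}'(z,\bm 0)-\mathcal{M}'(z-a,\bm 0)|\le (\|\mathcal{M}'\|_{R',0}/Q')|a|$ for $z,z-a\in D_{R'-Q'}$, and for $a\in\R$ this can be upgraded to a bound in terms of $\|a\|_\T$ by $1$-periodicity (replacing $a$ by its nearest-integer translate). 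Combining,
\begin{align*}
|\bar{V'}(z)-\bar{V'}(z-a)|\ge \Bigl(|V'|_{R'}-\frac{\|\mathcal{M}'\|_{R',0}}{Q'}\Bigr)\|a\|_\T,
\end{align*}
and hypothesis \eqref{bm} guarantees the bracket is strictly positive. By the definition of $|\cdot|_{R'-Q'}$ as the largest admissible constant in \eqref{gc}, this yields both $\bar{V'}\in\mathscr{P}_{R'-Q'}$ and the estimate \eqref{hv}.

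The only real subtlety — the step I would be most careful about — is the passage from the Cauchy/Lipschitz bound $|a|$ to the sharper $\|a\|_\T$, since $a$ ranges over all of $\R$ while the naive Cauchy estimate only gives something useful for small $|a|$; here $1$-periodicity of $\mathcal{M}'(\cdot,\bm 0)$ is exactly what saves us, letting us reduce to $|a|=\|a\|_\T\le 1/2$. Everything else is a routine combination of the triangle inequality, the Cauchy estimate on the strip $D_{R'-Q'}\subset D_{R'}$, and the definitions; I would relegate the detailed Cauchy-estimate computation to a remark or simply cite the analogous lemma in \cite{BLS83}.
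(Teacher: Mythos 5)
Your proof is correct, and it takes a genuinely different route from the paper: the paper simply cites Lemma~I.2 of Bellissard--Lima--Scoppola and applies it with $g=V'$ and $f(z)=\mathcal{M}'(z,\bm 0)$, whereas you reprove that lemma from scratch. Your argument --- decompose $\bar{V'}(z)-\bar{V'}(z-a)$ into the $V'$ increment plus the $\mathcal{M}'(\cdot,\bm 0)$ increment, lower-bound the former by $|V'|_{R'}\|a\|_\T$, and upper-bound the latter by $(\|\mathcal{M}'\|_{R',0}/Q')\|a\|_\T$ via a Cauchy estimate on the shrunk strip combined with $1$-periodicity to reduce to $|a|\le 1/2$ --- is exactly the right mechanism and is complete as stated. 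The one cosmetic point is that the crude bound $|\mathcal{M}'(z,\bm 0)-\mathcal{M}'(z-a,\bm 0)|\le 2\|\mathcal{M}'\|_{R',0}$ that you mention first plays no role (it is not proportional to $\|a\|_\T$, so it cannot give membership in $\mathscr{P}_{R'-Q'}$); only the refined Cauchy/Lipschitz bound is used. The trade-off is clear: the paper's approach is shorter because it outsources the analytic work to \cite{BLS83}, while yours is self-contained and makes transparent exactly where the loss of strip width $Q'$ comes from (the Cauchy estimate) and why the hypothesis $\|\mathcal{M}'\|_{R',0}<Q'|V'|_{R'}$ is what keeps the monotonicity constant positive.
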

 \begin{proof}
 	The proof follows from
 	\begin{lem}[Lemma I.2, \cite{BLS83}]
 		Let $g$ be in $\mathscr{P}_{R'}$ and $f$ be in $\mathscr{H}_{R'}$. If $R'>Q'>0$ is such that
 		$\left \|f\right \|_{R'}<{Q'}|g|_{R'}$,
 		then $f+g\in\mathscr{P}_{R'-Q'}$ and
 		\begin{align*}
 			\Big|  |f+g|_{R'-Q'}-|g|_{R'-Q'}\Big|\le {(Q')}^{-1}\left \|f\right \|_{R'}.
 		\end{align*}
 	\end{lem}
 Hence it suffices to apply the above lemma with  $g(z)=V'(z)$ and $f(z)=\mathcal{M}'(z,\bm0)$.
 \end{proof}
For $\theta\ge0$, we define the  kernel $\mathcal{W}'$ as
\begin{align}\label{defw}
\mathcal{W}'(z,\bm 0)=0,\ \mathcal{W}'(z,\bm n)=\frac{(S_{\theta}\tilde{\mathcal{M}'})(z,\bm n)}{\bar{V'}(z)-\bar{V'}(z-\bm n\cdot\bm\omega)}\ {\rm for}\ \bm n\neq \bm0.
\end{align}
We have 
\begin{lem}\label{hollem}
\label{w*} For any $\theta\ge0$ and $\mathcal{W}'\in\mathscr{U}_{R'-Q',s}^{\bm\omega}$,  we have
\begin{align}\label{emw}
\left \|\mathcal{W}'\right \|_{R'-Q',s}\le \frac{\langle\theta\rangle^{\tau}}{\g |\bar{V'}|_{R'-Q'}}\|\mathcal{M}'\|_{R',s},
\end{align}
Moreover, if $(\mathcal{M}')^{*}=\mathcal{M}'$ and $(V')^{*}=V'$, then $(\mathcal{W}')^{*}=-\mathcal{W}'$.
\end{lem}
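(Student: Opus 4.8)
The plan is to estimate the kernel $\mathcal{W}'$ defined by \eqref{defw} directly from the definition of the norm \eqref{norm}, using the Diophantine condition on $\bm\omega$ together with the monotonicity property \eqref{gc} of $\bar{V'}$. First I would observe that, by Lemma \ref{lem31}, the hypothesis \eqref{bm} guarantees $\bar{V'}\in\mathscr{P}_{R'-Q'}$ with $|\bar{V'}|_{R'-Q'}>0$, so the denominator in \eqref{defw} is meaningful and the monotonicity estimate applies: for every $\bm n\neq\bm 0$ and every $z\in D_{R'-Q'}$,
\begin{align*}
|\bar{V'}(z)-\bar{V'}(z-\bm n\cdot\bm\omega)|\ge |\bar{V'}|_{R'-Q'}\,\|\bm n\cdot\bm\omega\|_{\T}.
\end{align*}
Combining this with the Diophantine bound $\|\bm n\cdot\bm\omega\|_{\T}\ge \gamma\langle\bm n\rangle^{-\tau}$ gives the pointwise lower bound $|\bar{V'}(z)-\bar{V'}(z-\bm n\cdot\bm\omega)|\ge \gamma|\bar{V'}|_{R'-Q'}\langle\bm n\rangle^{-\tau}$.

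Next I would plug this into the definition of $\|\mathcal{W}'\|_{R'-Q',s}$. Since $\mathcal{W}'(z,\bm 0)=0$, only the terms $\bm n\neq\bm 0$ contribute, and for those
\begin{align*}
|\mathcal{W}'(z,\bm n)|\langle\bm n\rangle^{s}\le \frac{\langle\bm n\rangle^{\tau}}{\gamma|\bar{V'}|_{R'-Q'}}\,|(S_\theta\tilde{\mathcal{M}'})(z,\bm n)|\langle\bm n\rangle^{s}.
\end{align*}
Here the smoothing operator $S_\theta$ is crucial: it forces $(S_\theta\tilde{\mathcal{M}'})(z,\bm n)=0$ for $|\bm n|>\theta$, so that on the support of the summand we have $\langle\bm n\rangle^{\tau}\le\langle\theta\rangle^{\tau}$, which can be pulled out of the sum. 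What remains is $\sum_{\bm n\neq\bm 0}|(S_\theta\tilde{\mathcal{M}'})(z,\bm n)|\langle\bm n\rangle^{s}\le \|S_\theta\tilde{\mathcal{M}'}\|_{R',s}\le\|\mathcal{M}'\|_{R',s}$, using that $\tilde{\mathcal{M}'}=(I-S_0)\mathcal{M}'$ and $S_\theta$ are both projection-type truncations (so by \eqref{sm1} with $s'=s$, or simply by inspecting \eqref{norm}, neither increases the $\|\cdot\|_{R',s}$ norm — and restricting $z$ to the smaller domain $D_{R'-Q'}$ only decreases the supremum). Taking the supremum over $z\in D_{R'-Q'}$ yields \eqref{emw}. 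For the self-adjointness claim, I would use the involution formula \eqref{*}: since $(\mathcal{M}')^*=\mathcal{M}'$ and $(V')^*=V'$ imply $(\bar{V'})^*=\bar{V'}$ (as $\mathcal{M}'(z,\bm 0)$ contributes a self-adjoint diagonal term), one checks that applying $*$ to \eqref{defw} sends the numerator $(S_\theta\tilde{\mathcal{M}'})(z,\bm n)$ to $\overline{(S_\theta\tilde{\mathcal{M}'})(\bar z-\bm n\cdot\bm\omega,-\bm n)}$ and the denominator $\bar{V'}(z)-\bar{V'}(z-\bm n\cdot\bm\omega)$ to $\overline{\bar{V'}(\bar z-\bm n\cdot\bm\omega)-\bar{V'}(\bar z)}=-\overline{\bar{V'}(\bar z)-\bar{V'}(\bar z-\bm n\cdot\bm\omega)}$ evaluated appropriately, producing exactly $-\mathcal{W}'(z,\bm n)$; the $S_\theta$ truncation commutes with $*$ because it depends only on $|\bm n|$, which is invariant under $\bm n\mapsto-\bm n$.

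I do not expect a serious obstacle here; this is the homological-equation estimate and its proof is essentially bookkeeping once the two ingredients (monotonicity of $\bar{V'}$ from \cite{BLS83} and the Diophantine condition) are in hand. The only point requiring a little care is the bookkeeping around the truncations: one must be sure that the factor $\langle\bm n\rangle^\tau$ is genuinely controlled by $\langle\theta\rangle^\tau$ on the support of $S_\theta\tilde{\mathcal{M}'}$ rather than being left inside the sum (where it would spoil the estimate, since $\mathcal{M}'$ is only assumed to lie in $\mathscr{U}_{R',s}^{\bm\omega}$, not $\mathscr{U}_{R',s+\tau}^{\bm\omega}$) — this is exactly the "controllable loss of derivatives" mentioned in the introduction and the reason the smoothing operator is built into \eqref{defw}. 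A secondary subtlety is keeping the domains straight: the numerator data $\mathcal{M}'$ lives on $D_{R'}$ while $\mathcal{W}'$ and $\bar{V'}$ are estimated on the smaller strip $D_{R'-Q'}$, so every supremum in $z$ must be taken over $D_{R'-Q'}$, which is harmless since that is the smaller set.
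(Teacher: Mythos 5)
Your proposal is correct and follows essentially the same route as the paper: lower-bound the denominator in \eqref{defw} via the monotonicity bound $|\bar{V'}(z)-\bar{V'}(z-\bm n\cdot\bm\omega)|\ge|\bar{V'}|_{R'-Q'}\|\bm n\cdot\bm\omega\|_\T$ together with the Diophantine condition, use the truncation $S_\theta$ to convert the $\langle\bm n\rangle^\tau$ loss into the fixed factor $\langle\theta\rangle^\tau$, and conclude with $\|S_\theta\tilde{\mathcal{M}'}\|\le\|\mathcal{M}'\|$; the paper's write-up packages the last two steps as an application of \eqref{sm1} with $s\mapsto s+\tau$, $s'\mapsto s$, while you pull $\langle\theta\rangle^\tau$ out of the sum on the support of $S_\theta$ by hand — these are the same computation. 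Your direct check of the $*$-antisymmetry is also sound and merely spells out what the paper dismisses with ``follows directly from \eqref{*}.''
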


\begin{proof}
 Since \eqref{defw}, \eqref{sm1} and $\bm\omega\in {\rm DC}_{\tau,\g}$, we get
	\begin{align*}
	\|\mathcal{W}'\|_{R'-Q',s}=&\sup_{z\in{D}_{R'-Q'}}\sum_{\bm n\in\Z^{d}}|\mathcal{W}'(z,\bm n)|\langle\bm n\rangle^{s}\\
 \le& \frac{1}{\g |\bar{V'}|_{R'-Q'}}\sup_{z\in{D}_{R'}}\sum_{\bm n\in\Z^{d}}|(S_{\theta}\tilde{\mathcal{M}'})(z,\bm n)|\langle\bm n\rangle^{s+\tau}\\
	=&\frac{1}{\g |\bar{V'}|_{R'-Q'}}\|S_{\theta}\tilde{\mathcal{M}'}\|_{R',s+\tau}\le\frac{\langle\theta\rangle^{\tau}}{\g |\bar{V'}|_{R'-Q'}}\|\tilde{\mathcal{M}'}\|_{R',s}\\
	\le&\frac{\langle\theta\rangle^{\tau}}{\g |\bar{V}|_{R'-Q'}}\|\mathcal{M}'\|_{R',s},
	\end{align*}
	which  implies \eqref{emw}.

The last assertion  of the lemma follows directly from \eqref{*}.
\end{proof}
The following elementary inequality plays an important role in the proof of tame estimate. 
\begin{lem}\label{nl}
	Let $(x,y, s)\in \R_{+}^3\setminus\{(0,0,0)\}$. Then we have
		\begin{align}\label{xy}
		(x+y)^{s}\le K(s)(x^{s}+y^{s}),
	\end{align}
where
\begin{align}\label{cs}
	K(s)=2^{\max(0,s-1)}\ge1.
\end{align}
\end{lem}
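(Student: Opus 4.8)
The plan is to treat the ranges $0\le s\le 1$ and $s\ge 1$ separately, exploiting the concavity of the power function $t\mapsto t^{s}$ in the first case and its convexity in the second; note that $s=1$ lies in both ranges, which is consistent since $K(1)=2^{0}=1$. Before entering these cases I would dispose of the degenerate configurations. If $x=y=0$, then $s>0$ (because the triple $(0,0,0)$ is excluded), so both sides of \eqref{xy} vanish and there is nothing to prove. If exactly one of $x,y$ vanishes, say $y=0<x$, then \eqref{xy} reduces to $x^{s}\le K(s)x^{s}$, which holds since $K(s)\ge 1$. Hence one may assume $x,y>0$ throughout the rest of the argument.

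For $0\le s\le 1$ I would prove the stronger subadditivity $(x+y)^{s}\le x^{s}+y^{s}$, which is exactly \eqref{xy} in this range since $K(s)=1$ there. Both sides are positively homogeneous of degree $s$ in $(x,y)$, so it suffices to take $x+y=1$; then $x,y\in(0,1)$ and, since $s\le 1$, one has $x^{s}\ge x$ and $y^{s}\ge y$, whence $x^{s}+y^{s}\ge x+y=1=(x+y)^{s}$.

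For $s\ge 1$ the function $u\mapsto u^{s}$ is convex on $[0,\infty)$, so Jensen's inequality applied to the two points $x,y$ with equal weights yields $\bigl(\tfrac{x+y}{2}\bigr)^{s}\le\tfrac12\bigl(x^{s}+y^{s}\bigr)$, i.e. $(x+y)^{s}\le 2^{s-1}(x^{s}+y^{s})=K(s)(x^{s}+y^{s})$. Equivalently, normalizing $x+y=1$ and writing $x=\lambda$, $y=1-\lambda$ with $\lambda\in(0,1)$, the claim becomes $\lambda^{s}+(1-\lambda)^{s}\ge 2^{1-s}$; the left-hand side is a sum of convex functions of $\lambda$, hence convex, and it is symmetric about $\lambda=\tfrac12$, so it attains its minimum there with value $2\cdot 2^{-s}=2^{1-s}$.

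There is no real obstacle in this lemma: the analytic content is merely the convexity (resp. concavity) of $t\mapsto t^{s}$, and the only care needed is bookkeeping at the boundary exponents $s=0,1$ and at the degenerate points, so as to confirm that excluding the single triple $(0,0,0)$ is enough to make the statement true as stated.
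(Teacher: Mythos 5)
Your proof is correct. The paper states this lemma without proof, treating it as the standard elementary power inequality (sometimes called the $c_r$-inequality); your case split — subadditivity of $t\mapsto t^{s}$ for $0\le s\le 1$ via degree-$s$ homogeneity, and convexity/Jensen for $s\ge 1$ — is exactly the textbook argument, and your handling of the degenerate triples correctly confirms that excluding only $(0,0,0)$ suffices.
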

We then introduce a key lemma concerning {tame} property. Recalling Lemma \ref{ts}, we have

\begin{lem}
	Let $K(s)$ be given by \eqref{cs}. Then for any $n\ge1$ and $s\ge0$, we have
	\begin{align}
	\label{s1}\left\|\prod_{i=1}^{n}\mathcal{N}_{i}\right\|_{R,0}\le&\prod_{i=1}^{n}\|\mathcal{N}_{i}\|_{R,0},\\
	\label{t1}\left\|\prod_{i=1}^{n}\mathcal{N}_{i}\right\|_{R,s}\le&(K(s))^{n-1}\sum_{i=1}^{n}\left(\prod_{j\ne i}\|\mathcal{N}_{j}\|_{R,0}\right)\|\mathcal{N}_{i}\|_{R,s}.
	\end{align}
	In particular,
	\begin{align}
	\label{s2}\|\mathcal{N}^{n}\|_{R,0}\le&\|\mathcal{N}\|_{R,0}^{n},\\
	\label{t2}\|\mathcal{N}^{n}\|_{R,s}\le&n(K(s))^{n-1}\|\mathcal{N}\|_{R,0}^{n-1}\|\mathcal{N}\|_{R,s}.
	\end{align}
\end{lem}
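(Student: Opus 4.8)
The plan is to prove the last lemma -- the tame estimates \eqref{s1}--\eqref{t2} for products -- by induction on $n$, using Lemma \ref{ts} and the elementary inequality \eqref{xy} of Lemma \ref{nl} as the only inputs. The submultiplicativity bounds \eqref{s1} and \eqref{s2} are immediate: \eqref{s} in Lemma \ref{ts} gives $\|\mathcal{N}_1\mathcal{N}_2\|_{R,0}\le\|\mathcal{N}_1\|_{R,0}\|\mathcal{N}_2\|_{R,0}$, and iterating this (associating the product as $(\prod_{i=1}^{n-1}\mathcal{N}_i)\mathcal{N}_n$) yields \eqref{s1}; \eqref{s2} is the special case $\mathcal{N}_i=\mathcal{N}$.

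For \eqref{t1} I would argue as follows. The case $n=1$ is trivial, and $n=2$ is exactly \eqref{tame}. For the inductive step, write $\prod_{i=1}^{n}\mathcal{N}_i=\big(\prod_{i=1}^{n-1}\mathcal{N}_i\big)\mathcal{N}_n$ and apply \eqref{tame} with $\mathcal{M}_1=\prod_{i=1}^{n-1}\mathcal{N}_i$, $\mathcal{M}_2=\mathcal{N}_n$:
\begin{align*}
\Big\|\prod_{i=1}^{n}\mathcal{N}_i\Big\|_{R,s}\le K(s)\Big(\Big\|\prod_{i=1}^{n-1}\mathcal{N}_i\Big\|_{R,0}\|\mathcal{N}_n\|_{R,s}+\Big\|\prod_{i=1}^{n-1}\mathcal{N}_i\Big\|_{R,s}\|\mathcal{N}_n\|_{R,0}\Big).
\end{align*}
Then bound the first $\|\cdot\|_{R,0}$ factor by \eqref{s1} and the $\|\cdot\|_{R,s}$ factor by the induction hypothesis for $n-1$. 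The first term produces $K(s)\big(\prod_{j\ne n}\|\mathcal{N}_j\|_{R,0}\big)\|\mathcal{N}_n\|_{R,s}$, which is dominated by the $i=n$ summand of the claimed bound since $K(s)\le K(s)^{n-1}$ for $n\ge2$ (using $K(s)\ge1$ from \eqref{cs}); the second term gives $K(s)\cdot K(s)^{n-2}\sum_{i=1}^{n-1}\big(\prod_{j\ne i,\,j\le n-1}\|\mathcal{N}_j\|_{R,0}\big)\|\mathcal{N}_i\|_{R,s}\cdot\|\mathcal{N}_n\|_{R,0}=K(s)^{n-1}\sum_{i=1}^{n-1}\big(\prod_{j\ne i}\|\mathcal{N}_j\|_{R,0}\big)\|\mathcal{N}_i\|_{R,s}$, which supplies the remaining $i=1,\dots,n-1$ summands. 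Adding the two contributions gives exactly \eqref{t1}. Finally, \eqref{t2} follows from \eqref{t1} by setting all $\mathcal{N}_i=\mathcal{N}$, so that each of the $n$ summands equals $\|\mathcal{N}\|_{R,0}^{n-1}\|\mathcal{N}\|_{R,s}$.

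I do not expect any serious obstacle here; the lemma is a purely combinatorial consequence of the binary tame estimate \eqref{tame}. The only point requiring a little care is the bookkeeping in the inductive step -- specifically, verifying that the constant $K(s)$ picked up from the outer application of \eqref{tame} combines correctly with the $K(s)^{n-2}$ from the induction hypothesis to give precisely $K(s)^{n-1}$, and that the "extra" first term (which only carries a single power of $K(s)$) is absorbed into the final bound using $K(s)\ge1$. Lemma \ref{nl} itself is elementary (convexity of $t\mapsto t^s$ for $s\ge1$, subadditivity for $0\le s\le1$) and, as with Lemma \ref{ts}, I would relegate its routine verification to the appendix; it is not actually invoked in the induction above once \eqref{tame} is available, but it is the ingredient behind \eqref{tame} and is recorded here for use elsewhere in the iteration.
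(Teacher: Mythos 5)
Your proof is correct and follows exactly the route indicated in the paper: induction on $n$, applying the binary tame estimate \eqref{tame} at each step and absorbing the extra factor of $K(s)$ using $K(s)\ge1$; the paper simply defers the same argument to Lemma~4.2 of \cite{Shi22} rather than writing it out. The bookkeeping in your inductive step is right, and the derivations of \eqref{s1}, \eqref{s2}, and \eqref{t2} as specializations are standard and correct.
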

\begin{rem}
	In fact, imitating the proof in Lemma \ref{ts}, we can get a better estimate
	\begin{align*}
	\left\|\prod_{i=1}^{n}\mathcal{N}_{i}\right\|_{R,s}\le&K(s,n)\sum_{i=1}^{n}\left(\prod_{j\ne i}\|\mathcal{N}_{j}\|_{R,0}\right)\|\mathcal{N}_{i}\|_{R,s},
	\end{align*}
where $K(s,n)=\max\{1,n^{s-1}\}$. It may be useful elsewhere, but it is not necessary in this paper.
\end{rem}
	\begin{proof}
	The proof follows directly from an induction (on $n$) argument  using Lemma \ref{ts}. We refer to the proof of Lemma 4.2 in \cite{Shi22} for details. 		
	\end{proof}

Under the above preparations, we can prove
\begin{lem}
We have 
\begin{align*}
e^{\mathcal{W}'}(V'+\mathcal{M}')e^{-\mathcal{W}'}=\bar{V'}+\mathcal{R}',
\end{align*}
where  $\mathcal{R}'\in\mathscr{U}_{R'-Q',s}^{\bm\omega}$ for $\forall\ \theta\ge0$, and
\begin{align}
\nonumber&\ \ \ \ \|\mathcal{R}'\|_{R'-Q',s}\\
\nonumber&\le4K(s)e^{2K(s)\|\mathcal{W}'\|_{R'-Q',0}}\left(\|\mathcal{W}'\|_{R'-Q',0}\|\mathcal{M}'\|_{R',s}+\|\mathcal{M}'\|_{R',0}\|\mathcal{W}'\|_{R'-Q',s}\right)\\
\label{r}&\ \ +\|(I-S_{\theta})\mathcal{M}'\|_{R',s}.
\end{align}
\end{lem}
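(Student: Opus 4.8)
The plan is to expand $e^{\mathcal{W}'}(V'+\mathcal{M}')e^{-\mathcal{W}'}$ using the Lie series / commutator expansion and identify the terms that cancel against the definition of $\mathcal{W}'$ in \eqref{defw}. Write $e^{\mathcal{W}'}Xe^{-\mathcal{W}'}=\sum_{k\ge0}\frac{1}{k!}\mathrm{ad}_{\mathcal{W}'}^{k}(X)$ where $\mathrm{ad}_{\mathcal{W}'}(X)=\mathcal{W}'X-X\mathcal{W}'$. Applying this with $X=V'+\mathcal{M}'$ and separating the zeroth and first order parts gives
\begin{align*}
e^{\mathcal{W}'}(V'+\mathcal{M}')e^{-\mathcal{W}'}=V'+\mathcal{M}'+[\mathcal{W}',V']+\Big([\mathcal{W}',\mathcal{M}']+\sum_{k\ge2}\tfrac{1}{k!}\mathrm{ad}_{\mathcal{W}'}^{k}(V'+\mathcal{M}')\Big).
\end{align*}
Now decompose $\mathcal{M}'=\mathcal{M}'(\cdot,\bm0)\bm1+\tilde{\mathcal{M}'}$ and further split $\tilde{\mathcal{M}'}=S_\theta\tilde{\mathcal{M}'}+(I-S_\theta)\tilde{\mathcal{M}'}$; since $\tilde{\mathcal{M}'}$ vanishes at $\bm n=\bm0$ we have $(I-S_\theta)\tilde{\mathcal{M}'}=(I-S_\theta)\mathcal{M}'$. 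The point is that $\mathcal{M}'(z,\bm0)\bm1$ together with $V'$ gives $\bar{V'}$ (by \eqref{vb}), and by the definition \eqref{defw} of $\mathcal{W}'$ one has the exact identity $[\mathcal{W}',\bar{V'}]=-S_\theta\tilde{\mathcal{M}'}$; I would verify this componentwise using $(\mathcal{W}'\bar{V'}-\bar{V'}\mathcal{W}')(z,\bm n)=\mathcal{W}'(z,\bm n)(\bar{V'}(z-\bm n\cdot\bm\omega)-\bar{V'}(z))$, which follows from the product rule \eqref{prodm} applied to a diagonal kernel. Since $[\mathcal{W}',V']=[\mathcal{W}',\bar{V'}]-[\mathcal{W}',\mathcal{M}'(\cdot,\bm0)\bm1]$ and $[\mathcal{W}',c\bm1]=0$ for a scalar-valued diagonal kernel $c$, we get $[\mathcal{W}',V']=-S_\theta\tilde{\mathcal{M}'}$. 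Substituting back, the $\mathcal{M}'$ contribution is $\mathcal{M}'(\cdot,\bm0)\bm1+S_\theta\tilde{\mathcal{M}'}+(I-S_\theta)\mathcal{M}'$, and the first and third terms combine with $V'$ to give $\bar{V'}$ plus a leftover $(I-S_\theta)\mathcal{M}'$, while the $S_\theta\tilde{\mathcal{M}'}$ cancels exactly against $[\mathcal{W}',V']$. Hence
\begin{align*}
\mathcal{R}'=(I-S_\theta)\mathcal{M}'+[\mathcal{W}',\mathcal{M}']+\sum_{k\ge2}\tfrac{1}{k!}\mathrm{ad}_{\mathcal{W}'}^{k}(V'+\mathcal{M}'),
\end{align*}
and it remains to bound this in $\|\cdot\|_{R'-Q',s}$.

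Next I would rewrite the higher-order tail in a way amenable to the tame estimate. Using $[\mathcal{W}',V']=-S_\theta\tilde{\mathcal{M}'}$ again, one has for $k\ge2$ that $\mathrm{ad}_{\mathcal{W}'}^{k}(V')=\mathrm{ad}_{\mathcal{W}'}^{k-1}(-S_\theta\tilde{\mathcal{M}'})$, so the whole $V'$-tail $\sum_{k\ge2}\frac{1}{k!}\mathrm{ad}_{\mathcal{W}'}^{k}(V')$ is a convergent series in $\mathrm{ad}_{\mathcal{W}'}$ applied to $-S_\theta\tilde{\mathcal{M}'}$, i.e.\ it equals $\big(\sum_{k\ge1}\frac{1}{(k+1)!}\mathrm{ad}_{\mathcal{W}'}^{k}\big)(-S_\theta\tilde{\mathcal{M}'})$; similarly $[\mathcal{W}',\mathcal{M}']+\sum_{k\ge2}\frac{1}{k!}\mathrm{ad}_{\mathcal{W}'}^{k}(\mathcal{M}')=\big(\sum_{k\ge1}\frac{1}{k!}\mathrm{ad}_{\mathcal{W}'}^{k}\big)(\mathcal{M}')$. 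So $\mathcal{R}'-(I-S_\theta)\mathcal{M}'$ is a sum of two series of the form $\sum_{k\ge1}c_k\,\mathrm{ad}_{\mathcal{W}'}^{k}(\mathcal{N})$ with $|c_k|\le 1/k!$ and $\mathcal{N}\in\{\mathcal{M}',-S_\theta\tilde{\mathcal{M}'}\}$, both of which satisfy $\|\mathcal{N}\|_{R',s}\le\|\mathcal{M}'\|_{R',s}$ and $\|\mathcal{N}\|_{R',0}\le\|\mathcal{M}'\|_{R',0}$ by \eqref{sm1}. Then I expand $\mathrm{ad}_{\mathcal{W}'}^{k}(\mathcal{N})$ as a signed sum of $2^{k}$ products of $k$ copies of $\mathcal{W}'$ and one copy of $\mathcal{N}$, apply the product tame estimate \eqref{t1} (with $n=k+1$ factors, so a prefactor $K(s)^{k}$) and the submultiplicativity \eqref{s1} for the $\|\cdot\|_{R',0}$-parts, obtaining
\begin{align*}
\|\mathrm{ad}_{\mathcal{W}'}^{k}(\mathcal{N})\|_{R'-Q',s}\le 2^{k}K(s)^{k}\big(k\|\mathcal{W}'\|_{R'-Q',0}^{k-1}\|\mathcal{W}'\|_{R'-Q',s}\|\mathcal{N}\|_{R',0}+\|\mathcal{W}'\|_{R'-Q',0}^{k}\|\mathcal{N}\|_{R',s}\big).
\end{align*}
Summing against $1/k!$ (or $1/(k+1)!$) produces the exponential factor $e^{2K(s)\|\mathcal{W}'\|_{R'-Q',0}}$: indeed $\sum_{k\ge1}\frac{(2K(s))^k}{k!}(k t^{k-1})=2K(s)e^{2K(s)t}$ and $\sum_{k\ge1}\frac{(2K(s))^k}{k!}t^{k}\le e^{2K(s)t}$. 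Collecting the two series and using $\|(I-S_\theta)\mathcal{M}'\|_{R'-Q',s}\le\|(I-S_\theta)\mathcal{M}'\|_{R',s}$ (the $\|\cdot\|_{R,s}$ norm is monotone in $R$), one arrives at exactly \eqref{r}, with the crude constant $4K(s)$ absorbing the factor-$2$ from having two series and the $2K(s)$ from the first-order-in-$\mathcal{W}'$ derivative term. One should also note $\mathcal{R}'\in\mathscr{U}_{R'-Q',s}^{\bm\omega}$: the series converges in this Banach space by the same bounds, and each term lies in it since $\mathcal{W}'\in\mathscr{U}_{R'-Q',s}^{\bm\omega}$ by Lemma \ref{hollem} and $\mathcal{M}'\in\mathscr{U}_{R',s}^{\bm\omega}\subset\mathscr{U}_{R'-Q',s}^{\bm\omega}$.

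The step I expect to be the main obstacle is the bookkeeping around the exact cancellation, namely verifying the commutator identity $[\mathcal{W}',\bar{V'}](z,\bm n)=\mathcal{W}'(z,\bm n)\big(\bar{V'}(z-\bm n\cdot\bm\omega)-\bar{V'}(z)\big)=-(S_\theta\tilde{\mathcal{M}'})(z,\bm n)$ directly from the noncommutative product \eqref{prodm} and the involution-twisted structure, and making sure the scalar diagonal term $\mathcal{M}'(\cdot,\bm0)\bm1$ is correctly reassigned from $\mathcal{M}'$ to $V'$ so that $\bar{V'}$ (not $V'$) appears in the commutator — one must be careful that $[\mathcal{W}',\mathcal{M}'(\cdot,\bm0)\bm1]=0$, which uses that this diagonal kernel is a \emph{scalar} multiple of $\bm1$ only in the sense that its value depends on $z$; in fact $[\mathcal{W}',f\bm1](z,\bm n)=\mathcal{W}'(z,\bm n)(f(z-\bm n\cdot\bm\omega)-f(z))$ need \emph{not} vanish for general $f$, so the correct statement is that this term is nonzero and must be kept, i.e.\ $[\mathcal{W}',V']=[\mathcal{W}',\bar{V'}]-[\mathcal{W}',\mathcal{M}'(\cdot,\bm0)\bm1]$, and the stray term $[\mathcal{W}',\mathcal{M}'(\cdot,\bm0)\bm1]$ is simply folded into the $\sum_{k\ge1}\mathrm{ad}_{\mathcal{W}'}^k(\mathcal{M}')$-type series as part of $\mathrm{ad}_{\mathcal{W}'}(\mathcal{M}')$ evaluated on the $\bm n=\bm0$ component — so no true obstruction, only care is needed. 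Everything else is a routine application of \eqref{t1}, \eqref{s1} and \eqref{sm1} together with the elementary series summation; I would present the commutator identity as a short displayed computation and then quote the tame lemmas for the estimate.
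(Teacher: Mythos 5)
Your approach is the paper's approach: expand the conjugation by the ad-series, use the homological identity to kill the first-order term, pull out the $(I-S_\theta)$-tail, and close with the tame estimates \eqref{t1}, \eqref{s1}, \eqref{sm1}. The quantitative half of your argument (bounding $\|\mathrm{ad}_{\mathcal{W}'}^{k}(\mathcal{N})\|_{R'-Q',s}$ by $2^{k}K(s)^{k}$ times the tame combination and summing to produce the $e^{2K(s)\|\mathcal{W}'\|_{R'-Q',0}}$ factor) is exactly what the paper does and is correct.

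The algebraic half has a real gap which you half-notice but do not resolve. The homological identity is $[\mathcal{W}',\bar{V'}]=-S_\theta\tilde{\mathcal{M}'}$, not $[\mathcal{W}',V']=-S_\theta\tilde{\mathcal{M}'}$: it is $\bar{V'}$, not $V'$, that sits in the denominator of \eqref{defw}, and, as you yourself compute, $[\mathcal{W}',f\bm1](z,\bm n)=\mathcal{W}'(z,\bm n)(f(z-\bm n\cdot\bm\omega)-f(z))$ does not vanish for the diagonal piece $f=\mathcal{M}'(\cdot,\bm0)$. Your ``fold it in'' remark correctly restores the $k=1$ cancellation: $[\mathcal{W}',V'+\mathcal{M}']=[\mathcal{W}',\bar{V'}+\tilde{\mathcal{M}'}]=-S_\theta\tilde{\mathcal{M}'}+[\mathcal{W}',\tilde{\mathcal{M}'}]$, so the first-order remainder is $[\mathcal{W}',\tilde{\mathcal{M}'}]$, not $[\mathcal{W}',\mathcal{M}']$. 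But you then reuse the false identity to rewrite the higher tail, claiming $\mathrm{ad}_{\mathcal{W}'}^{k}(V')=\mathrm{ad}_{\mathcal{W}'}^{k-1}(-S_\theta\tilde{\mathcal{M}'})$ for $k\ge2$; this is wrong, since the true statement is $\mathrm{ad}_{\mathcal{W}'}^{k}(\bar{V'})=\mathrm{ad}_{\mathcal{W}'}^{k-1}(-S_\theta\tilde{\mathcal{M}'})$. Tracking the discrepancy shows your proposed formula
\begin{align*}
\mathcal{R}'\overset{?}{=}(I-S_\theta)\mathcal{M}'+\sum_{k\ge1}\frac{1}{(k+1)!}\mathrm{ad}_{\mathcal{W}'}^{k}(-S_\theta\tilde{\mathcal{M}'})+\sum_{k\ge1}\frac{1}{k!}\mathrm{ad}_{\mathcal{W}'}^{k}(\mathcal{M}')
\end{align*}
differs from the true remainder by exactly $\sum_{k\ge1}\frac{1}{k!}\mathrm{ad}_{\mathcal{W}'}^{k}(\mathcal{M}'(\cdot,\bm0)\bm1)$, which is nonzero in general: you are bounding the wrong quantity. (The numerical bound would coincidentally survive because $\|\tilde{\mathcal{M}'}\|_{R',s}\le\|\mathcal{M}'\|_{R',s}$, but the identity for $\mathcal{R}'$ as written is false.) The clean resolution is the paper's: regroup $V'+\mathcal{M}'=\bar{V'}+\tilde{\mathcal{M}'}$ \emph{before} applying the ad-series, so that $\mathrm{ad}_{\mathcal{W}'}^{k}(\bar{V'})=-\mathrm{ad}_{\mathcal{W}'}^{k-1}(S_\theta\tilde{\mathcal{M}'})$ holds for every $k\ge1$ with no stray diagonal commutators, and the remainder collapses to
\begin{align*}
\mathcal{R}'=\sum_{k\ge1}\frac{A_{\mathcal{W}'}^{k}(S_\theta\tilde{\mathcal{M}'})}{(k-1)!(k+1)}+\sum_{k\ge0}\frac{A_{\mathcal{W}'}^{k}((I-S_\theta)\tilde{\mathcal{M}'})}{k!},
\end{align*}
to which your tame-estimate computation then applies verbatim.
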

\begin{proof}
	We define for any $\mathcal{P}=(\mathcal{P}(z,\bm n))_{\bm n\in\Z^d, z\in {D}_{R}}$ and $k\geq 0,$
	\begin{align*}
	A_{\mathcal{W}'}^{k}(\mathcal{P})\equiv\sum_{i=0}^{k}\binom{k}{i}(\mathcal{W}')^{k-i}\mathcal{P}(-\mathcal{W}')^{i}.
	\end{align*}
	 Formally,   we have  by using  the Taylor series expansion
	\begin{align*}
	e^{\mathcal{W}'}\mathcal{P}e^{-\mathcal{W}'}=\sum_{k=0}^{\infty}\frac{A_{\mathcal{W}'}^{k}(\mathcal{P})}{k!}.
	\end{align*}
From \eqref{defw},  we can obtain
\begin{align*}
\mathcal{W}'\bar{V'}-\bar{V'}\mathcal{W}'=-S_{\theta}\tilde{\mathcal{M}'},
\end{align*}
and then for $k\geq1$,
	\begin{align*}
	A_{\mathcal{W}'}^{k}(\bar{V})=A_{\mathcal{W}'}^{k-1}(-S_{\theta}\tilde{\mathcal{M}'})=-A_{\mathcal{W}'}^{k-1}(S_{\theta}\tilde{\mathcal{M}'}).
	\end{align*}
As a result,  we get
	\begin{align}
	\nonumber &\ \ \ \ e^{\mathcal{W}'}(V'+\mathcal{M}')e^{-\mathcal{W}'}=e^{\mathcal{W}'}(\bar{V'}+\tilde{\mathcal{M}'})e^{-\mathcal{W}'}\\
	\nonumber&=\bar{V'}+\sum_{k=2}^{\infty}\frac{A_{\mathcal{W}'}^{k}(\bar{V'})}{k!}+\sum_{k=1}^{\infty}\frac{A^{k}_{\mathcal{W}'}(S_{\theta}\tilde{\mathcal{M}'})}{k!}+\sum_{k=0}^{\infty}\frac{A_{\mathcal{W}'}^{k}((I-S_{\theta})\tilde{\mathcal{M}'})}{k!}\\
	\nonumber&=\bar{V'}+\sum_{k=1}^{\infty}\frac{A_{\mathcal{W}'}^{k}(S_{\theta}\tilde{\mathcal{M}'})}{(k-1)!(k+1)}+\sum_{k=0}^{\infty}\frac{A_{\mathcal{W}'}^{k}((I-S_{\theta})\tilde{\mathcal{M}'})}{k!}\\
	&=\bar{V'}+\mathcal{R'}.\label{vmbar}
	\end{align}

	Next, we try to establish \eqref{r}. Using \eqref{t1} yields for $k\geq 1,$
	\begin{align*}
	&\ \ \ \ \left \|A_{\mathcal{W}'}^{k}(\mathcal{P})\right \|_{R'-Q',s}\\
	&\le \sum_{i=0}^{k}\left \|\binom{k}{i}(\mathcal{W}')^{k-i}\mathcal{P}(-\mathcal{W}')^{i}\right \|_{R'-Q',s}\\
	&\le \sum_{i=0}^{k}\binom{k}{i}(K(s))^{k}\left(\|\mathcal{W}'\|_{R'-Q',0}^{k}\|\mathcal{P}\|_{R'-Q',s}+\|\mathcal{W}'\|_{R'-Q',0}^{k-1}\|\mathcal{P}\|_{R'-Q',0}\|\mathcal{W}'\|_{R'-Q',s}\right)\\
	&=2^k(K(s))^{k}\left(\|\mathcal{W}'\|_{R'-Q',0}^{k}\|\mathcal{P}\|_{R'-Q',s}+\|\mathcal{W}'\|_{R'-Q',0}^{k-1}\|\mathcal{P}\|_{R'-Q',0}\|\mathcal{W}'\|_{R'-Q',s}\right).
	\end{align*}
Therefore, we have
\begin{align}
\nonumber&\ \ \ \  \left\|\sum_{k=1}^{\infty}\frac{A_{\mathcal{W}'}^{k}(S_{\theta}\tilde{\mathcal{M}'})}{(k-1)!(k+1)}\right\|_{R'-Q',s}\\
\nonumber\le&2K(s)\left(\|\mathcal{W}'\|_{R'-Q',0}\|S_{\theta}\tilde{\mathcal{M}'}\|_{R'-Q',s}+\|S_{\theta}\tilde{\mathcal{M}'}\|_{R'-Q',0}\|\mathcal{W}'\|_{R'-Q',s}\right)\\
\nonumber&\ \ \times\sum_{k=1}^{\infty}\frac{\left(2K(s)\right)^{k-1}\left \|\mathcal{W}' \right \|_{R'-Q',0}^{k-1}}{(k-1)!}\\
\nonumber\le&2K(s)e^{2K(s)\|\mathcal{W}'\|_{R'-Q',0}}\left(\|\mathcal{W}'\|_{R'-Q',0}\|S_{\theta}\tilde{\mathcal{M}'}\|_{R'-Q',s}+\|S_{\theta}\tilde{\mathcal{M}'}\|_{R'-Q',0}\|\mathcal{W}'\|_{R'-Q',s}\right)\\
\label{asm}\le&2K(s)e^{2K(s)\|\mathcal{W}'\|_{R'-Q',0}}\left(\|\mathcal{W}'\|_{R'-Q',0}\|S_{\theta}\mathcal{M}'\|_{R',s}+\|S_{\theta}\mathcal{M}'\|_{R',0}\|\mathcal{W}'\|_{R'-Q',s}\right).
\end{align}
Similarly, we get
\begin{align}
\nonumber&\ \ \ \ \left\|\sum_{k=0}^{\infty}\frac{A_{\mathcal{W}'}^{k}((I-S_{\theta})\tilde{\mathcal{M}'})}{k!}\right\|_{R'-Q',s}\\
\nonumber&\le2K(s)\|(I-S_{\theta})\tilde{\mathcal{M}'}\|_{R'-Q',s}\|\mathcal{W}'\|_{R'-Q',0}\sum_{k=1}^{\infty}\frac{2^{k-1}(K(s))^{k-1}\|\mathcal{W}'\|_{R'-Q',0}^{k-1}}{(k-1)!}\\
&\ \ +\nonumber 2K(s)\|(I-S_{\theta})\tilde{\mathcal{M}'}\|_{R'-Q',0}\|\mathcal{W}'\|_{R'-Q',s}\sum_{k=1}^{\infty}\frac{2^{k-1}(K(s))^{k-1}\|\mathcal{W}'\|_{R'-Q',0}^{k-1}}{(k-1)!}\\
\nonumber&\ \  +\|(I-S_{\theta})\tilde{\mathcal{M}'}\|_{R'-Q',s}\\
\nonumber&\le2K(s)e^{2K(s)\|\mathcal{W}'\|_{R'-Q',0}}\left(\|(I-S_{\theta})\mathcal{M}'\|_{R',s}\|\mathcal{W}'\|_{R'-Q',0}+\|(I-S_{\theta})\mathcal{M}'\|_{R',0}\|\mathcal{W}'\|_{R'-Q',s}\right)\\
\label{a1sm}&\  \ +\|(I-S_{\theta})\mathcal{M}'\|_{R',s}.
\end{align}
Recalling the Definition \ref{sm}  and \eqref{norm}, we obtain
\begin{align*}
\|S_{\theta}\mathcal{M}'\|_{R',s}+\|(I-S_{\theta})\mathcal{M}'\|_{R',s}&\le2\|\mathcal{M}'\|_{R',s},\\
\|S_{\theta}\mathcal{M}'\|_{R',0}+\|(I-S_{\theta})\mathcal{M}'\|_{R',0}&\le2\|\mathcal{M}'\|_{R',0},
\end{align*}
which together with \eqref{vmbar},  \eqref{asm} and \eqref{a1sm} implies
	\begin{align*}
	&\ \ \ \  \|\mathcal{R}'\|_{R'-Q',s}\\
	&\le\left\|\sum_{k=1}^{\infty}\frac{A_{\mathcal{W}'}^{k}(S_{\theta}\tilde{\mathcal{M}'})}{(k-1)!(k+1)}\right\|_{R'-Q',s}+\left\|\sum_{k=0}^{\infty}\frac{A_{\mathcal{W}'}^{k}((I-S_{\theta})\tilde{\mathcal{M}'})}{k!}\right\|_{R'-Q',s}\\
	&\le 2K(s)e^{2K(s)\|\mathcal{W}'\|_{R'-Q',0}}\left(\|\mathcal{W}'\|_{R'-Q',0}\|S_{\theta}\mathcal{M}'\|_{R',s}+\|S_{\theta}\mathcal{M}'\|_{R',0}\|\mathcal{W}'\|_{R'-Q',s}\right)\\
	&\ \ +2K(s)e^{2K(s)\|\mathcal{W}'\|_{R'-Q',0}}\left(\|(I-S_{\theta})\mathcal{M}'\|_{R',s}\|\mathcal{W}'\|_{R'-Q',0}+\|(I-S_{\theta})\mathcal{M}'\|_{R',0}\|\mathcal{W}'\|_{R'-Q',s}\right)\\
	&\ \  +\|(I-S_{\theta})\mathcal{M}'\|_{R',s}\\
	&\leq4K(s)e^{2K(s)\|\mathcal{W}'\|_{R'-Q',0}}\left(\|\mathcal{W}'\|_{R'-Q',0}\|\mathcal{M}'\|_{R',s}+\|\mathcal{M}'\|_{R',0}\|\mathcal{W}'\|_{R'-Q',s}\right)\\
	&\ \ +\|(I-S_{\theta})\mathcal{M}'\|_{R',s}.
	\end{align*}

This completes the proof. 
\end{proof}

\subsection{The Nash-Moser iteration theorem}
In this subsection we try to establish the iteration theorem.  We first introduce some parameters.
\begin{itemize}
		\item Fix  $\delta>0$ and $\alpha_0>0$.  Let
	\begin{align}\label{alpha}
		\alpha>\alpha_{0}+\tau+4\delta.
	\end{align}
	\item Fix any  $\alpha_{1}=\alpha_1(\alpha,\delta)>0$  so that 
	\begin{align}\label{0}
		\alpha_{1}>2\alpha+\delta. 
	\end{align}
	\item Let $\Theta=\Theta(\alpha_{0},\alpha, \delta)>0$ satisfy 
	\begin{align}
	\label{sml1}\Theta^{-\delta}&\le\frac{1}{4}e^{-2K(\alpha_{1})}\le\frac{1}{4},\\
	\label{sml2}\Theta^{-\alpha_{0}}&\le\frac{1}{4}.
	\end{align}
     \item Let $\eta_{0}=\eta_{0}(R,\Theta,\alpha,\alpha_{0}, \g,\tau,|V|_{R},\delta)>0$ be the minimum value of $\eta$  satisfying 
     \begin{align}
     \label{tv}\eta^{\frac{\alpha_{0}-\alpha}{2}}&\le\frac{R|V|_{R}}{8}\left(1-\Theta^{\frac{\alpha_{0}-\alpha}{2}}\right),\ \eta^{\frac{\alpha_{0}-\alpha}{2}}\le1-\Theta^{\frac{\alpha_{0}-\alpha}{2}},\\
     	\label{Tg}\eta^{-\delta}&\le\frac{\g|V|_{R}}{4\Theta^{\tau}},\ \eta^{-\delta}\le(32K(\alpha_{1}))^{-1}e^{-2K(\alpha_{1})}\Theta^{\alpha_0-\alpha},\\
     	\label{tT}\eta^{-1}&\le\Theta^{-1},\ \eta^{\alpha_{0}-\alpha}\le\Theta^{\alpha_{0}-\alpha-3\delta},\\
     	\label{12}\eta^{-\delta}&\le(12(K(\alpha_{1}))^{2})^{-1}.
     \end{align}
	\item Let $\ep_{0}=\ep_{0}(R,\alpha,\alpha_{0},\g,\tau,|V|_{R},\delta)>0$  satisfy
	\begin{align}\label{e0}
	\ep_{0}\le\eta_{0}^{\alpha_{0}-\alpha}\le1.
	\end{align}
		\item Let $\theta_{0}\ge\eta_{0}$ and $\theta_{l}=\theta_{0}\Theta^{l}$. Denote by  $\mathcal{M}^{(l)}$ the $l$-section of $\mathcal{M}$ w.r.t  $\{\theta_{l}\}_{l=0}^{\infty}$.
\item Let
\begin{align}
	\label{ql}R_{0}&=R,\ Q_{l}=\frac{4}{|V|_{R}}\theta_{l}^{\frac{\alpha_{0}-\alpha}{2}},\\
	\nonumber R_{l+1}&=R_{l}-Q_{l}\ge\frac{R}{2}\left(1+\Theta^{(\frac{\alpha_{0}-\alpha}{2})(l+1)}\right) ({\rm since}\  \eqref{tv}).
\end{align}
\end{itemize}
We start with a useful lemma. 
\begin{lem}\label{ew}
 For all $\mathcal{N}_{m}\in\mathscr{U}_{R',s}^{\bm\omega}$ with $1\le m\le k$, we have
 \begin{align}\label{Q}
 \left\|e^{\mathcal{N}_{1}}\cdots e^{\mathcal{N}_{k}}-\bm 1\right\|_{R',s}\le e^{K(s)\left(\sum\limits_{m=1}^{k}\|\mathcal{N}_{m}\|_{R',0}\right)}\left(\sum_{m=1}^{k}\|\mathcal{N}_{m}\|_{R',s}\right).
 \end{align}
\end{lem}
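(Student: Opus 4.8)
The plan is to expand every exponential as a power series and estimate the resulting multiple sum term by term using the tame product bound \eqref{t1}. First I would record that each exponential is well defined in $\mathscr{U}_{R',s}^{\bm\omega}$: by \eqref{t2} one has $\sum_{j\ge 1}\|\mathcal{N}_m^{j}\|_{R',s}/j!\le\|\mathcal{N}_m\|_{R',s}\,e^{K(s)\|\mathcal{N}_m\|_{R',0}}<\infty$, so $e^{\mathcal{N}_m}=\bm 1+\sum_{j\ge 1}\mathcal{N}_m^{j}/j!$ converges in the Banach space $\mathscr{U}_{R',s}^{\bm\omega}$. The computation below will show that the full family $\{\mathcal{N}_1^{j_1}\cdots\mathcal{N}_k^{j_k}/(j_1!\cdots j_k!)\}_{(j_1,\dots,j_k)\in\N^k}$ is absolutely summable in the $\|\cdot\|_{R',s}$-norm, which (since multiplication on $\mathscr{U}_{R',s}^{\bm\omega}$ is bounded bilinear by \eqref{tame}) justifies multiplying out and rearranging, so that
\[
e^{\mathcal{N}_1}\cdots e^{\mathcal{N}_k}-\bm 1=\sum_{(j_1,\dots,j_k)\ne\bm 0}\frac{\mathcal{N}_1^{j_1}\cdots\mathcal{N}_k^{j_k}}{j_1!\cdots j_k!}.
\]

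Next I would bound a single summand. Put $N=j_1+\cdots+j_k\ge 1$ and regard $\mathcal{N}_1^{j_1}\cdots\mathcal{N}_k^{j_k}$ as a product of $N$ nontrivial factors; then \eqref{t1} bounds its $\|\cdot\|_{R',s}$-norm by $(K(s))^{N-1}$ times the sum over the $N$ slots of (the product of the $\|\cdot\|_{R',0}$-norms of the other $N-1$ factors) times (the $\|\cdot\|_{R',s}$-norm of the chosen factor). Grouping the $j_m$ slots occupied by a copy of $\mathcal{N}_m$ (an empty group unless $j_m\ge1$), using $\prod_{l'\ne l}\|\mathcal{P}_{l'}\|_{R',0}=\|\mathcal{N}_m\|_{R',0}^{\,j_m-1}\prod_{m'\ne m}\|\mathcal{N}_{m'}\|_{R',0}^{\,j_{m'}}$ for such a slot, and splitting $(K(s))^{N-1}=(K(s))^{j_m-1}\prod_{m'\ne m}(K(s))^{j_{m'}}$, one arrives (after dividing by $j_1!\cdots j_k!$ and using $j_m/j_m!=1/(j_m-1)!$) at
\[
\frac{\|\mathcal{N}_1^{j_1}\cdots\mathcal{N}_k^{j_k}\|_{R',s}}{j_1!\cdots j_k!}\le\sum_{m:\,j_m\ge 1}\|\mathcal{N}_m\|_{R',s}\,\frac{\big(K(s)\|\mathcal{N}_m\|_{R',0}\big)^{j_m-1}}{(j_m-1)!}\prod_{m'\ne m}\frac{\big(K(s)\|\mathcal{N}_{m'}\|_{R',0}\big)^{j_{m'}}}{j_{m'}!}.
\]

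Then I would sum this over all $(j_1,\dots,j_k)\ne\bm 0$, note that the constraint $(j_1,\dots,j_k)\ne\bm 0$ is automatic once $j_m\ge1$, interchange the summations to sum over $m$ first, and factor the remaining $k$-fold sum into a product of one-dimensional exponential series via $\sum_{j\ge 1}x^{j-1}/(j-1)!=\sum_{j\ge 0}x^{j}/j!=e^{x}$. This produces exactly
\[
\sum_{m=1}^{k}\|\mathcal{N}_m\|_{R',s}\,e^{K(s)\|\mathcal{N}_m\|_{R',0}}\prod_{m'\ne m}e^{K(s)\|\mathcal{N}_{m'}\|_{R',0}}=e^{K(s)\sum_{m=1}^{k}\|\mathcal{N}_m\|_{R',0}}\sum_{m=1}^{k}\|\mathcal{N}_m\|_{R',s},
\]
which is the asserted inequality \eqref{Q} (and the same bound with $\bm 1$ included shows the absolute summability invoked earlier).

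The only delicate point is the combinatorial bookkeeping in the middle step: one must keep the \emph{sharp} exponent $(K(s))^{N-1}$ furnished by \eqref{t1} rather than the cruder $(K(s))^{N}$, because it is precisely the split $(K(s))^{N-1}=(K(s))^{j_m-1}\prod_{m'\ne m}(K(s))^{j_{m'}}$ that lets the one-dimensional sums reassemble into the clean factor $e^{K(s)\sum_m\|\mathcal{N}_m\|_{R',0}}$ with no stray prefactor $K(s)$ in front of $\sum_m\|\mathcal{N}_m\|_{R',s}$. As an alternative to the direct expansion one could induct on $k$, writing $e^{\mathcal{N}_1}\cdots e^{\mathcal{N}_k}-\bm 1=(e^{\mathcal{N}_1}-\bm 1)(e^{\mathcal{N}_2}\cdots e^{\mathcal{N}_k}-\bm 1)+(e^{\mathcal{N}_1}-\bm 1)+(e^{\mathcal{N}_2}\cdots e^{\mathcal{N}_k}-\bm 1)$ and applying \eqref{tame}, \eqref{s1} and the case $k=1$; but the $K(s)$-tracking needs the same care there.
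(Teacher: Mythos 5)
Your proof is correct and is essentially the paper's own argument: both expand each $e^{\mathcal{N}_m}$ as a power series, apply \eqref{t1} to each monomial $\mathcal{N}_1^{j_1}\cdots\mathcal{N}_k^{j_k}$, and exploit the sharp factor $(K(s))^{j_1+\cdots+j_k-1}$ together with $j_m/j_m!=1/(j_m-1)!$ to reassemble the series into $e^{K(s)\sum_m\|\mathcal{N}_m\|_{R',0}}\sum_m\|\mathcal{N}_m\|_{R',s}$. The only difference is bookkeeping: you factor the $k$-fold sum directly into a product of one-dimensional exponential series, whereas the paper first groups by the total degree $l=j_1+\cdots+j_k$ and invokes the multinomial theorem, which yields the same intermediate expression.
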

\begin{proof}
We refer to the appendix for a detailed proof.
\end{proof}
We are able to state our iteration theorem.
\begin{thm}\label{main}
	If $\left \|\mathcal{M} \right \|_{R,\alpha+3\delta}<\ep_{0}$, then there exists a sequence
	\begin{align*}
	\left(V_{l},\mathcal{M}_{l},\mathcal{W}_{l+1},\mathcal{R}_{l+1}\right)_{l=0}^{\infty}\in\mathscr{P}_{R_{l}}\times\mathscr{U}_{R_{l},s}^{\bm\omega}\times\mathscr{U}_{R_{l+1},s}^{\bm\omega}\times\mathscr{U}_{R_{l+1},s}^{\bm\omega}\ (s\in[\alpha_0, \alpha_1])
	\end{align*}
	satisfying
	\begin{align}
	\nonumber &e^{\mathcal{W}_{l+1}}(V_{l}+\mathcal{M}_{l})e^{-\mathcal{W}_{l+1}}=V_{l+1}+\mathcal{R}_{l+1},\\
	\label{u}&\mathcal{U}_{l+1}=\prod_{i=l+1}^{1}e^{\mathcal{W}_{i}},\ V_{0}=V,\ \mathcal{M}_{0}=\mathcal{M}^{(0)},\\
	\label{vm}&V_{l+1}=\bar{V}_{l},\ \mathcal{M}_{l+1}=\mathcal{U}_{l+1}\mathcal{M}^{(l+1)}\mathcal{U}_{l+1}^{-1}+\mathcal{R}_{l+1},
\end{align}
	so that
	\begin{align}
		\label{ml}\|\mathcal{M}_{l}\|_{R_{l},s}&\le2\theta_{l}^{s-\alpha}\  {\rm for}\ s\in[\alpha_{0},\alpha_{1}],\\ \label{vl}|V_{l}|_{R_{l}}&\ge|V|_{R}-\sum_{j=0}^{l-1}\frac{\|\mathcal{M}_{j}\|_{R_{j},0}}{Q_{j}}\ge\frac{|V|_{R}}{2} \ (l\ge1),\\
	\label{wl}\|\mathcal{W}_{l+1}\|_{R_{l+1},s}&\le\theta_{l}^{s-\alpha+\tau+\delta}\ {\rm for}\ s\in[\alpha_{0},\alpha_{1}],\\
	\label{ul}\|\mathcal{U}_{l+1}-\bm1\|_{R_{l+1},s}&\le\theta_{l}^{(s-\alpha+\tau+\delta)_{+}+\delta}\ {\rm for}\ s\in[\alpha_{0},\alpha_{1}],\\
	\nonumber\|\mathcal{R}_{l+1}\|_{R_{l+1},s}&\le\theta_{l+1}^{s-\alpha}\  {\rm for}\ s\in[\alpha_{0},\alpha_{1}],
	\end{align}
	where $x_{+}=x$ if $x\ge0$ and $x_{+}=0$ if $x<0$. In addition, if  both $\mathcal{M}$ and $V$ are self-adjoint, then for each $l\ge0$, $\mathcal{U}_{l+1}$ is unitary and $V_{l}^{*}=V_{l}$.
\end{thm}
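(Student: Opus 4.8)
The plan is to prove the iteration theorem by induction on $l$, constructing $(V_{l},\mathcal{M}_{l},\mathcal{W}_{l+1},\mathcal{R}_{l+1})$ and verifying the estimates \eqref{ml}--\eqref{wl} and those following at each step. For the base case $l=0$, we have $\mathcal{M}_0=S_{\theta_0}\mathcal{M}$, so by \eqref{sm1}, $\|\mathcal{M}_0\|_{R_0,s}\le\langle\theta_0\rangle^{s-\alpha-3\delta}\|\mathcal{M}\|_{R,\alpha+3\delta}\le\theta_0^{s-\alpha}\cdot\theta_0^{-3\delta}\cdot\varepsilon_0$, and using $\varepsilon_0\le\eta_0^{\alpha_0-\alpha}\le\theta_0^{\alpha_0-\alpha}$ together with $s\ge\alpha_0$ one checks $\|\mathcal{M}_0\|_{R_0,s}\le2\theta_0^{s-\alpha}$; also $V_0=V$ trivially satisfies \eqref{vl} with the empty sum.

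For the inductive step, assuming the estimates hold up to level $l$, I would proceed as follows. First, from $\|\mathcal{M}_l\|_{R_l,0}\le2\theta_l^{-\alpha}$ (taking $s=0$, noting $\alpha_0$ could be taken small, or more carefully using $s=\alpha_0$ and monotonicity) and the choice $Q_l=\frac{4}{|V|_R}\theta_l^{(\alpha_0-\alpha)/2}$, verify the smallness hypothesis \eqref{bm} of Lemma~\ref{lem31}, namely $\|\mathcal{M}_l\|_{R_l,0}<Q_l|V_l|_{R_l}$, using the lower bound $|V_l|_{R_l}\ge|V|_R/2$; then Lemma~\ref{lem31} gives $V_{l+1}=\bar V_l\in\mathscr{P}_{R_{l+1}}$ with $|V_{l+1}|_{R_{l+1}}\ge|V_l|_{R_l}-\|\mathcal{M}_l\|_{R_l,0}/Q_l$, and summing the geometric-type series $\sum_j\|\mathcal{M}_j\|_{R_j,0}/Q_j\le\sum_j\frac{|V|_R}{2}\theta_j^{(\alpha_0-\alpha)/2}\cdot\frac{2\theta_j^{-\alpha}}{\cdots}$ and invoking \eqref{Tg} (which controls $\theta_0^{-\delta}\le\eta_0^{-\delta}$) shows the total loss stays below $|V|_R/2$, establishing \eqref{vl} at level $l+1$. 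Second, define $\mathcal{W}_{l+1}$ by \eqref{defw} with $\theta=\theta_{l+1}$, $\mathcal{M}'=\mathcal{M}_l$, $V'=V_l$; Lemma~\ref{hollem} gives $\|\mathcal{W}_{l+1}\|_{R_{l+1},s}\le\frac{\langle\theta_{l+1}\rangle^\tau}{\gamma|V_{l+1}|_{R_{l+1}}}\|\mathcal{M}_l\|_{R_l,s}\le\frac{2\theta_{l+1}^\tau}{\gamma|V|_R/2}\cdot2\theta_l^{s-\alpha}$, and using $\theta_{l+1}=\Theta\theta_l$ and \eqref{Tg} ($\eta_0^{-\delta}\le\frac{\gamma|V|_R}{4\Theta^\tau}$) one extracts a spare factor $\theta_l^{-\delta}$ to land inside $\theta_l^{s-\alpha+\tau+\delta}$, giving \eqref{wl}. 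Third, apply the conjugation lemma producing $\mathcal{R}'$ (with its bound \eqref{r}) to get $e^{\mathcal{W}_{l+1}}(V_l+\mathcal{M}_l)e^{-\mathcal{W}_{l+1}}=V_{l+1}+\mathcal{R}_{l+1}$, and estimate $\|\mathcal{R}_{l+1}\|_{R_{l+1},s}$ by splitting into the quadratic term $K(s)e^{\cdots}(\|\mathcal{W}_{l+1}\|_{R_{l+1},0}\|\mathcal{M}_l\|_{R_l,s}+\|\mathcal{M}_l\|_{R_l,0}\|\mathcal{W}_{l+1}\|_{R_{l+1},s})$ and the high-mode tail $\|(I-S_{\theta_{l+1}})\mathcal{M}_l\|_{R_l,s}\le\theta_{l+1}^{s-\alpha_1}\|\mathcal{M}_l\|_{R_l,\alpha_1}$ from \eqref{sm2}; each piece is bounded by a power of $\theta_{l+1}$ strictly smaller than $\theta_{l+1}^{s-\alpha}$ thanks to the gap conditions \eqref{sml1}, \eqref{sml2}, \eqref{Tg}, \eqref{12} and the constraint $\alpha_1>2\alpha+\delta$ from \eqref{0} (the tail needs $s-\alpha_1<s-\alpha$, i.e.\ $\alpha_1>\alpha$, but to beat the quadratic term after re-inflating $\|\mathcal{M}_l\|_{R_l,\alpha_1}\le2\theta_l^{\alpha_1-\alpha}$ we need the stronger $\alpha_1>2\alpha+\delta$), so $\|\mathcal{R}_{l+1}\|_{R_{l+1},s}\le\theta_{l+1}^{s-\alpha}$.

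Fourth, propagate $\mathcal{M}_{l+1}=\mathcal{U}_{l+1}\mathcal{M}^{(l+1)}\mathcal{U}_{l+1}^{-1}+\mathcal{R}_{l+1}$: use Lemma~\ref{ew} and the telescoped bound $\|\mathcal{U}_{l+1}^{\pm1}-\bm1\|_{R_{l+1},s}\le e^{K(s)\sum_{i\le l+1}\|\mathcal{W}_i\|_{R_i,0}}\sum_{i\le l+1}\|\mathcal{W}_i\|_{R_i,s}$, where $\sum_i\|\mathcal{W}_i\|_{R_i,0}\le\sum_i\theta_{i-1}^{\tau-\alpha+\delta}$ is summable and small by \eqref{alpha} and $\theta_0\ge\eta_0$, giving \eqref{ul}; then bound $\|\mathcal{U}_{l+1}\mathcal{M}^{(l+1)}\mathcal{U}_{l+1}^{-1}\|_{R_{l+1},s}$ by the tame product estimate \eqref{t1} applied to the three factors (using $\|\mathcal{U}_{l+1}^{\pm1}\|_{R_{l+1},0}\le2$ and \eqref{m1l}: $\|\mathcal{M}^{(l+1)}\|_{R,s}\le\langle\theta_{l+1}\rangle^{s-\alpha-3\delta}\|\mathcal{M}\|_{R,\alpha+3\delta}$), and add the $\|\mathcal{R}_{l+1}\|_{R_{l+1},s}\le\theta_{l+1}^{s-\alpha}$ contribution; combining and using the extra $\theta_{l+1}^{-3\delta}\varepsilon_0$-type smallness confirms $\|\mathcal{M}_{l+1}\|_{R_{l+1},s}\le2\theta_{l+1}^{s-\alpha}$, closing the induction. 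Finally, if $\mathcal{M}$ and $V$ are self-adjoint, Lemma~\ref{hollem} gives $\mathcal{W}_{l+1}^*=-\mathcal{W}_{l+1}$, hence $e^{\mathcal{W}_{l+1}}$ is unitary and $\mathcal{U}_{l+1}$ is unitary, and $V_{l+1}=V_l+\mathcal{M}_l(\cdot,\bm0)$ inherits self-adjointness since $\mathcal{M}_l^*=\mathcal{M}_l$ forces $\mathcal{M}_l(\cdot,\bm0)$ to be real on the real axis; this propagates by induction. The main obstacle is the careful bookkeeping of exponents in Step Three — verifying that every one of the parameter inequalities \eqref{0}--\eqref{12} is exactly what is needed so that both the quadratic term and the smoothing tail are reabsorbed into $\theta_{l+1}^{s-\alpha}$ uniformly for $s\in[\alpha_0,\alpha_1]$, while simultaneously not eroding the analyticity radius below $R/2$; this is where the interplay between $\Theta$, $\eta_0$, $\varepsilon_0$, and the loss $Q_l$ must be balanced.
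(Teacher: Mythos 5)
Your proposal follows essentially the same inductive strategy as the paper's own proof and correctly identifies all the key ingredients (Lemma~\ref{lem31} for the non-resonance of $V_{l+1}$, Lemma~\ref{hollem} for $\mathcal{W}_{l+1}$, the conjugation bound \eqref{r} split into quadratic plus smoothing-tail pieces, Lemma~\ref{ew} for $\mathcal{U}_{l+1}$, and the tame product estimate \eqref{t1} for $\mathcal{M}_{l+1}$). Two small cosmetic differences from the paper: you bound the smoothing tail uniformly via $\|(I-S_{\theta_{l+1}})\mathcal{M}_l\|_{R_l,s}\le\theta_{l+1}^{s-\alpha_1}\|\mathcal{M}_l\|_{R_l,\alpha_1}\le2\Theta^{\alpha-\alpha_1}\theta_{l+1}^{s-\alpha}$, which works because $\alpha_1-\alpha>\delta$ and $\Theta^{\delta}\ge4$ by \eqref{sml1}, whereas the paper splits into the cases $s\in[\alpha+\delta,\alpha_1]$ and $s\in[\alpha_0,\alpha+\delta)$ (the latter using \eqref{sml2}); and you propose a summability argument for $\mathcal{U}_{l+1}$, whereas the paper bounds $(L+1)e^{(L+1)K(\alpha_1)}\le\theta_{L+1}^{\delta/2}$ via \eqref{sml1}. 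Both variants are fine. One small caution: in your base case you invoke \eqref{sm1} with $s'=\alpha+3\delta$ for all $s\in[\alpha_0,\alpha_1]$, but that estimate requires $s\ge\alpha+3\delta$; for $\alpha_0\le s<\alpha+3\delta$ you should instead use monotonicity $\|\mathcal{M}_0\|_{R,s}\le\|\mathcal{M}\|_{R,\alpha+3\delta}\le\theta_0^{\alpha_0-\alpha}\le\theta_0^{s-\alpha}$, exactly as the paper does by treating the two regimes separately.
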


\begin{proof}[Proof of Theorem \ref{main}]
	We first check \eqref{ml} and \eqref{vl} hold true for $\mathcal{M}_{0}$ and $V_0$ respectively. Let $V$ and  $\mathcal{M}$ be as in Theorem \ref{mthm}. Since {\color{red}{\eqref{sm1}}}, \eqref{e0} and $\|\mathcal{M}\|_{R,\alpha+3\delta}<\ep_{0}$, we obtain for $s\ge\alpha+3\delta$,
	\begin{align}\label{m0s}
		\|\mathcal{M}_{0}\|_{R,s}&\le\theta_{0}^{s-\alpha-3\delta}\|\mathcal{M}\|_{R,\alpha+3\delta}\le\theta_{0}^{s-\alpha-3\delta}.
	\end{align}
If $\alpha_{0}\le s<\alpha+3\delta$, we  have
\begin{align*}
	\|\mathcal{M}_{0}\|_{R,s}\le\|\mathcal{M}\|_{R,\alpha+3\delta}\le\theta_{0}^{\alpha_{0}-\alpha}\le\theta_{0}^{s-\alpha},
\end{align*}
which combined with \eqref{m0s} implies
\begin{align*}
	\|\mathcal{M}_{0}\|_{R,s}\le2\theta_{0}^{s-\alpha}\ \text{for}\ s\in[\alpha_{0},\alpha_{1}].
\end{align*}
Obviously, we have $|V_0|_{R_0}=|V|_{R}\ge\frac{1}{2}|V|_{R}.$

Next,  assume that  for $0\leq l\le L$, we have  constructed $V_{l},\mathcal{M}_{l}$ using \eqref{u} and \eqref{vm} so that both \eqref{vl} and \eqref{ml}  hold true. We want to use \eqref{u} and \eqref{vm} to construct $V_{L+1}$, $\mathcal{M}_{L+1}$  so that  \eqref{vl} and \eqref{ml}  hold true again.  Recalling \eqref{vb}, we let
	\begin{align*}
		V_{L+1}(z)=\bar{V}_{L}(z)=V_{L}(z)+\mathcal{M}_{L}(z,\bm0).
	\end{align*}
According to {\color{red}\eqref{alpha}}, \eqref{ql}, \eqref{ml} and \eqref{vl}, we get
	\begin{align*}
	\|\mathcal{M}_{L}\|_{R_L,0}&\le\|\mathcal{M}_{L}\|_{R_L,\alpha_{0}}\le2\theta_{L}^{\alpha_{0}-\alpha}
< Q_{L}|V|_{R_L},
	\end{align*}
    which together with Lemma \ref{lem31}, \eqref{tv} and \eqref{vl} implies
    \begin{align}
    \nonumber|V_{L+1}|_{R_{L+1}}&\ge|V_L|_{R_L}-\frac{\|\mathcal{M}_{L}\|_{R_{L},0}}{Q_{L}}\ge|V|_{R}-\sum_{j=0}^{L}\frac{\|\mathcal{M}_{j}\|_{R_{j},0}}{Q_{j}}\\
    \nonumber&\ge|V|_{R}-\frac{|V|_{R}}{2}\sum_{j=0}^{\infty}\theta_{j}^{\frac{\alpha_{0}-\alpha}{2}}\\
    \label{vl+1}&=|V|_{R}-\frac{|V|_{R}}{2}\frac{\theta_{0}^{\frac{\alpha_{0}-\alpha}{2}}}{1-\Theta^{\frac{\alpha_{0}-\alpha}{2}}}\ge\frac{|V|_{R}}{2}.
    \end{align}
	Next, $\mathcal{W}_{L+1}$ is obtained by setting $\bar{V'}=V_{L+1}, \mathcal{M}'=\mathcal{M}_{L}$ and $\theta=\theta_{L+1}$ via \eqref{defw}. Since \eqref{emw}, \eqref{Tg}, \eqref{ml} and \eqref{vl+1}, we get
	\begin{align*}
		\|\mathcal{W}_{L+1}\|_{R_{L+1},s}&\le\frac{\theta_{L+1}^{\tau}}{\g|V_{L+1}|_{R_{L+1}}}\left \|\mathcal{M}_{L}\right \|_{R_{L},s}\\
		&\le\frac{4\Theta^{\tau}}{\g|V|_{R}}\theta_{0}^{-\delta}\theta_{L}^{s-\alpha+\tau+\delta}\\
		&\le\theta_{L}^{s-\alpha+\tau+\delta}\ \text{for}\ s\in[\alpha_{0},\alpha_{1}].
	\end{align*}
	 
	 In the following we provide corresponding estimates. To estimate $\|\mathcal{R}_{l+1}\|_{R_{l+1},s}$,  we first deal with $\|(I-S_{\theta_{L+1}})\mathcal{M}_{L}\|_{R_{L},s}$. We have two cases.
	\begin{itemize}
		\item [\textbf{Case 1.}]$s\in[\alpha+\delta,\alpha_{1}]$.  From  \eqref{sml1} and \eqref{ml}, we have
		\begin{align*}
		\|(I-S_{\theta_{L+1}})\mathcal{M}_{L}\|_{R_{L},s}\le&\|\mathcal{M}_{L}\|_{R_{L},s}\le2\theta_{L}^{s-\alpha}\\
		\le&2\Theta^{\alpha-s}\theta_{L+1}^{s-\alpha}\le2\Theta^{-\delta}\theta_{L+1}^{s-\alpha}\\
		\le&\frac{1}{2}\theta_{L+1}^{s-\alpha}.
		\end{align*}
		\item[\textbf{Case 2.}] $s\in[\alpha_{0},\alpha+\delta)$. Since \eqref{sm2},  \eqref{0},  \eqref{sml2} and \eqref{ml}, we have
		\begin{align*}
		\|(I-S_{\theta_{L+1}})\mathcal{M}_{L}\|_{R_{L},s}\le&\theta_{L+1}^{-\alpha}\|\mathcal{M}_{L}\|_{R_{L},s+\alpha}\\
		\le&2\theta_{L+1}^{-\alpha}\theta_{L}^{s}\le2\Theta^{-\alpha_{0}}\theta_{L+1}^{s-\alpha}\\
		\le&\frac{1}{2}\theta_{L+1}^{s-\alpha}.
		\end{align*}
	\end{itemize}
	To sum up,  one has
	\begin{align}\label{a+d}
		\|(I-S_{\theta_{L+1}})\mathcal{M}_{L}\|_{R_{L},s}\le\frac{1}{2}\theta_{L+1}^{s-\alpha}\ \text{for}\ s\in[\alpha_{0},\alpha_{1}].
	\end{align}
	Hence from \eqref{r}, \eqref{alpha}, \eqref{Tg}, \eqref{wl}, \eqref{ml} and \eqref{a+d}, we have
	\begin{align}
	\nonumber&\ \ \ \ \|\mathcal{R}_{L+1}\|_{R_{L+1},s}\\
	\nonumber&\le4K(s)e^{2K(s)\|\mathcal{W}_{L+1}\|_{R_{L+1},0}}\left(\|\mathcal{W}_{L+1}\|_{R_{L+1},0}\|\mathcal{M}_L\|_{R_L,s}+\|\mathcal{M}_L\|_{R_L,0}\|\mathcal{W}_{L+1}\|_{R_{L+1},s}\right)\\
	\nonumber&\ \ +\|(I-S_{\theta_{L+1}})\mathcal{M}_L\|_{R_L,s}\\
	\nonumber&\le16K(\alpha_{1})e^{2K(\alpha_{1})}\theta_{L}^{s-2\alpha+\alpha_{0}+\tau+\delta}+\frac{1}{2}\theta_{L+1}^{s-\alpha}\\
	\nonumber&\le16K(\alpha_{1})e^{2K(\alpha_{1})}\theta_{0}^{-\delta}\theta_{L}^{s-\alpha}+\frac{1}{2}\theta_{L+1}^{s-\alpha}\\
	\label{rl}&\le\theta_{L+1}^{s-\alpha}\ \text{for}\ s\in[\alpha_{0},\alpha_{1}].
	\end{align}
	By {\eqref{Q} and \eqref{wl}}, we can obtain
	\begin{align}
	\nonumber\|\mathcal{U}_{L+1}-\bm1\|_{R_{L+1},s}&\le e^{K(s)\left(\sum\limits_{j=0}^{L}\|\mathcal{W}_{j+1}\|_{R_{j+1},0}\right)}\left(\sum_{j=0}^{L}\|\mathcal{W}_{j+1}\|_{R_{j+1},s}\right)\\
	\nonumber&\le e^{K(s)\left(\sum\limits_{j=0}^{L}\|\mathcal{W}_{j+1}\|_{R_{j+1},\alpha_{0}}\right)}\left(\sum_{j=0}^{L}\|\mathcal{W}_{j+1}\|_{R_{j+1},s}\right)\\
	\label{ul1}&\le(L+1)e^{(L+1)K(\alpha_{1})}\underset{0\le j\le L}{\max}\|\mathcal{W}_{j+1}\|_{R_{j+1},s}.
	\end{align}
	Since \eqref{sml1} and $L+1\le2^{L+1}$, we have
	\begin{align*}
	(L+1)e^{(L+1)K(\alpha_{1})}&\le\left(2e^{K(\alpha_{1})}\right)^{L+1}\le\theta_{0}^{\frac{\delta}{2}}(\Theta^{\frac{\delta}{2}})^{L+1}=\theta_{L+1}^{\frac{\delta}{2}}.
	\end{align*}
	Together with \eqref{tT}, \eqref{wl} and \eqref{ul1}, we can obtain for $\alpha_{0}\le s<\alpha-\tau-\delta$ that
	\begin{align}\label{q1}
	 \|\mathcal{U}_{L+1}-\bm1\|_{R_{L+1},s}\le\theta_{L+1}^{\frac{\delta}{2}}\theta_{0}^{s-\alpha+\tau+\delta}
	\le\theta_{L}^{\delta}\theta_{0}^{-\frac{\delta}{2}}\Theta^{\frac{\delta}{2}}
	\le\theta_{L}^{\delta}.
	\end{align}
	Similarly, for $\alpha-\tau-\delta\le s\le\alpha_{1}$, we obtain
	\begin{align}\label{q2}
	\|\mathcal{U}_{L+1}-\bm1\|_{R_{L+1},s}\le\theta_{L+1}^{\frac{\delta}{2}}\theta_{L}^{s-\alpha+\tau+\delta}
	\le\theta_{0}^{-\frac{\delta}{2}}\Theta^{\frac{\delta}{2}}\theta_{L}^{s-\alpha+\tau+2\delta}
	\le\theta_{L}^{s-\alpha+\tau+2\delta}.
	\end{align}
	Combining \eqref{q1} and \eqref{q2} implies \eqref{ul}. Hence
	\begin{align}\label{u1l}
	\|\mathcal{U}_{L+1}\|_{R_{L+1},s}&\le1+\theta_{L}^{(s-\alpha+\tau+\delta)_{+}+\delta}
	\le2\theta_{L}^{(s-\alpha+\tau+\delta)_{+}+\delta}.
	\end{align}
In the same way, we can get
\begin{align}
		\label{u2l}\|\mathcal{U}_{L+1}^{-1}\|_{R_{L+1},s}\le2\theta_{L}^{(s-\alpha+\tau+\delta)_{+}+\delta}.
\end{align}
Similar to \eqref{m0s}, combining \eqref{m1l} and \eqref{e0}, we have for $s\ge\alpha+3\delta$,
\begin{align}
	\nonumber\|\mathcal{M}^{(L+1)}\|_{R_{L+1},s}&\le\theta_{L+1}^{s-\alpha-3\delta}\|\mathcal{M}\|_{R_{L+1},\alpha+3\delta}\le\theta_{L+1}^{s-\alpha-3\delta}\|\mathcal{M}\|_{R,\alpha+3\delta}\\
	\label{ml1}&\le\theta_{L+1}^{s-\alpha-3\delta}.
\end{align}
From \eqref{m2l}, \eqref{tT} and \eqref{e0},  we obtain for $\alpha_{0}\le s<\alpha+3\delta$, 
\begin{align}
	\nonumber\|\mathcal{M}^{(L+1)}\|_{R_{L+1},s}&\le\theta_{L}^{s-\alpha-3\delta}\|\mathcal{M}\|_{R_{L+1},\alpha+3\delta}
	\le\theta_{L}^{s-\alpha-3\delta}\|\mathcal{M}\|_{R,\alpha+3\delta}\\
	\nonumber&\le\theta_{L+1}^{s-\alpha-3\delta}\Theta^{\alpha+3\delta-\alpha_{0}}\theta_{0}^{\alpha_{0}-\alpha}\\
	\label{ml2}&\le\theta_{L+1}^{s-\alpha-3\delta}.
\end{align}
Combining \eqref{ml1} and \eqref{ml2} yields
\begin{align}
	\label{ml3}\|\mathcal{M}^{(L+1)}\|_{R_{L+1},s}\le\theta_{L+1}^{s-\alpha-3\delta}\ \text{for}\ s\in[\alpha_{0},\alpha_{1}].
\end{align}
If $\alpha_{0}\le s<\alpha-\tau-\delta$,  then
\begin{align*}
(s-\alpha+\tau+\delta)_{+}+\alpha_{0}-\alpha-\delta&=\alpha_{0}-\alpha-\delta\le s-\alpha-\delta.
\end{align*}
If $\alpha-\tau-\delta\le s\le \alpha_{1}$, then
\begin{align*}
	(s-\alpha+\tau+\delta)_{+}+\alpha_{0}-\alpha-\delta&\le s-\alpha-\delta+(\alpha_{0}-\alpha+\tau+\delta)\\
	&\le s-\alpha-\delta\ {(\text{since}\ \eqref{alpha})}.
\end{align*}
As a result, we have
\begin{align}
	\label{+}(s-\alpha+\tau+\delta)_{+}+\alpha_{0}-\alpha-\delta\le s-\alpha-\delta\ \text{for}\ s\in[\alpha_{0},\alpha_{1}].
\end{align}
Since \eqref{u1l}, \eqref{u2l}, \eqref{ml3} and \eqref{+}, we get
\begin{align*}
	&\ \ \ \ \|\mathcal{U}_{L+1}\|_{R_{L+1},0}\|\mathcal{M}^{(L+1)}\|_{R_{L+1},0}\|\mathcal{U}_{L+1}^{-1}\|_{R_{L+1},s}\\
	&\le4\theta_{L+1}^{\alpha_{0}-\alpha-3\delta}\theta_{L}^{(s-\alpha+\tau+\delta)_{+}+2\delta}\le4\theta_{L+1}^{(s-\alpha+\tau+\delta)_{+}+\alpha_{0}-\alpha-\delta}\\
	&\le4\theta_{L+1}^{s-\alpha-\delta},\\
	&\ \ \ \  \|\mathcal{U}_{L+1}\|_{R_{L+1},s}\|\mathcal{M}^{(L+1)}\|_{R_{L+1},0}\|\mathcal{U}_{L+1}^{-1}\|_{R_{L+1},s} \\
	&       \le4\theta_{L+1}^{s-\alpha-\delta}, 
\end{align*}
and
\begin{align*}
	&\ \ \ \ \|\mathcal{U}_{L+1}\|_{R_{L+1},0}\|\mathcal{M}^{(L+1)}\|_{R_{L+1},s}\|\mathcal{U}_{L+1}^{-1}\|_{R_{L+1},0}\\
	&\le4\theta_{L+1}^{s-\alpha-3\delta}\theta_{L}^{2\delta}\le4\theta_{L+1}^{s-\alpha-\delta}.
\end{align*}
According to \eqref{t1},  we obtain  for $s\in[\alpha_0, \alpha_1],$
\begin{align*}
	\|\mathcal{U}_{L+1}\mathcal{M}^{(L+1)}\mathcal{U}_{L+1}^{-1}\|_{R_{L+1},s}&\le (K(s))^{2}\|\mathcal{U}_{L+1}\|_{R_{L+1},0}\|\mathcal{M}^{(L+1)}\|_{R_{L+1},0}\|\mathcal{U}_{L+1}^{-1}\|_{R_{L+1},s}\\
	&\ \  +(K(s))^{2}\|\mathcal{U}_{L+1}\|_{R_{L+1},0}\|\mathcal{M}^{(L+1)}\|_{R_{L+1},s}\|\mathcal{U}_{L+1}^{-1}\|_{R_{L+1},0}\\
	&\ \  +(K(s))^{2}\|\mathcal{U}_{L+1}\|_{R_{L+1},s}\|\mathcal{M}^{(L+1)}\|_{R_{L+1},0}\|\mathcal{U}_{L+1}^{-1}\|_{R_{L+1},s}\\
	&\le12(K(\alpha_{1}))^{2}\theta_{L+1}^{s-\alpha-\delta}.
\end{align*}
Thus by combining {\eqref{12} and \eqref{rl}}, we have
\begin{align*}
\|\mathcal{M}_{L+1}\|_{R_{L+1},s}&\le\|\mathcal{U}_{L+1}\mathcal{M}^{(L+1)}\mathcal{U}_{L+1}^{-1}\|_{R_{L+1},s}+\|\mathcal{R}_{L+1}\|_{R_{L+1},s}\\
&\le12(K(\alpha_{1}))^{2}\theta_{L+1}^{s-\alpha-\delta}+\theta_{L+1}^{s-\alpha}\\
&\le12(K(\alpha_{1}))^{2}\theta_{0}^{-\delta}\theta_{L+1}^{s-\alpha}+\theta_{L+1}^{s-\alpha}\\
&\le2\theta_{L+1}^{s-\alpha}.
\end{align*}

Finally, we prove for each $l\ge0$, $\mathcal{U}_{l+1}$ is unitary and $V_{l}^{*}=V_{l}$ by induction.  For $l=0$, since  $V^{*}=V$ and $\mathcal{M}^{*}=\mathcal{M}$, we have $\mathcal{M}_{0}^{*}=\mathcal{M}_{0}$ and $V_{1}^{*}=V_{1}$,
which combined with Lemma \ref{w*} implies $\mathcal{W}_{1}^{*}=-\mathcal{W}_{1}$. Thus
\begin{align*}
	\mathcal{R}_{1}^{*}&=\left(\sum_{k=1}^{\infty}\frac{A_{\mathcal{W}_{1}}^{k}(S_{\theta}\tilde{\mathcal{M}}_{0})}{(k-1)!(k+1)}+\sum_{k=0}^{\infty}\frac{A_{\mathcal{W}_{1}}^{k}((I-S_{\theta})\tilde{\mathcal{M}}_{0})}{k!}\right)^{*}\\
	&=\sum_{k=1}^{\infty}\frac{A_{\mathcal{W}_{1}}^{k}(S_{\theta}\tilde{\mathcal{M}}_{0}^{*})}{(k-1)!(k+1)}+\sum_{k=0}^{\infty}\frac{A_{\mathcal{W}_{1}}^{k}((I-S_{\theta})\tilde{\mathcal{M}}_{0}^{*})}{k!}\\
	&=\sum_{k=1}^{\infty}\frac{A_{\mathcal{W}_{1}}^{k}(S_{\theta}\tilde{\mathcal{M}}_{0})}{(k-1)!(k+1)}+\sum_{k=0}^{\infty}\frac{A_{\mathcal{W}_{1}}^{k}((I-S_{\theta})\tilde{\mathcal{M}}_{0})}{k!}=\mathcal{R}_{1}
\end{align*}
and
\begin{align*}
	&(e^{\mathcal{W}_{1}})^{*}e^{\mathcal{W}_{1}}=e^{-\mathcal{W}_{1}}e^{\mathcal{W}_{1}}=\bm1,\\
	&e^{\mathcal{W}_{1}}(e^{\mathcal{W}_{1}})^{*}=e^{\mathcal{W}_{1}}e^{-\mathcal{W}_{1}}=\bm1.
\end{align*}
Therefore,
\begin{align*}
	\mathcal{M}_{1}^{*}
	&=\left(e^{\mathcal{W}_{1}}\mathcal{M}^{(1)}e^{-\mathcal{W}_{1}}+\mathcal{R}_{1}\right)^{*}\\
	&=e^{\mathcal{W}_{1}}\mathcal{M}^{(1)}e^{-\mathcal{W}_{1}}+\mathcal{R}_{1}\\
	&=\mathcal{U}_{1}\mathcal{M}^{(1)}\mathcal{U}_{1}^{-1}+\mathcal{R}_{1}=	\mathcal{M}_{1}.	
\end{align*}
Now, we assume $V_{L}^{*}=V_{L}, \mathcal{M}_{L}^{*}=\mathcal{M}_{L}, \mathcal{R}_{L}^{*}=\mathcal{R}_{L}$ and $\mathcal{U}_{L}$ is unitary for  $L\ge1$.
Recalling \eqref{vm}, we have
\begin{align*}
	V_{L+1}^{*}&=V_{L}^{*}+\mathcal{M}_{L}^{*}(z,\bm0)=V_{L}+\mathcal{M}_{L}(z,\bm0)=V_{L+1}.
\end{align*}
Similar to the above arguments, we have
\begin{align*}
\mathcal{W}_{L+1}^{*}=-\mathcal{W}_{L+1},\ 
\mathcal{R}_{L+1}^{*}=\mathcal{R}_{L+1}. 
\end{align*}
Thus
\begin{align*}
	&\mathcal{U}_{L+1}^{*}\mathcal{U}_{L+1}=e^{-\mathcal{W}_{L+1}}\mathcal{U}_{L}^{*}\mathcal{U}_{L}e^{\mathcal{W}_{L+1}}=e^{-\mathcal{W}_{L+1}}e^{\mathcal{W}_{L+1}}=\bm1,\\
	&\mathcal{U}_{L+1}\mathcal{U}_{L+1}^{*}=e^{\mathcal{W}_{L+1}}\mathcal{U}_{L}\mathcal{U}_{L}^{*}e^{-\mathcal{W}_{L+1}}=e^{\mathcal{W}_{L+1}}e^{-\mathcal{W}_{L+1}}=\bm1,
\end{align*}
and
\begin{align*}
	\mathcal{M}_{L+1}^{*}&=\left(\mathcal{U}_{L+1}\mathcal{M}^{(L+1)}\mathcal{U}_{L+1}^{-1}+\mathcal{R}_{L+1}\right)^{*}\\
	&=\mathcal{U}_{L+1}\mathcal{M}^{(L+1)}\mathcal{U}_{L+1}^{-1}+\mathcal{R}_{L+1}=\mathcal{M}_{L+1}.
\end{align*}

This finishes the proof.
\end{proof}

\section{Proof of Theorem \ref{mthm}}
	In this section we prove the convergence of the  iterations (cf. Theorem \ref{main}) in the previous section  and thus finish the proof of Theorem \ref{mthm}.
	\begin{proof}[Proof of Theorem \ref{mthm}]
	Recalling Theorem \ref{main},  then for 
	\begin{align*}
	\theta_{0}^{-\delta}=	\|\mathcal{M}\|_{R,\alpha+3\delta}^{\frac{\delta}{\alpha-\alpha_{0}}}
\end{align*}
	and  $\theta_{0}>\eta_{0}$, we have 
	\begin{align*}
	\|\mathcal{M}\|_{R,\alpha+3\delta}=\theta_{0}^{\alpha_{0}-\alpha}<\eta_{0}^{\alpha_{0}-\alpha}=\ep_{0}.
\end{align*}
So applying Theorem \ref{main} yields  the existence of the sequence $\left(V_{l},\mathcal{M}_{l},\mathcal{W}_{l+1},\mathcal{R}_{l+1}\right)_{l=0}^{\infty}$. 
Thus $\hat{V}(z)-V(z)=\sum\limits_{j=0}^{\infty}\mathcal{M}_{j}(z,\bm0)$ exists in $\mathscr{H}_{\frac{R}{2}}$ and 
\begin{align*}
		\|\hat{V}-V\|_{\frac{R}{2},0}&=\|\hat{V}-V\|_{\frac{R}{2},\alpha_{0}}\le\sum_{j=0}^{\infty}2\theta_{j}^{\alpha_{0}-\alpha}\\
	&=\frac{2\theta_{0}^{\alpha_{0}-\alpha}}{1-\Theta^{\alpha_{0}-\alpha}}=K_{2}\|\mathcal{M}\|_{R,\alpha+3\delta},
\end{align*}
where $K_{2}=K_{2}(\alpha,\alpha_{0},\delta)=\frac{2}{1-\Theta^{\alpha_{0}-\alpha}}$.

Next, we prove the convergences  of $\mathcal{U}_{l+1}$ and $\mathcal{U}_{l+1}^{-1}$ with $\mathcal{U}_{l+1}=e^{\mathcal{W}_{l+1}}\cdots e^{\mathcal{W}_{1}}.$ Recalling \eqref{wl}, we can obtain
\begin{align*}
	\|\mathcal{W}_{l+1}\|_{R_{l+1},0}&\le\|\mathcal{W}_{l+1}\|_{R_{l+1},\alpha_{0}}\le\theta_{l}^{\alpha_{0}-\alpha+\tau+\delta}\le\theta_{l}^{-3\delta}
\end{align*}
and
\begin{align*}
		\|\mathcal{W}_{l+1}\|_{R_{l+1},\alpha-\tau-4\delta}\le\theta_{l}^{-3\delta}.
\end{align*}
Hence
\begin{align*}
		\sum_{l=0}^{\infty}\|\mathcal{W}_{l+1}\|_{R_{l+1},0}&\le\sum_{l=0}^{\infty}\theta_{l}^{-3\delta}=\frac{\theta_{0}^{-3\delta}}{1-\Theta^{-3\delta}}\le\frac{1}{1-\Theta^{-3\delta}}:=C<+\infty,\\
	\sum_{l=0}^{\infty}\|\mathcal{W}_{l+1}\|_{R_{l+1},\alpha-\tau-4\delta}&\le\sum_{l=0}^{\infty}\theta_{l}^{-3\delta}\le C.
\end{align*}
Then for $\forall\ \ep>0$, there is $N(\ep)\in\mathbb{N}$ so that for all $n\ge N$ and $p\in\mathbb{N}$,
\begin{align*}
	\sum_{l=n}^{n+p}\|\mathcal{W}_{l+1}\|_{R_{l+1},0}&< \frac{\ep}{4(1+C)K(\alpha)e^{K(\alpha)C}},\\
	\sum_{l=n}^{n+p}\|\mathcal{W}_{l+1}\|_{R_{l+1},\alpha-\tau-4\delta}&< \frac{\ep}{4(1+C)K(\alpha)e^{K(\alpha)C}}.
\end{align*}
Using  \eqref{Q} implies
\begin{align*}
	&\ \ \ \ \|e^{\mathcal{W}_{n+p+1}}\cdots e^{\mathcal{W}_{n+1}}-\bm1\|_{\frac{R}{2},\alpha-\tau-4\delta}\\
	&\le e^{K(\alpha-\tau-4\delta)\sum\limits_{l=n}^{n+p}\|\mathcal{W}_{l+1}\|_{R_{l+1},0}}	\sum_{l=n}^{n+p}\|\mathcal{W}_{l+1}\|_{R_{l+1},\alpha-\tau-4\delta}\\
	&\le e^{K(\alpha)\sum\limits_{l=n}^{n+p}\|\mathcal{W}_{l+1}\|_{R_{l+1},0}}	\sum_{l=n}^{n+p}\|\mathcal{W}_{l+1}\|_{R_{l+1},\alpha-\tau-4\delta},\\
	\|\mathcal{U}_{n}\|_{\frac{R}{2},0}&\le 1+\|e^{\mathcal{W}_{n}}\cdots e^{\mathcal{W}_{1}}-\bm1\|_{\frac{R}{2},0}\\
	&\le1+e^{K(\alpha)\sum\limits_{l=0}^{n-1}\|\mathcal{W}_{l+1}\|_{R_{l+1},0}}\sum_{l=0}^{n-1}\|\mathcal{W}_{l+1}\|_{R_{l+1},0}.
\end{align*}
As a result, we have
\begin{align*}
	&\ \ \ \|e^{\mathcal{W}_{n+p+1}}\cdots e^{\mathcal{W}_{n+1}}-\bm1\|_{\frac{R}{2},\alpha-\tau-4\delta}\|\mathcal{U}_{n}\|_{\frac{R}{2},0}\\
	&\le e^{K(\alpha)\sum\limits_{l=n}^{n+p}\|\mathcal{W}_{l+1}\|_{R_{l+1},0}}	\sum_{l=n}^{n+p}\|\mathcal{W}_{l+1}\|_{R_{l+1},\alpha-\tau-4\delta}\\
	&\ \ +e^{K(\alpha)\sum\limits_{l=0}^{n+p}\|\mathcal{W}_{l+1}\|_{R_{l+1},0}}\left(\sum_{l=0}^{n-1}\|\mathcal{W}_{l+1}\|_{R_{l+1},0}\right)\left(\sum_{l=n}^{n+p}\|\mathcal{W}_{l+1}\|_{R_{l+1},\alpha-\tau-4\delta}\right)\\
	&\le\frac{\ep}{2K(\alpha)}.
\end{align*}
Similarly, we get 
\begin{align*}
	\|e^{\mathcal{W}_{n+p+1}}\cdots e^{\mathcal{W}_{n+1}}-\bm1\|_{\frac{R}{2},0}\|\mathcal{U}_{n}\|_{\frac{R}{2},\alpha-\tau-4\delta}\le\frac{\ep}{2K(\alpha)}.
\end{align*}
From \eqref{t1},  we obtain
\begin{align*}
	\|\mathcal{U}_{n+p+1}-\mathcal{U}_{n}\|_{\frac{R}{2},\alpha-\tau-4\delta}&\le K(\alpha-\tau-4\delta)\|e^{\mathcal{W}_{n+p+1}}\cdots e^{\mathcal{W}_{n+1}}-\bm1\|_{\frac{R}{2},\alpha-\tau-4\delta}\|\mathcal{U}_{n}\|_{\frac{R}{2},0}\\
	&\ \ +K(\alpha-\tau-4\delta)\|e^{\mathcal{W}_{n+p+1}}\cdots e^{\mathcal{W}_{n+1}}-\bm1\|_{\frac{R}{2},0}\|\mathcal{U}_{n}\|_{\frac{R}{2},\alpha-\tau-4\delta}\\
	&\le K(\alpha)\left(\frac{\ep}{2K(\alpha)}+\frac{\ep}{2K(\alpha)}\right)=\ep,
\end{align*}
which implies the product $\mathcal{U}_{l+1}$  converges to some $\mathcal{U}\in\mathscr{U}_{\frac{R}{2},\alpha-\tau-4\delta}^{\bm\omega}$ as $\l \rightarrow \infty$. Similarly, one can show $\lim\limits_{l\rightarrow\infty}\|\mathcal{U}_{l+1}^{-1}-\mathcal{U}^{-1}\|_{\frac{R}{2},\alpha-\tau-4\delta}=0$. In addition, we have
\begin{align*}
	\|\mathcal{U}^{\pm 1}-\bm1\|_{\frac{R}{2},\alpha-\tau-4\delta}&\le\left(\sum_{l=0}^{\infty}\|\mathcal{W}_{l+1}\|_{R_{l+1},\alpha-\tau-4\delta}\right)e^{K(\alpha)\sum\limits_{l=0}^{\infty}\|\mathcal{W}_{l+1}\|_{R_{l+1},0}}\\
	&\le Ce^{K(\alpha)C}\theta_{0}^{-3\delta}=K_{1}\|\mathcal{M}\|_{R,\alpha+3\delta}^{\frac{3\delta}{\alpha-\alpha_{0}}},
\end{align*}
where $K_{1}=K_{1}(\alpha_{0},\alpha,\delta)=Ce^{K(\alpha)C}$.
	
	Next, considering $\sum\limits_{j=0}^{l}\mathcal{M}^{(j)}$, we have $\sum\limits_{j=0}^{l}\mathcal{M}^{(j)}=S_{\theta_{l}}\mathcal{M},$ 
	which implies
	\begin{align*}
		\|\mathcal{M}-\sum_{j=0}^{l}\mathcal{M}^{(j)}\|_{\frac{R}{2},\alpha-\tau-4\delta}&=\|(I-S_{\theta_{l}}\mathcal{M})\|_{\frac{R}{2},\alpha-\tau-4\delta}\le\theta_{l}^{-\tau-7\delta}\|\mathcal{M}\|_{{R},\alpha+3\delta}\\
		&\le\theta_{l}^{-\tau-7\delta}\rightarrow 0\  \text{(as $l\rightarrow\infty$)}.
	\end{align*}
Obviously, we have
\begin{align*}
	\|\mathcal{R}_{l+1}\|_{\frac{R}{2},\alpha-\tau-4\delta}&\le\|\mathcal{R}_{l+1}\|_{R_{l+1},\alpha-\tau-4\delta}\\
	&\le\theta_{l+1}^{-\tau-4\delta}\rightarrow0\  \text{(as $l\rightarrow\infty$)}.
\end{align*}
This finishes the proof of convergence of the iteration scheme. 
Moreover, from \eqref{vl},  we have $
	|\hat{V}|_{\frac{R}{2}}\ge\frac{1}{2}|V|_{R}.
$

The remaining is to show if both $V$ and $\mathcal{M}$ are self-adjoint, then $\mathcal{U}$ can be improved to become unitary and $\hat{V}^{*}=\hat{V}$. Suppose now 
\begin{align*}
	V^{*}=V,\ \mathcal{M}^{*}=\mathcal{M}.
\end{align*}
From  Theorem \ref{main}, we show  $\mathcal{U}_{l+1}$ is unitary and $V_{l}^{*}=V_{l}$ for each $l\ge0$. Therefore, 
\begin{align*}
	&\mathcal{U}^{*}\mathcal{U}=\lim\limits_{l\rightarrow\infty}\mathcal{U}_{l+1}^{*}\mathcal{U}_{l+1}=\bm1,\\
	&\mathcal{U}\mathcal{U}^{*}=\lim\limits_{l\rightarrow\infty}\mathcal{U}_{l+1}\mathcal{U}_{l+1}^{*}=\bm1,\\
	&\hat{V}^{*}=\lim\limits_{l\rightarrow\infty}V_{l}^{*}=\lim\limits_{l\rightarrow\infty}V_{l}=\hat{V},
\end{align*}
which implies $\mathcal{U}$ is unitary and $\hat{V}^{*}=\hat{V}$.

This completes the whole proof of Theorem \ref{mthm}.
\end{proof}

\section{Proof of Theorem \ref{pl}}
In this section we will prove the power-law localization (cf.  Theorem \ref{pl}) by using Theorem \ref{mthm}. 


We begin with a useful lemma. 
\begin{lem}
	For any $\mathcal{M},\mathcal{M}_{1},\mathcal{M}_{2}\in\mathscr{U}_{R,s}^{\bm\omega}$ and $z\in{D}_{R}$, we have
	\begin{align}
		\label{unitary}T_{\mathcal{M}^{*}}(z)&=\left(T_{\mathcal{M}}(\bar{z})\right)^{*},\\
		\label{tm12}T_{\mathcal{M}_{1}\mathcal{M}_{2}}(z)&=T_{\mathcal{M}_1}(z)T_{\mathcal{M}_2}(z).
	\end{align}
If in addition $s>q+\frac{d}{2}$, then
\begin{align}
	\label{bd}&\|T_{\mathcal{M}}(z)\|_q\le X(s,q)\|\mathcal{M}\|_{R,s},
\end{align} 
where
\begin{align}\label{csdq}
	X(s,q)=\sqrt{K(2q)(Y^2(s)+Y^2(s-q))}>0,
\end{align}
	 $K(s)$ is given by \eqref{cs}, $Y(s)=\sqrt{\sum\limits_{\bm n\in \Z^d}\langle \bm n\rangle^{-2s}}$ and the norm of $T_{\mathcal{M}}(z)$ denotes the standard operator norm on  $\ell_q^2(\Z^d)$.
	\end{lem}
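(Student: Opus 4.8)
The plan is to establish the two algebraic identities \eqref{unitary}--\eqref{tm12} by a direct computation with matrix entries, and to obtain the bound \eqref{bd} by a weighted Cauchy--Schwarz estimate combined with a splitting of the weight $\langle\bm n\rangle^{2q}$.

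For the identities, observe that by \eqref{tmz} the matrix of $T_{\mathcal M}(z)$ in the standard basis of $\ell^2(\Z^d)$ has entries $[T_{\mathcal M}(z)]_{\bm n\bm l}=\mathcal M(z-\bm n\cdot\bm\omega,\bm l-\bm n)$. To prove \eqref{unitary}, I would write out the entries of $\bigl(T_{\mathcal M}(\bar z)\bigr)^{*}$, namely $\overline{[T_{\mathcal M}(\bar z)]_{\bm l\bm n}}=\overline{\mathcal M(\bar z-\bm l\cdot\bm\omega,\bm n-\bm l)}$, and compare with $[T_{\mathcal M^{*}}(z)]_{\bm n\bm l}=\mathcal M^{*}(z-\bm n\cdot\bm\omega,\bm l-\bm n)$; expanding the latter through the definition \eqref{*} of the involution yields precisely the former. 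For \eqref{tm12}, I would expand $[T_{\mathcal M_1}(z)T_{\mathcal M_2}(z)]_{\bm n\bm l}=\sum_{\bm k}\mathcal M_1(z-\bm n\cdot\bm\omega,\bm k-\bm n)\,\mathcal M_2(z-\bm k\cdot\bm\omega,\bm l-\bm k)$, substitute $\bm p=\bm k-\bm n$, and recognize the result as $(\mathcal M_1\mathcal M_2)(z-\bm n\cdot\bm\omega,\bm l-\bm n)$ via the product rule \eqref{prodm}. The double sum is absolutely convergent (it is dominated by $\|\mathcal M_1\|_{R,0}\|\mathcal M_2\|_{R,0}$), so the rearrangement is legitimate; both identities are otherwise purely formal.

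For \eqref{bd}, the starting point is the pointwise coefficient bound: for every $\bm m\in\Z^{d}$,
\[
\sup_{w\in D_R}|\mathcal M(w,\bm m)|\le\|\mathcal M\|_{R,s}\,\langle\bm m\rangle^{-s},
\]
which follows at once from \eqref{norm} by isolating the single term indexed by $\bm m$. Fix $\psi\in\ell^2_q(\Z^d)$ and $z\in D_R$, and set $z_{\bm n}=z-\bm n\cdot\bm\omega\in D_R$. Applying Cauchy--Schwarz to $(T_{\mathcal M}(z)\psi)(\bm n)=\sum_{\bm l}\mathcal M(z_{\bm n},\bm l-\bm n)\psi(\bm l)$ with weight $\langle\bm l-\bm n\rangle^{s}$, and using $\sum_{\bm l}|\mathcal M(z_{\bm n},\bm l-\bm n)|\langle\bm l-\bm n\rangle^{s}\le\|\mathcal M\|_{R,s}$, gives
\[
|(T_{\mathcal M}(z)\psi)(\bm n)|^{2}\le\|\mathcal M\|_{R,s}\sum_{\bm l}|\mathcal M(z_{\bm n},\bm l-\bm n)|\,\langle\bm l-\bm n\rangle^{-s}|\psi(\bm l)|^{2}.
\]
Multiplying by $\langle\bm n\rangle^{2q}$, summing over $\bm n$, and using $\langle\bm n\rangle\le\langle\bm l\rangle+\langle\bm l-\bm n\rangle$ together with Lemma \ref{nl} in the form $\langle\bm n\rangle^{2q}\le K(2q)\bigl(\langle\bm l\rangle^{2q}+\langle\bm l-\bm n\rangle^{2q}\bigr)$, one is left with two double sums. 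In each I would interchange the order of summation, fix $\bm l$, substitute $\bm p=\bm n-\bm l$, and invoke the pointwise bound above: the contribution carrying $\langle\bm l-\bm n\rangle^{2q}$ yields $\sum_{\bm p}\langle\bm p\rangle^{2q-2s}=Y(s-q)^{2}$, while the one carrying $\langle\bm l\rangle^{2q}$ yields $\sum_{\bm p}\langle\bm p\rangle^{-2s}=Y(s)^{2}$, both finite since $s>q+\tfrac d2$, each multiplied by $\|\mathcal M\|_{R,s}\|\psi\|_q^{2}$. Collecting terms gives $\|T_{\mathcal M}(z)\psi\|_q^{2}\le K(2q)\bigl(Y(s)^{2}+Y(s-q)^{2}\bigr)\|\mathcal M\|_{R,s}^{2}\|\psi\|_q^{2}$, which is \eqref{bd}.

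The only step that is not pure bookkeeping is this last one: one cannot estimate $\sum_{\bm p}|\mathcal M(z_{\bm l}-\bm p\cdot\bm\omega,-\bm p)|\langle\bm p\rangle^{j}$ directly by $\|\mathcal M\|_{R,s}$, because the phase $z_{\bm l}-\bm p\cdot\bm\omega$ moves with the summation index $\bm p$. This is exactly where the pointwise coefficient bound — and, behind it, the $\ell^{1}$-in-$\bm n$ structure of the norm \eqref{norm} — is indispensable, and it is also the reason two distinct decay sums $Y(s)$ and $Y(s-q)$, rather than a single one, enter $X(s,q)$.
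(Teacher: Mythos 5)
Your proof is correct and follows essentially the same route as the paper: matrix-entry (equivalently, inner-product) computations for \eqref{unitary}--\eqref{tm12}, and for \eqref{bd} a Cauchy--Schwarz split with the weight $\langle\bm l-\bm n\rangle^{s}$, the pointwise coefficient bound $|\mathcal M(w,\bm m)|\le\|\mathcal M\|_{R,s}\langle\bm m\rangle^{-s}$, and Lemma \ref{nl} applied to $\langle\bm n\rangle^{2q}$, arriving at the same constant $X(s,q)$. The only cosmetic difference is the order of operations: the paper applies the pointwise bound to the second Cauchy--Schwarz factor \emph{before} interchanging the $\bm n,\bm l$-sums, so that $z-\bm n\cdot\bm\omega$ remains fixed and the moving-phase issue you flag never arises, whereas you interchange first and then invoke the pointwise bound to handle the shifted phase; both are equally valid.
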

\begin{proof}
Let $(\cdot,\cdot)$ denote the standard inner product on $\ell^{2}(\Z^d)$. 

First, for $\forall \ \psi$ and $\varphi\in\ell^{2}(\Z^d)$,
\begin{align*}
	\left(T_{\mathcal{M}^{*}}(z)\psi,\varphi\right)&=\sum_{\bm n\in\Z^d}\sum_{\bm l\in\Z^d}\mathcal{M}^{*}(z-\bm n\cdot\bm\omega,\bm l-\bm n)\psi(\bm l)\overline{\varphi(\bm n)}\\
	&=\sum_{\bm n\in\Z^d}\sum_{\bm l\in\Z^d}\psi(\bm l)\overline{\mathcal{M}(\bar{z}-\bm l\cdot\bm\omega,\bm n-\bm l)\varphi(\bm n)}\\
	&=\left(\psi,T_{\mathcal{M}}(\bar{z})\varphi\right)=\left(\left(T_{\mathcal{M}}(\bar{z})\right)^{*}\psi,\varphi\right),
\end{align*}
which shows \eqref{unitary}. 

Next, for $\forall\  \psi\in\ell^{2}(\Z^d)$ and $\forall\ \bm n\in\Z^d$, we have 
\begin{align*}
	\left(T_{\mathcal{M}_{1}\mathcal{M}_{2}}(z)\psi\right)(\bm n)&=\sum_{\bm l\in\Z^d}\sum_{\bm k\in\Z^d}\mathcal{M}_{1}(z-\bm n\cdot\bm\omega,\bm k)\mathcal{M}_{2}(z-(\bm n+\bm k)\cdot\bm\omega,\bm l-(\bm n+\bm k))\psi(\bm l)\\
	&=\sum_{\bm k\in\Z^d}\mathcal{M}_{1}(z-\bm n\cdot\bm\omega,\bm k)\left(T_{\mathcal{M}_{2}}\psi\right)(\bm n+\bm k)\\
	&=\sum_{\bm k\in\Z^d}\mathcal{M}_{1}(z-\bm n\cdot\bm\omega,\bm k-\bm n)\left(T_{\mathcal{M}_{2}}\psi\right)(\bm k)\\
	&=\left(T_{\mathcal{M}_{1}}\left(T_{\mathcal{M}_{2}}\psi\right)\right)(\bm n)=\left(\left(T_{\mathcal{M}_{1}}T_{\mathcal{M}_{2}}\right)\psi\right)(\bm n), 
\end{align*}
which implies \eqref{tm12}. 

	If in addition $s>q+\frac{d}{2}$ for $\forall \ \psi\in\ell_{q}(\Z^d)$, we get
\begin{align*}
	\|T_{\mathcal{M}}(z)\psi\|^{2}_{q}&=\sum_{\bm n\in\Z^d}\left|\sum_{\bm l\in\Z^{d}}\mathcal{M}(z-\bm n\cdot\bm \omega,\bm l-\bm n)\psi(\bm l)\right|^{2}\langle\bm n\rangle^{2q}\\
	&\le\sum_{\bm n\in\Z^d}\left(\sum_{\bm l\in\Z^d}|\mathcal{M}(z-\bm n\cdot\bm \omega,\bm l-\bm n)||\psi(\bm l)|\right)^{2}\langle\bm n\rangle^{2q}.
\end{align*}
Recalling \eqref{norm}, we obtain
\begin{align*}
	|\mathcal{M}(z-\bm n\cdot\bm\omega,\bm l-\bm n)|
	&\le\langle\bm l-\bm n\rangle^{-s}\|\mathcal{M}\|_{R,s}
\end{align*}
By Cauchy-Schwarz inequality, we have
\begin{align*}
	\left(\sum_{\bm l\in\Z^d}|\mathcal{M}(z-\bm n\cdot\bm \omega,\bm l-\bm n)||\psi(\bm l)|\right)^{2} &\le\left(\sum_{\bm l\in\Z^d}|\mathcal{M}(z-\bm n\cdot\bm \omega,\bm l-\bm n)|\langle\bm l-\bm n\rangle^{s}\right)\\
	&\ \ \times\left(\sum_{\bm l\in\Z^d}|\mathcal{M}(z-\bm n\cdot\bm \omega,\bm l-\bm n)|\langle\bm l-\bm n\rangle^{-s}|\psi(\bm l)|^{2}\right)\\
	&\le\|\mathcal{M}\|_{R,s}^{2}\left(\sum_{\bm l\in\Z^d}\langle\bm l-\bm n\rangle^{-2s}|\psi(\bm l)|^{2}\right),
\end{align*}
which implies
\begin{align*}
	\|T_{\mathcal{M}}(z)\psi\|^{2}_{q}\le\|\mathcal{M}\|_{R,s}^{2}\sum_{\bm n\in\Z^d}\left(\sum_{\bm l\in\Z^d}\langle\bm l-\bm n\rangle^{-2s}|\psi(\bm l)|^{2}\langle\bm n\rangle^{2q}\right).
\end{align*}
According to \eqref{xy}, we have
\begin{align*}
&\ \ \ \ \sum_{\bm n\in\Z^d}\left(\sum_{\bm l\in\Z^d}\langle\bm l-\bm n\rangle^{-2s}|\psi(\bm l)|^{2}\langle\bm n\rangle^{2q}\right)\\
&\le K(2q)\sum_{\bm n\in\Z^d}\left(\sum_{\bm l\in\Z^d}\langle\bm l-\bm n\rangle^{-2s}|\psi(\bm l)|^{2}\left(\langle\bm l\rangle^{2q}+\langle\bm l-\bm n\rangle^{2q}\right)\right)\\
&=K(2q)\left(\sum_{\bm n\in\Z^d}\langle\bm n\rangle^{-2s}\sum_{\bm l\in\Z^d}|\psi(\bm l)|^{2}\langle\bm l\rangle^{2q}+\sum_{\bm n\in\Z^d}\langle\bm n\rangle^{-(2s-2q)}\sum_{\bm l\in\Z^d}|\psi(\bm l)|^{2}\right)\\
&=K(2q)(Y^2(s)\|\psi\|_{q}^{2}+Y^2(s-q)\|\psi\|_{0}^{2})\\
&\le K(2q)(Y^2(s)+Y^2(s-q))\|\psi\|_{q}^{2}.
\end{align*}
Therefore,
\begin{align*}
	\|T_{\mathcal{M}}(z)\psi\|^{2}_{q}\le\|\mathcal{M}\|_{R,s}^{2}K(2q)(Y^2(s)+Y^2(s-q))\|\psi\|_{q}^{2},
\end{align*}
that is 
\begin{align*}
	\|T_{\mathcal{M}}(z)\|_q\le X(s,q)\|\mathcal{M}\|_{R,s}.
\end{align*}
\end{proof}

We can now prove Theorem \ref{pl}.

\begin{proof}[Proof of Theorem \ref{pl}]
	We apply Theorem \ref{mthm} with
	\begin{align}\label{as}
		\alpha=s-3\delta.
	\end{align}
Since \eqref{as}, we have
\begin{align*}
	\mathcal{M}&\in\mathscr{U}_{R,s}^{\bm\omega}=\mathscr{U}_{R,\alpha+3\delta}^{\bm\omega},\\
	\alpha&=s-3\delta>\alpha_{0}+\tau+4\delta+\frac{d}{2}>\alpha_{0}+\tau+4\delta.
\end{align*}
Hence applying Theorem \ref{mthm} implies that if $ \|\mathcal{M} \|_{R,s}\ll1$, there are  $\mathcal{U}\in\mathscr{U}_{\frac{R}{2},s-\tau-7\delta}^{\bm\omega}$ and $\hat{V}\in\mathscr{P}_{\frac{R}{2}}$ so that
\begin{align*}
	 \mathcal{U}(V+\mathcal{M})\mathcal{U}^{-1}&=\hat{V},\\
	\|\mathcal{U}^{\pm1}-\bm 1\|_{\frac{R}{2},s-\tau-7\delta}&\le K_{1}\|\mathcal{M}\|_{R,s}^{\frac{3\delta}{s-\alpha_{0}-3\delta}}.
\end{align*}
Letting $U_{z}=T_{\mathcal{U}}(z)$,  according to \eqref{bd} and $s-\tau-7\delta>\frac{d}{2}$, we obtain
\begin{align*}
	\|U_{z}^{\pm1}\|_{0}&\le 1+\|U_{z}^{\pm1}-I\|_{0}\le1+X(s-\tau-7\delta,0) \|\mathcal{U}^{\pm1}-\bm1\|_{\frac{R}{2},s-\tau-7\delta}\\
	&\le1+X(s-\tau-7\delta,0)K_{1}\|\mathcal{M}\|_{R,s}^{\frac{3\delta}{s-\alpha_{0}-3\delta}},
\end{align*} 
where $X(s-\tau-7\delta,0)$ and $K_{1}$ are given by \eqref{csdq} and \eqref{K1} respectively. Therefore $U_{z}$ is a bounded invertible operator. By \eqref{tm12},  we get for $z\in\mathcal{Z}_{{R}/{2}}$ (cf. \eqref{ZR}), 
\begin{align*}
	U_{z}H_{z}U_{z}^{-1}=T_{\hat{V}}(z).
\end{align*}

Note that $T_{\hat{V}}(z)$ is a diagonal operator for $z\in\mathcal{Z}_{R/2}$. Then the standard basis $\{\delta_{\bm n}\}_{\bm n\in\Z^d}$ of $\ell^{2}(\Z^d)$ is a complete set of eigenfunctions of $T_{\hat{V}}(z)$ with eigenvalues $\{\hat{V}(z-\bm n\cdot\bm\omega)\}_{\bm n\in\Z^d}$ for  $z\in\mathcal{Z}_{R/2}$. 
Letting $\varphi_{\bm n}=U_{z}^{-1}\delta_{\bm n}$, then
\begin{align*}
	H_{z}\varphi_{\bm n}
	=\hat{V}(z-\bm n\cdot\bm\omega)\varphi_{\bm n},
\end{align*}
which combined with the boundedness of $U_{z}^{-1}$ will imply  $\{\varphi_{\bm n}\}_{\bm n\in\Z^d}$ is a \textit{complete set of eigenfunctions} of $H_{z}$ for $z\in\mathcal{Z}_{R/2}$.  In fact, we have
\begin{align*}
	\varphi_{\bm n}(\bm i)=\sum_{\bm l\in\Z^d}\mathcal{U}^{-1}(z-\bm i\cdot\bm\omega,\bm l-\bm i)\delta_{\bm n}(\bm l)=\mathcal{U}^{-1}(z-\bm i\cdot\bm\omega,\bm n-\bm i).
\end{align*}
Since $\mathcal{U}^{-1}\in\mathscr{U}_{\frac{R}{2},s-\tau-7\delta}^{\bm\omega}$, we obtain
\begin{align*}
	|\varphi_{\bm n}(\bm i)|&\le\|\mathcal{U}^{-1}\|_{\frac{R}{2},s-\tau-7\delta}\langle\bm n-\bm i\rangle^{-s+\tau+7\delta}\\
	&\le(1+\|\mathcal{U}^{-1}-\bm 1\|_{\frac{R}{2},s-\tau-7\delta})\langle\bm n-\bm i\rangle^{-s+\tau+7\delta}\\
	&\le(1+K_{1}\|\mathcal{M}\|_{R,s}^{\frac{3\delta}{s-\alpha_{0}-3\delta}})\langle\bm n-\bm i\rangle^{-s+\tau+7\delta}\\
	&\le2\langle\bm n-\bm i\rangle^{-s+\tau+7\delta}
\end{align*}
provided $K_{1}\|\mathcal{M}\|_{R,s}^{\frac{3\delta}{s-\alpha_{0}-3\delta}}\le1$. This shows particularly $\varphi_{\bm n}\in\ell^2(\Z^d).$
We then prove the \textit{completeness}.   Suppose for all $\bm n\in\Z^d$, $(\psi,\varphi_{\bm n})=0$. It suffices to show $\psi=0$.  For $\forall\ \bm n\in\Z^d$, we have
\begin{align*}
	0=(\psi,\varphi_{\bm n})=\psi(\bm n)+(\psi,(U_{z}^{-1}-I)\delta_{\bm n}),
\end{align*}
which combined with \eqref{unitary} implies
\begin{align*}
	\|\psi\|^{2}_{0}=\sum_{\bm n\in\Z^d}|\psi(\bm n)|^{2}&=\sum_{\bm n\in\Z^d}|(\psi,(U_{z}^{-1}-I)\delta_{\bm n})|^{2}\\
	&=\sum_{\bm n\in\Z^d}|((U_{z}^{-1}-I)^{*}\psi,\delta_{\bm n})|^{2}\\
	&=\sum_{\bm n\in\Z^d}\left|\left(T_{(\mathcal{U}^{-1}-\bm1)^{*}}(\bar{z})\psi\right)(\bm n)\right|^{2}\\
	&=\|T_{(\mathcal{U}^{-1}-\bm1)^{*}}(\bar{z})\psi\|^{2}_{0}.
\end{align*}
By using \eqref{bd} and $ \|\mathcal{M} \|_{R',s'}=\|\mathcal{M}^{*} \|_{R',s'}$, we obtain
\begin{align*}
	\|T_{(\mathcal{U}^{-1}-\bm1)^{*}}(\bar{z})\psi\|^{2}_{0}&\le X^{2}(s-\tau-7\delta,0)\|(\mathcal{U}^{-1}-\bm 1)^{*}\|_{\frac{R}{2},s-\tau-7\delta}^{2}\|\psi\|^{2}_{0}\\
	&=X^{2}(s-\tau-7\delta,0)\|(\mathcal{U}^{-1}-\bm 1)\|_{\frac{R}{2},s-\tau-7\delta}^{2}\|\psi\|^{2}_{0}\\
	&\le K_{1}^{2}X^{2}(s-\tau-7\delta,0)\|\mathcal{M}\|_{R,s}^{\frac{6\delta}{s-\alpha_{0}-3\delta}}\|\psi\|^{2}_{0}.
\end{align*}
Thus, we have
\begin{align*}
\|\psi\|^{2}_{0}\le K_{1}^{2}X^{2}(s-\tau-7\delta,0)\|\mathcal{M}\|_{R,s}^{\frac{6\delta}{s-\alpha_{0}-3\delta}}\|\psi\|^{2}_{0}\le \frac{1}{2}\|\psi\|^{2}_{0}
\end{align*}
provided $K_{1}^{2}X^{2}(s-\tau-7\delta,0)\|\mathcal{M}\|_{R,s}^{\frac{6\delta}{s-\alpha_{0}-3\delta}}\le\frac{1}{2}$. This shows $\psi=0$ and  thus the  \textit{completeness} of $\{\varphi_{\bm n}\}_{\bm n\in\Z^d}$.

Finally, if $\mathcal{M}$ and $V$ are self-adjoint,  then $\mathcal{U}$ is unitary and $\hat{V}^{*}=\hat{V}$. Therefore, by \eqref{tm12} and \eqref{unitary}, we have for $x\in\mathcal{Z}_0$
\begin{align*}
	(H_{x})^{*}&={T_{\mathcal{M}^{*}}(\bar{x})+T_{V^{*}}(\bar{x})=T_{\mathcal{M}}(x)+T_{V}(x)}=H_{x},\\
	(U_{x})^{*}U_{x}&=T_{\mathcal{U}^{*}}(\bar{x})T_{\mathcal{U}}(x)=T_{\mathcal{U}^{*}\mathcal{U}}(x)=I,\\
	U_{x}(U_{x})^{*}&=T_{\mathcal{U}}(x)T_{\mathcal{U}^{*}}(\bar{x})=T_{\mathcal{U}\mathcal{U}^{*}}(x)=I,
\end{align*}
which implies $H_{x}$ is a self-adjoint operator and $U_{x}$ is a unitary operator. Hence the spectrum of $H_{x}$ is equal to that of $T_{\hat{V}}(x)$. At this stage, we recall a result proven in \cite{BLS83}.
\begin{lem}\label{bls}
	If $g\in\mathscr{P}_{R}$, $g^{*}=g$, then there is a unique $x\in[0,1)$ such that the real poles of $g$ are $\{x+n:n\in\Z\}$. Moreover, $g$ is strictly monotone in each interval $(x+n,x+n+1)$ and $\mathbb{R}=\{g(x):x\in\mathbb{R}\}$.
\end{lem}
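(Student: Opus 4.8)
The plan is to read off from \eqref{gc} a pointwise lower bound for $|g'|$ along the real axis, and then to study $g|_{\R}$ as a piecewise monotone function whose pieces are the intervals between consecutive real poles.

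To begin, $g^{*}=g$ means $\overline{g(\bar z)}=g(z)$, so at a real point $z=t$ we get $\overline{g(t)}=g(t)$, i.e.\ $g$ is real-valued on $\R$ wherever it is finite. Let $P\subset\R$ be the set of real poles of $g$; by meromorphy it is discrete and by periodicity it is $1$-periodic, hence $P=\bigcup_{j=1}^{k}(x_j+\Z)$ for some $k\ge0$ and some $0\le x_1<\cdots<x_k<1$. For $t\in\R\setminus P$ and real $a$ with $|a|$ small enough that $t-a\notin P$ and $\|a\|_{\T}=|a|$, \eqref{gc} gives $|g(t)-g(t-a)|/|a|\ge|g|_{R}$; letting $a\to0$ yields $|g'(t)|\ge|g|_{R}>0$. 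Thus on each connected component $I$ of $\R\setminus P$ the function $g'$ is real-analytic and nowhere vanishing, so it has constant sign on $I$ and $g$ is strictly monotone there. This already gives the monotonicity assertion once $P$ is known to be a single $\Z$-orbit.

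Next I would prove $k\ge1$: if $P=\emptyset$ then $g'$ is continuous and nonvanishing on all of $\R$, hence of one fixed sign, so $|g(1)-g(0)|\ge|g|_{R}>0$, contradicting $g(1)=g(0)$. I would then show that $g$ carries each gap $(p,p')$ between two consecutive real poles \emph{bijectively onto} $\R$: on $(p,p')$ the function $g$ is strictly monotone and real-valued, so $\lim_{t\downarrow p}g(t)$ and $\lim_{t\uparrow p'}g(t)$ exist in $[-\infty,+\infty]$; since $p$ and $p'$ are poles, $|g|$ is unbounded near each endpoint, which for a monotone function forces these two limits to be $-\infty$ and $+\infty$ in the order prescribed by the direction of monotonicity. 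Hence $g\bigl((p,p')\bigr)=\R$, and in particular $\R=\{g(t):t\in\R\}$ follows already from any single gap.

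It remains to exclude $k\ge2$. If $k\ge2$, then $(x_1,x_2)$ and $(x_2,x_3)$, with $x_3:=x_1+1$ when $k=2$, are two distinct gaps, both contained in the length-$1$ interval $(x_1,x_1+1)$; by the surjectivity just proved there are $t_1\in(x_1,x_2)$ and $t_2\in(x_2,x_3)$ with $g(t_1)=g(t_2)$. Since $t_1\ne t_2$ and $|t_1-t_2|<1$, we have $\|t_1-t_2\|_{\T}>0$, so \eqref{gc} forces $|g(t_1)-g(t_2)|\ge|g|_{R}\,\|t_1-t_2\|_{\T}>0$, a contradiction. Therefore $k=1$; putting $x:=x_1$, the real poles of $g$ are exactly $x+\Z$, and uniqueness of $x$ in $[0,1)$ is immediate. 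Combined with the previous two paragraphs, this completes the proof. The one step that is not purely mechanical is showing that a pole of $g$ forces the one-sided limit of the monotone function $g$ there to equal $\pm\infty$: this uses both that a monotone function has one-sided limits in the extended real line and that $|g|$ is unbounded in every neighborhood of a pole. Everything else is a short consequence of the quantitative gap hypothesis \eqref{gc}.
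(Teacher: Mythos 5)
The paper does not prove this lemma itself; it simply states it as ``a result proven in \cite{BLS83},'' so there is no in-paper argument to compare against. Your blind proof is a correct, self-contained derivation. The chain of deductions is sound: $g^{*}=g$ gives real values of $g$ on $\R$; passing $a\to 0$ in the quantitative gap condition \eqref{gc} yields $|g'(t)|\ge|g|_R$ away from real poles, hence strict monotonicity on each component of $\R\setminus P$; the nonemptiness of $P$ follows because a nowhere-vanishing continuous $g'$ of constant sign on all of $\R$ would force $|g(1)-g(0)|\ge\int_0^1|g'|\ge|g|_R>0$, contradicting periodicity; monotonicity plus unboundedness at a pole forces each gap to map onto $\R$; and finally two distinct gaps inside a single period would produce $t_1\ne t_2$ with $\|t_1-t_2\|_\T>0$ yet $g(t_1)=g(t_2)$, violating \eqref{gc}. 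Two minor points worth making explicit if you wanted a fully airtight writeup: (i) the real poles are discrete because $g$ is meromorphic on the open strip $D_R$, so they cannot accumulate at an interior point of $D_R$ and in particular not on $\R$; (ii) the step $|g(1)-g(0)|\ge|g|_R$ should be justified by integrating the derivative bound over $[0,1]$ rather than by applying \eqref{gc} with $a=1$, since $\|1\|_\T=0$ makes the latter vacuous --- your parenthetical ``hence of one fixed sign'' indicates you already had this in mind.
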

Since $\{\bm n\cdot\bm\omega \mod 1:\  \bm n\in\Z^d\}$ is dense in $[0,1]$, we have
\begin{align*}
	\sigma(T_{\hat{V}}(x))=\mathbb{R},
 \end{align*}
Consequently, we obtain $\sigma(H_{x})=\mathbb{R}$ for $x\in\mathcal{Z}_0$.

This proves Theorem \ref{pl}.
\end{proof}

\section{Proof of Theorem \ref{dl}}
In this section we prove the uniform  dynamical localization by applying Theorem \ref{mthm}. 
\begin{proof}[Proof of Theorem \ref{dl}]
From Theorem  \ref{mthm},  there  exist  $\mathcal{U}\in\mathscr{U}_{\frac{R}{2},s-\tau-7\delta}^{\bm\omega}$  and $\hat{V}\in\mathscr{P}_{\frac{R}{2}}$ so that
\begin{align*}
	\mathcal{U}(V+\mathcal{M})\mathcal{U}^{-1}&=\hat{V},\\
	\|V-\hat{V}  \|_{\frac{R}{2},0}&\le K_{2}\|\mathcal{M}\|_{R,\alpha+3\delta},
\end{align*}
Moreover,  for $\forall\  x\in\mathcal{Z}_0$, 
\begin{align*}
	U_{x}H_{x}U_{x}^{-1}=T_{\hat{V}}(x),
\end{align*}
where $\hat{V}(x)$ is real valued. 
Thus  for $\forall \ \psi\in \ell^2_{q}(\Z^d)$ and $x\in\mathcal{Z}_0$, we have
\begin{align*}
	\|e^{-\sqrt{-1}tT_{\hat{V}}(x)}\psi\|_{q}^{2}&=\sum_{\bm n\in\Z^d}|e^{-\sqrt{-1}t\hat{V}(x-\bm n\cdot\bm\omega)}\psi(\bm n)|^{2}\langle\bm n\rangle^{2q}=\|\psi\|_q^2. 
\end{align*}
Recalling  \eqref{bd} and $s-\tau-7\delta>q+\frac{d}{2}$, $U_{x}$ is a bounded invertible operator on $\ell_{q}^2(\Z^d)$.  So for $\forall\  \psi\in \ell_{q}^2(\Z^d)$ and $x\in\mathcal{Z}_0$, we have 
\begin{align*}
	\|e^{-\sqrt{-1}tH_x}\psi\|_{q}^{2}&=\|U_{x}^{-1}e^{-\sqrt{-1}tT_{\hat{V}}(x)}U_{x}\psi\|_{q}^{2}\le\|U_{x}^{-1}\|_{q}^{2}\|e^{-\sqrt{-1}tT_{\hat{V}}(x)}U_{x}\psi\|_{q}^{2}\\
	&=\|U_{x}^{-1}\|_{q}^{2}\sum_{\bm n\in\Z^d}|e^{-\sqrt{-1}t\hat{V}(x-\bm n\cdot\bm\omega)}(U_{x}\psi)(\bm n)|^{2}\langle\bm n\rangle^{2q}\\
	&\le\|U_{x}^{-1}\|_{q}^{2}\|U_{x}\psi\|_{q}^{2}\le\|U_{x}^{-1}\|_{q}^{2}\|U_{x}\|_{q}^{2}\|\psi\|_{q}^{2}\\
	&\le X^4(s-\tau-7\delta,q)\|\mathcal{U}\|_{\frac{R}{2},s-\tau-7\delta}^{4}\|\psi\|^2_{q}<\infty.
\end{align*}

This proves Theorem \ref{dl}.

\end{proof}

\section{Proof of Theorem \ref{idsc}}
In this section we prove Lipschitz continuity of the IDS by using Theorem \ref{mthm}.
We first prove the invariance of IDS under unitary transformation that is also nearly identical. 

For any self-adjoint operator $H$ defined on $\ell^2(\Z^d)$ and $E\in\R$,  let
$$\kappa_{H}(E)=\lim_{L\to\infty}\frac{1}{(2L+1)^d}\text{tr}(\chi_{L}\mathbb{P}_{(-\infty,E]}(H)).$$
We have
\begin{lem}\label{idsu}
 	Let  $H$ be a self-adjoint operator on $\ell^2(\Z^d)$ and let  $UHU^{-1}=D$,  where $U$  is unitary and $D$ is diagonal. Assume  that the matrix elements  $u_{\bm i\bm j}=(\delta_{\bm i}, U\delta_{\bm j})$ of $U$  satisfy
 	\begin{align*}
 		|u_{\bm i \bm j}-\delta_{\bm i\bm j}|\le c_{1}\langle \bm i-\bm j\rangle^{-r},
 	\end{align*}
 where $r>d$ and $c_1>0$. Then 
 \begin{align*}
 	\kappa_{H}(E)=\kappa_{D}(E) 
 \end{align*}
if one of them exists.
 \end{lem}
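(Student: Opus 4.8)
The plan is to prove that $\mathrm{tr}\big(\chi_L\mathbb{P}_{(-\infty,E]}(D)\big)-\mathrm{tr}\big(\chi_L\mathbb{P}_{(-\infty,E]}(H)\big)=o(L^d)$; after dividing by $(2L+1)^d=\#\{\bm n:|\bm n|\le L\}$ this forces the two limits to coincide as soon as one of them exists.

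First I would set $P=\mathbb{P}_{(-\infty,E]}(H)$. Since $U$ is unitary and $D=UHU^{-1}$, functional calculus gives $\mathbb{P}_{(-\infty,E]}(D)=UPU^{-1}=UPU^{*}$. The projection $\chi_L$ has finite rank $(2L+1)^d$, so every product appearing below is finite rank and the trace is cyclic; hence
\begin{align*}
\mathrm{tr}\big(\chi_L\mathbb{P}_{(-\infty,E]}(D)\big)-\mathrm{tr}\big(\chi_L P\big)
&=\mathrm{tr}\big(PU^{*}\chi_L U\big)-\mathrm{tr}\big(PU^{*}U\chi_L\big)\\
&=\mathrm{tr}\big(PU^{*}[\chi_L,U]\big)=\mathrm{tr}\big(PU^{*}[\chi_L,U-I]\big).
\end{align*}
Because $\|PU^{*}\|\le\|P\|\,\|U^{*}\|\le 1$, the Schatten--H\"older inequality reduces everything to a trace-norm bound:
\[
\big|\mathrm{tr}(\chi_L\mathbb{P}_{(-\infty,E]}(D))-\mathrm{tr}(\chi_L P)\big|\le\big\|[\chi_L,V]\big\|_{\mathcal S_1},\qquad V:=U-I .
\]

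Next I would estimate $\|[\chi_L,V]\|_{\mathcal S_1}$. Its matrix entries are $V_{\bm i\bm j}\big(\mathbbm 1_{|\bm i|\le L}-\mathbbm 1_{|\bm j|\le L}\big)$, so they vanish unless exactly one of $\bm i,\bm j$ lies in the box $B_L=\{\bm n:|\bm n|\le L\}$; thus $[\chi_L,V]=\chi_L V(I-\chi_L)-(I-\chi_L)V\chi_L$ and it suffices to bound each finite-rank piece. Using the elementary inequality $\|K\|_{\mathcal S_1}\le\sum_{\bm i,\bm j}|K_{\bm i\bm j}|$ (triangle inequality over rank-one matrix units), the hypothesis $|V_{\bm i\bm j}|\le c_1\langle\bm i-\bm j\rangle^{-r}$, and the fact that $\bm i\in B_L$ forces every $\bm j\notin B_L$ to obey $|\bm i-\bm j|\ge L+1-|\bm i|$ (reverse triangle inequality for the sup norm), I get
\[
\|\chi_L V(I-\chi_L)\|_{\mathcal S_1}\le c_1\sum_{\bm i\in B_L}\ \sum_{|\bm k|\ge L+1-|\bm i|}\langle\bm k\rangle^{-r}\le C\,L^{d-1}\sum_{m=1}^{L+1}m^{d-r},
\]
where I used $\sum_{|\bm k|\ge m}\langle\bm k\rangle^{-r}\le C m^{d-r}$ (valid since $r>d$) together with the lattice-point count $\#\{\bm i\in B_L:\ L+1-|\bm i|=m\}\le C L^{d-1}$; the term $(I-\chi_L)V\chi_L$ is handled identically.

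Finally, the sum $\sum_{m=1}^{L+1}m^{d-r}$ is bounded when $r>d+1$, is $O(\log L)$ when $r=d+1$, and is $O(L^{d-r+1})$ when $d<r<d+1$, so in every case $\|[\chi_L,V]\|_{\mathcal S_1}=O(L^{d-1})+O(L^{2d-r})=o(L^d)$ because $2d-r<d$. Dividing by $(2L+1)^d$ shows that the two sequences defining $\kappa_H(E)$ and $\kappa_D(E)$ differ by a null sequence, so if either limit exists then so does the other and they are equal, which is the assertion. The main obstacle is this last trace-norm estimate, and within it the borderline range $d<r\le d+1$: there the off-diagonal decay only barely beats the surface area $L^{d-1}$ of $\partial B_L$, so a soft argument will not suffice and one is forced to count lattice points in a thickened boundary shell weighted by $\langle\cdot\rangle^{-r}$. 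The preceding steps — the cyclicity rearrangement and the Schatten--H\"older bound — are routine precisely because $\chi_L$ keeps every operator of finite rank.
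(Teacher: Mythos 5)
Your proof is correct and follows essentially the same route as the paper: reduce the difference of traces to the trace of $[\chi_L,U]$ (times a bounded factor) using cyclicity, observe that $[\chi_L,U]$ is supported off the diagonal blocks so its nonzero entries pair an index inside the box with one outside, and then use the off-diagonal decay of $U-I$ to show the resulting double sum is $o(L^d)$. The only real differences are presentational. Where the paper bounds the matrix elements of $U^{-1}\mathbb{P}_{(-\infty,E]}(D)$ pointwise by the uniform constant $c_2$ and then sums, you invoke the Schatten--H\"older inequality $|\mathrm{tr}(AK)|\le\|A\|\,\|K\|_{\mathcal S_1}$ together with $\|K\|_{\mathcal S_1}\le\sum_{\bm i,\bm j}|K_{\bm i\bm j}|$; both reduce the problem to the same quantity $\sum_{|\bm i|\le L<|\bm j|}|u_{\bm i\bm j}|$. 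And where the paper splits the inner box into a bulk $|\bm i|<L-\sqrt L$ and a thin shell $L-\sqrt L\le|\bm i|\le L$ to extract extra decay of $O(L^{-\eta/2})$ and the extra factor $O(L^{-1/2})$ from the shell count respectively, you instead parametrize by $m=L+1-|\bm i|$ and estimate $\sum_{\bm i\in B_L}(L+1-|\bm i|)^{d-r}\le C L^{d-1}\sum_{m\le L+1}m^{d-r}$, handling the three sub-cases $r>d+1$, $r=d+1$, $d<r<d+1$ directly; this is slightly cleaner and gives the sharper bound $O(L^{\max(d-1,2d-r)}\log L)$. In all cases one gets $o(L^d)$ because $r>d$ implies $2d-r<d$, which is all that is needed. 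One cosmetic remark: your final line lumps the three sub-cases into ``$O(L^{d-1})+O(L^{2d-r})$'', which omits the logarithm at $r=d+1$; this does not affect the conclusion, but worth flagging.
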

\begin{proof}
	By $UHU^{-1}=D$, we have
	\begin{align*}
		\text{tr}(\chi_{L}\mathbb{P}_{(-\infty,E]}(H))&=\text{tr}(\chi_{L}U^{-1}\mathbb{P}_{(-\infty,E]}(D)U)\\
		&=\text{tr}(U\chi_{L}U^{-1}\mathbb{P}_{(-\infty,E]}(D))\\
		&=\text{tr}(\chi_{L}\mathbb{P}_{(-\infty,E]}(D)+(U\chi_{L}-\chi_{L}U)U^{-1}\mathbb{P}_{(-\infty,E]}(D)).
	\end{align*}
	Direct computations yield 
	\begin{align*}
		(U\chi_{L}-\chi_{L}U)_{\bm i\bm j}=\left\{
		\begin{array}{cc}
			0 & \text{if}\ |\bm i|>L\ \text{and}\ |\bm j|>L,\\
			0 & \text{if}\ |\bm i|\le L\ \text{and}\ |\bm j|\le L,\\
			-u_{\bm i\bm j} & \text{if}\ |\bm i|\le L\ \text{and}\ |\bm j|>L,\\
			u_{\bm i\bm j} & \text{if}\ |\bm i|>L\ \text{and}\ |\bm j|\le L.
		\end{array}
		\right.
	\end{align*}
	Denote  $(U^{-1}\mathbb{P}_{(-\infty,E]}(D))_{\bm i\bm j}=c_{\bm i\bm j}$.  Then $\sup_{\bm i,\bm j\in\Z^d}|c_{\bm i\bm  j}|\leq c_2$ for some $c_2>0$.  Note also that 
	\begin{align*}
		|((U\chi_{L}-\chi_{L}U)U^{-1}\mathbb{P}_{(-\infty,E]}(D))_{\bm i\bm j}|\le \left\{
		\begin{array}{cc}
			\sum\limits_{|\bm k|>L}|u_{\bm i\bm k}c_{\bm k\bm j}| & \text{if}\ |\bm i|\le L,\\
			\sum\limits_{|\bm k|\le L}|u_{\bm i\bm k}c_{\bm k\bm j}| & \text{if}\ |\bm i|> L.
		\end{array}
		\right.
	\end{align*}
	As a result, the decay property of  $u_{\bm i\bm j}$ allows us to obtain 
	\begin{align*}
		|\text{tr}((U\chi_{L}-\chi_{L}U)U^{-1}\mathbb{P}_{(-\infty,E]}(D))|&\le c_2\sum_{|\bm i|\le L}\sum_{|\bm k|>L}|u_{\bm i\bm k}| +c_2\sum_{|\bm i|> L}\sum_{|\bm k|\le L}|u_{\bm i\bm k}|\\
		&\le2c_{1}c_2\sum_{|\bm i|\le L}\sum_{|\bm k|>L}\langle\bm i-\bm k\rangle^{-r}.
	\end{align*}
	To control the above sum, we have the following two cases.
	
	\textbf{Case 1.} $|\bm i|< L-\sqrt{L}$. In this case  $\langle \bm i-\bm k\rangle\ge \sqrt{L}+1$ for $|\bm k|>L$.  
	Note that for any $\bm i\in\Z^d$ and $t\in \N$,
	\begin{align*}
		\#\{\bm k\in\Z^d:\ \langle\bm i-\bm k\rangle=t,|\bm k|>L\}
		\le(2t+1)^{d}-(2t-1)^{d}\le 2d(3t)^{d-1}.
	\end{align*}
	Hence, for a fixed $\bm i\in\Z^d$ with $|\bm i|<L-\sqrt{L}$, we get
	\begin{align*}
		\sum_{|\bm k|>L}\langle\bm i-\bm k\rangle^{-r}&={\sum_{t\ge\sqrt{L}+1}}\sum_{\langle\bm i-\bm k\rangle=t,|\bm k|>L}\langle\bm i-\bm k\rangle^{-r}\\
		&={\sum_{t\ge\sqrt{L}+1}}\frac{\#\{\bm k\in\Z^d:\ \langle\bm i-\bm k\rangle=t,|\bm k|>L\}}{t^{r}}\\
		&\le{\sum_{t\ge\sqrt{L}+1}}\frac{2d3^{d-1}}{t^{r-d+1}}.
	\end{align*} 
	Since $r>d$ and fixing   any  $\eta\in(0,r-d)$,  we have 
	\begin{align*}
		\sum_{|\bm i|\le L-\sqrt{L}}\sum_{|\bm k|>L}\langle\bm i-\bm k\rangle^{-r}&\le\#\{\bm i\in\Z^d:\ |\bm i|\le L\}{\sum_{t\ge\sqrt{L}+1}}\frac{2d3^{d-1}}{t^{t-d+1}}\\
		&\le\frac{(2L+1)^d}{\sqrt{L}^{\eta}}{\sum_{t\ge\sqrt{L}+1}}\frac{2d3^{d-1}}{t^{r-d-\eta+1}}\\
		&\le\frac{(2L+1)^d}{\sqrt{L}^\eta}\sum_{t=1}^{\infty}\frac{2d3^{d-1}}{t^{r-d-\eta+1}}=O(L^{d-\eta/2}).
	\end{align*} 
	
	\textbf{Case 2.} $L-\sqrt{L}\le|\bm i|\le L$.  In this case we have
	\begin{align*}
		\sum_{|\bm k|>L}\langle\bm i-\bm k\rangle^{-r}&=\sum_{t=1}^{\infty}\sum_{\langle\bm i-\bm k\rangle=t,|\bm k|>L}\langle\bm i-\bm k\rangle^{-r}\\
		&=\sum_{t=1}^{\infty}\frac{\#\{\bm k\in\Z^d:\ \langle\bm i-\bm k\rangle=t,|\bm k|>L\}}{t^{r}}\\
		&\le\sum_{t=1}^{\infty}\frac{2d3^{d-1}}{t^{r-d+1}}.
	\end{align*}
	Thus, we obtain 
	\begin{align*}
		\sum_{L-\sqrt{L}\le|\bm i|\le L}\sum_{|\bm k|>L}\langle\bm i-\bm k\rangle^{-r}&\le\#\{\bm i\in\Z^d:\ L-\sqrt{L}\le|\bm i|\le L\}\sum_{t=1}^{\infty}\frac{2d3^{d-1}}{t^{t-d+1}}=O(L^{d-1/2}).
	\end{align*}
	So combining \textbf{Case 1} and \textbf{Case 2} implies
	\begin{align*}
		\lim\limits_{L\rightarrow\infty}\frac{1}{(2L+1)^{d}}\text{tr}((U\chi_{L}-\chi_{L}U)U^{-1}\mathbb{P}_{(-\infty,E]}(D))=0,
	\end{align*}
	which concludes the proof of this  lemma. 

\end{proof}

 We are ready to prove Theorem \ref{idsc}. 
\begin{proof}[Proof of Theorem \ref{idsc}]
From Theorem \ref{mthm},  there exist  $\mathcal{U}\in\mathscr{U}_{\frac{R}{2},s-\tau-7\delta}^{\bm\omega}$ and $\hat{V}\in\mathscr{P}_{\frac{R}{2}}$ so that
	\begin{align*}
		\mathcal{U}(V+\mathcal{M})\mathcal{U}^{-1}&=\hat{V},\\
U_{x}H_{x}U_{x}^{-1}&=T_{\hat{V}}(x),
\end{align*}
where  $U_{x}$  is unitary  and $T_{\hat{V}}(x)$ is diagonal for $x\in\mathcal{Z}_0$. Recalling \eqref{K1}, we obtain
\begin{align*}
	|(U_{x})_{\bm i\bm j}-\delta_{\bm i\bm j}|&=|(\mathcal{U}-\bm1)(x-\bm i\cdot\bm\omega,\bm j-\bm i)|\\
	&\le\|\mathcal{U}-\bm 1\|_{\frac{R}{2},s-\tau-7\delta}\langle\bm i-\bm j\rangle^{-s+\tau+7\delta}\\
		&\le K_{1}\|\mathcal{M}\|_{R,s}^{\frac{3\delta}{s-\alpha_{0}-3\delta}}\langle\bm i-\bm j\rangle^{-(s-\tau-7\delta)},
\end{align*}
where $s-\tau-7\delta>d$. Denote by $\kappa_0(E)$ the IDS of $T_{\hat{V}}(x)$. From Lemma \ref{idsu} and $\bm\omega\in{\rm DC}_{\tau,\gamma}$, 
 it follows  that 
\begin{align}\label{kap0}
	\kappa(E)=\kappa_{0}(E)={\rm mes}\{\theta\in \T:\ \hat{V}(\theta)\le E\},
\end{align}
where ${\rm mes}(\cdot)$ denotes the Lebesgue measure.  
 Recalling Lemma \ref{bls} and  $\hat{V}^{*}=\hat{V}$,   we assume  $\hat{V}$ is non-decreasing  in $(0,1)$ without loss of generality.    Then  $\hat{V}$ has  an  inverse function defined on $\R$ which is denoted by $\hat{V}^{-1}$. 
 Since $\hat{V}\in\mathscr{P}_{\frac{R}{2}}$ and \eqref{vr2},  
we have
\begin{align*}
	\inf_{x\in(0,1)}\frac{d\hat{V}(x)}{dx}\ge|\hat{V}|_{\frac{R}{2}}\ge\frac{1}{2}|V|_{R}>0,
\end{align*}
which combined with \eqref{kap0} implies for $E_1<E_2$, 
\begin{align*}
	\left|\kappa(E_{2})-\kappa(E_{1})\right|&={\rm mes}\{\theta\in \T:\ E_{1}<\hat{V}(\theta)\le E_{2}\}\\
	&={\rm mes}\{\theta\in \T:\ \hat{V}^{-1}(E_{1})<\theta\le \hat{V}^{-1}(E_{2})\}\\
	&=\hat{V}^{-1}(E_{2})-\hat{V}^{-1}(E_{1})\\
	&\le\frac{2}{|V|_{R}}(E_{2}-E_{1}).
\end{align*}

This proves Theorem \ref{idsc}. 
\end{proof}

\section*{Acknowledgments}
  This work  is partially supported by 
   the NSF of China (No. 12271380).

\appendix{}

\section{}
\begin{proof}[Proof of Lemma \ref{ts}]
	Recalling \eqref{norm} and \eqref{xy}, we have for any $\mathcal{M}_{1}$ and $\mathcal{M}_{2}\in\mathscr{U}_{R,s}^{\bm\omega}$, 
	\begin{align*}
		\|\mathcal{M}_{1}\mathcal{M}_{2}\|_{R,s}\le&\underset{z\in{D}_{R}}{\sup}\sum_{\bm l\in\Z^d}\sum_{\bm n\in\Z^d}|\mathcal{M}_{1}(z,\bm l)||\mathcal{M}_{2}(z-\bm l\cdot\bm\omega,\bm n-\bm l)|\left(\langle\bm l\rangle+\langle\bm n-\bm l\rangle\right)^{s}\\
		\le& K(s)\underset{z\in{D}_{R}}{\sup}\sum_{\bm l\in\Z^d}\sum_{\bm n\in\Z^d}|\mathcal{M}_{1}(z,\bm l)||\mathcal{M}_{2}(z-\bm l\cdot\bm\omega,\bm n-\bm l)|\left(\langle\bm l\rangle^{s}+\langle\bm n-\bm l\rangle^{s}\right)\\
		\le&K(s)\left(\underset{z\in{D}_{R}}{\sup}\sum_{\bm l\in\Z^d}|\mathcal{M}_{1}(z,\bm l)|\langle\bm l\rangle^{s}\|\mathcal{M}_{2}\|_{R,0}+\underset{z\in{D}_{R}}{\sup}\sum_{\bm l\in\Z^d}|\mathcal{M}_{1}(z,\bm l)|\|\mathcal{M}_{2}\|_{R,s}\right)\\
		\le&K(s)\left(\|\mathcal{M}_{1}\|_{R,s}\|\mathcal{M}_{2}\|_{R,0}+\|\mathcal{M}_{1}\|_{R,0}\|\mathcal{M}_{2}\|_{R,s}\right),
	\end{align*}
	which implies  Lemma \ref{ts}. 
\end{proof}

\begin{proof}[Proof of Lemma \ref{ew}]
	
		Recalling \eqref{t1}, we have 
	\begin{align*}
	\|\mathcal{N}_{1}^{l_{1}}\cdots\mathcal{N}_{k}^{l_{k}}\|_{R',s}\le(K(s))^{l_{1}+\cdots+l_{k}-1}\sum_{m=1}^{k}l_{m}\left(\prod_{j\ne m}\|\mathcal{N}_{j}\|_{R',0}^{l_{j}}\right)\|\mathcal{N}_{m}\|_{R',0}^{l_{m}-1}\|\mathcal{N}_{m}\|_{R',s}.
	\end{align*}
	which implies
	\begin{align*}
	&\ \ \ \ \left\|e^{\mathcal{N}_{1}}\cdots e^{\mathcal{N}_{k}}-\bm1\right\|_{R',s}\\
	=&\left\|\sum_{l=1}^{\infty}\sum_{l_{1}+\cdots+l_{k}=l}\frac{\mathcal{N}_{1}^{l_{1}}\cdots\mathcal{N}_{k}^{l_{k}}}{l_{1}!\cdots l_{k}!}\right\|_{R',s}\\
	\le&\sum_{l=1}^{\infty}(K(s))^{l-1}\sum_{l_{1}+\cdots+l_{k}=l}\sum_{m=1}^{k}\frac{l_{m}\left(\prod_{j\ne m}\|\mathcal{N}_{j}\|_{R',0}^{l_{j}}\right)\|\mathcal{N}_{m}\|_{R',0}^{l_{m}-1}\|\mathcal{N}_{m}\|_{R',s}}{l_{1}!\cdots l_{k}!}\\
	=&\sum_{m=1}^{k}\sum_{l=1}^{\infty}\frac{(K(s))^{l-1}}{(l-1)!}\left(\sum_{j=1}^{k}\|\mathcal{N}_{j}\|_{R',0}\right)^{l-1}\|\mathcal{N}_{m}\|_{R',s}\\
	\leq&e^{K(s)\left(\sum\limits_{m=1}^{k}\|\mathcal{N}_{m}\|_{R',0}\right)}\left(\sum\limits_{m=1}^{k}\|\mathcal{N}_{m}\|_{R',s}\right).
	\end{align*}
\end{proof}
\bibliographystyle{alpha}

 \end{document}